\definecolor{darkblue}{rgb}{0, 0, 0.5}
\newtheorem{assu}{Assumption}
\newtheorem{stand}{Standing Assumption}
\crefname{assu}{Assumption}{Assumptions}
\crefname{stand}{Assumption}{Assumptions}
\crefname{prop}{Proposition}{Propositions}
\crefname{thm}{Theorem}{Theorems}
\crefname{lemm}{Lemma}{Lemmas}
\newtheorem{coro}{Corollary}
\newtheorem{lemm}{Lemma}
\newtheorem{thm}{Theorem}
\newtheorem{prop}{Proposition}
\newtheorem*{thm*}{Theorem}
\newcommand\eop{\hfill $\square$}
\renewcommand{\geqslant}{\geq}
\title{Scoring and Favoritism \\ in Optimal Procurement Design}
\author{
\setcounter{footnote}{1}
    Pasha Andreyanov\thanks{
	HSE University, Moscow.
Email: \href{pandreyanov@gmail.com}{pandreyanov@gmail.com}
	}
	\and Ilia Krasikov\thanks{
		HSE University, Moscow.
		Email: \href{krasikovis.main@gmail.com}{krasikovis.main@gmail.com}
	}
	\and Alex Suzdaltsev\thanks{
		HSE University, Saint Petersburg
		Email: \href{asuzdaltsev@gmail.com}{asuzdaltsev@gmail.com}
	}
}
\begin{document}

\maketitle

\begin{abstract}
We study buyer-optimal procurement mechanisms when quality is contractible. When some costs are borne by every participant of a procurement auction regardless of winning, the classic analysis should be amended. We show that an optimal \emph{symmetric} mechanism is a scoring auction with a score function that may be either flatter or steeper than classically. This depends on the relative degrees of information asymmetry over the all-pay and winner-pay costs.

However, the symmetry of the optimal mechanism is not granted due to the presence of all-pay costs. When ex-post efficiency is less important than the duplication of costs, favoritism becomes optimal. We show that, depending on the degree of convexity of costs, the solution takes one of two novel formats with a partially asymmetric treatment of firms, which we call a score floor and a score ceiling auction. Interestingly, these auctions feature side payments from or to the buyer, which has nothing to do with corruption.

\end{abstract}

\medskip

\noindent \textbf{JEL Classfication:} D44, D82

\medskip

\noindent \textbf{Keywords:} scoring auctions, procurement, mechanism design, favoritism

\newpage

\onehalfspacing

\newpage
\section{Introduction}

In recent decades, procurement agencies across the world have come under increasing pressure to improve performance and deliver projects faster. The reason is that the traditional approach, when the contract is awarded to the lowest price bidder, fails to capture the trade-off between costs and quality of procurement. At the same time, quality may represent a large portion of the buyer's utility. As a result, numerous alternative auction designs have emerged, see \cite{molenaar2007alternative} for an overview.

One such design is the scoring auction, where the contract is awarded to the firm with the best combination of price and quality. It was shown to be superior to the traditional approach theoretically (see \cite{che1993design,asker2008properties,asker2010procurement}) when quality is contractible, and the associated costs are winner-pay. A special case when quality is represented by the speed of delivery is known as ``A+B auctions'' in road construction (see \cite{lewis2011procurement}). On the other hand, when quality is not contractible, the average bid auction (see \cite{albano2006bid, decarolis2014awarding}, \cite{decarolis2018comparing}) and the low-ball lottery auction (see \cite{lopomo2022optimal}) have been proposed.

Our goal is to study optimal mechanisms when a firm's quality is endogenous and contractible, but some of the associated costs are \emph{all-pay}. That is, some quality-dependent costs are paid not by the winner only, but by all participants. The firm's experience and past performance, among others, have this property. This makes quality special in the sense that the firm has it chosen before the main (bidding) stage of the mechanism and fully commits to it. Given this, it is natural to condition allocation and payments directly on qualities (along with the standard ``messages'') in any mechanism. We put extra effort into defining such mechanisms formally  \cref{sec:setup}. Also, we assume that there is no communication among firms at the quality-choice stage. This implies that in any equilibrium, the quality of every firm can depend on its efficiency type only, but not on the efficiency types of competitors. This is consistent with an empirically relevant setting in which a firm competes for several similar contracts but against different sets of rivals but has the same quality in all of those competitions.

We begin with the analysis of optimal \emph{symmetric} mechanisms in \cref{sect:symmetric}. We show that, under mild conditions, an optimal symmetric mechanism is a scoring auction with a quasi-linear scoring rule, see \cref{optimalsym,symmimplement}. In the optimal mechanism, quality is distorted downwards to account for the informational rents, see
\cref{downward}, which was also the case in the standard model with winner-pay costs only \citep{che1993design}. However, it is not independent of the number of bidders, as in \cite{che1993design}, but is decreasing instead, see \cref{symmcompare}. The intuition is that, as a firm's market share shrinks, it can not afford to put the same level of investment upfront. The optimal scoring rule also depends on the number of bidders~--- which is also nonstandard~--- but the comparative statics is more complicated and depends on the relative elasticities of winner-pay and all-pay costs, see \cref{elmatters}.

In section~\cref{sec:suboptimal}, we proceed with two arguments showing that even with ex-ante symmetric firms, symmetric mechanisms can be sub-optimal. One such argument is simple: if there is little or no private information, sole-sourcing (i.e., procurement from a single supplier) is optimal, see \cref{smallalpharesult}. The second argument is subtler:  regardless of the amount of private information, the optimal symmetric mechanism may be dominated by another mechanism where some of the worst types are treated asymmetrically. This happens when the elasticity of the investment costs with respect to quality is high enough,  see \cref{symmetricbad2}.

The most intriguing (to our minds) part of the paper is the analysis of optimal mechanisms without the restriction to symmetry in \cref{sect:asym}. For two firms, we identify four different mechanisms that could be optimal.\footnote{For a narrow class of environments, these four mechanisms comprise a complete classification of optimal mechanisms, see \cref{fig:Ex1} in \cref{sec:example}.} The first two are the traditional scoring auction and sole-sourcing. The former maximizes ex-post efficiency, while the latter minimizes all-pay costs. The other two mechanisms, born from the tension between these forces, are novel and present a surprising combination of symmetric and asymmetric design, integrating aspects of both scoring and favoritism. The key factors determining the shape of the optimal mechanism are the informational asymmetry and the curvature of the marginal all-pay costs, see \cref{main result}.

Our first novel mechanism can be thought of as an optimal symmetric allocation with the type of the favored firm censored from above (right); thus, we dub it \textit{right-censored}. It can be implemented by a scoring auction where both firms face a \emph{score floor} (a reserve score) $\underline{S}$, but the favored bidder gets a monetary \textit{bonus} for exceeding $\underline{S}$, see \cref{implement-1}. The score floor and the bonus work in concert to ensure that the favored bidder's worst (rightmost) types participate in the mechanism, while the unfavored bidder's worst types do not.

Our second novel mechanism can be thought of as an optimal symmetric allocation with the type of the unfavored firm censored from below (left); thus, we dub it \textit{left-censored}. It can be implemented by another modification of a scoring auction. Instead of a score floor, one should impose a \emph{score ceiling} $\bar{S}$ so that the bids with a score higher than $\bar{S}$ do not count. As in the case of score floor, one should also supplement these rules with a side payment but  \emph{from} the favored bidder rather than to her, see \cref{implement-2} and, thus, we call this transfer a \emph{kickback}.

The proof of the auction-style implementation serves two purposes. First, it gives hope that these mechanisms could be used in practice. Second, it shows that all the incentive constraints are satisfied; see \cref{th1}, \cref{th2}, and \cref{th3}. This is far from obvious since firms can deviate in both quality and price and unlike in \cite{che1993design} where there are no all-pay costs, these two decisions cannot be decoupled. We explain the nuances of these double deviations in \cref{sec:symmimplement,sec:implementation}.

We provide a number of additional results in \cref{sec:additional}. What determines the optimal degree of symmetry? We show that when the importance of private information grows, the optimal mechanism becomes more symmetric, see \cref{compstat1}. The intuition is that selecting the ex-post efficient firm becomes more important than avoiding the duplication of investment costs. This also leads to a somewhat surprising conclusion that the efficient mechanism could be \emph{less} symmetric than the optimal one, see \cref{effvsopt}, as informational rents may amplify the role of private information.

Finally, we provide some insight into the optimality with more than two firms. In \cref{1orall}, we focus on a natural family of ``restricted entry'' mechanisms in which one allows only $k$ bidders to enter and then lets them play the optimal symmetric mechanism. One may think that choosing some intermediate $k$ (say, $n/2$) may be a good option since it balances the desire to preserve ex-post efficiency and avoid the duplication of investment costs. Under a mild non-parametric assumption, we show that, in fact, the optimal $k$ in this case is always either 1 or $n$. Thus, a partial entry restriction is never optimal; if a buyer wishes to create asymmetry by pure entry restriction, she should restrict entry all the way to monopoly.

\subsection{Related literature}\label{sec:literature}
Our paper contributes to the vast literature on scoring auctions, see \cite{che1993design,branco1997design,asker2008properties,asker2010procurement,nishimura2015optimal} for important theoretical contributions, and \cite{adani2018procuring,lewis2011procurement,decarolis2016past} for empirical. In these auctions, quality of works is used as part of the selection criterion but, with the exception of \cite{decarolis2016past}, the associated costs are winner-pay. \footnote{\cite{decarolis2016past} document long-lasting blackouts associated with traditional price-only procurement auctions for electricity works in Italy. As a result, a scoring auction was used with past performance playing the role of quality. Since past performance was measured in past contracts, all associated costs are effectively all-pay from the current auction's viewpoint.}

Notable contributions were made in \cite{manelli1995optimal,lopomo2022optimal} where quality is exogenous and not contractible, and in \citep{tan1992entry,piccione1996cost,arozamena2004} where quality  is chosen before learning one's type. 

Another important strand of the literature is where investment or entry decisions are made after learning one's type, see \cite{celik2009optimal,zhang2017auctions,gershkov2021theory}. Contrary to our paper, in their settings (i) an agent's action is \emph{not contractible} which precludes the use of scores which are a focus of the present paper, and (ii) an agent's action does not directly benefit the principal. 

Assumptions (i) and (ii) can be motivated, for instance, by the Spectrum auctions, as well as some procurement auctions where the buyer is not concerned with the quality of works and so the firm invests only into costs reduction. But, in many applications, especially in public procurement, quality is important for the buyer. In these settings, quality is traditionally used in the selection criterion, either as part of the scoring formula or as minimal quality requirements, and thus is observable.\footnote{There is a recent dynamic of moving away from the price-only auctions towards the most economically advantageous tenders, i.e., scoring auctions and alike, see  European Union Directive 2014/24/EU.}

A salient manifestation of the difference between our setting and that of \cite{celik2009optimal,zhang2017auctions,gershkov2021theory} is that in the mentioned papers, there is essentially only one mechanism in which the most efficient agent always gets the good/contract. Once this allocation rule is fixed, the agents' actions (quality) are hidden and thus uniquely defined. In contrast, in our setting, multiple action profiles are compatible with the same allocation rule, and the designer can choose (and enforce) the best ones. Despite these differences, we were able to leverage some of their technical results.\footnote{The symmetric results in our paper hold in greater generality than the asymmetric ones, which require a separable environment. This assumption can not be relaxed without losing the link with \cite{zhang2017auctions}. We discuss the fine details of this link in \cref{zhangzhang}.}

Finally, there exists a sizeable literature on corruption in scoring auctions, e.g. \cite{celentani2002corruption}, \cite{burguet2004competitive}, \cite{burguet2017procurement}, \cite{huang2019procurement}, \cite{huang2019empirical}. However, the favoritism in our paper is not due to corruption, but by design.

\subsection{Organization of the paper}
In \cref{sec:setup} we set the environment. In \cref{sect:symmetric} we derive the optimal symmetric mechanism and study its comparative statics. In \cref{sec:suboptimal} we show the suboptimality of symmetric mechanisms and derive the optimal asymmetric mechanism in \cref{sect:asym}. In \cref{sec:additional}, we consider various factors that determine asymmetry of the optimal mechanism, as well as the ex-ante exclusion of firms, and we conclude in \cref{sec:conclude}.

The technical proofs for \cref{sect:symmetric,sec:suboptimal,sect:asym,sec:additional} are contained in \cref{app:symmetric,app:suboptimal,app:asym,app:additional}.

\section{Setup}
\label{sec:setup}

Consider a single buyer (principal) who wishes to procure a contract for which there are $n$ potential firms (agents). The quality of the work to be procured is endogenous. Upon privately learning her cost parameter (type) $\theta_i \in [0,1]$, the firm chooses quality $q_i \geqslant 0$, which is perfectly observed by the buyer and is contractible.\footnote{$\theta = 1$ is the worst, most costly type.}  Following \cite{che1993design}, we assume that quality is one-dimensional. The contract can be allocated among the agents in shares $z_i\geqslant 0$. Also following \cite{che1993design}, we posit that the good must be procured in any case, that is, $\sum_{i=1}^n z_i=1$.\footnote{The buyer always procures the good, if her outside option is sufficiently low.}

Quality is costly to produce, and each supplier will incur per-unit production costs $C^{P}(q_i,\theta_i)$ if he is selected for the contract. The novel feature is that in addition to the production (winner-pay) costs, each supplier $i$ also incurs investment costs $C^{I}(q_i,\theta_i)$, which are sunk before the auction. Thus, the investment costs are \emph{all-pay} costs that are borne regardless of winning the contract. Types  $\theta_i$ are independently distributed with cdf $F(\theta_i)$ and strictly positive  density $f(\theta_i)$. 

Note that the cost functions are not firm-specific, so all firms are ex-ante symmetric. Due to the all-pay nature of $C^I(q_i,\theta_i)$, a firm's (agent's) payoff is given by 
\begin{equation*}
\Pi(z_i,t_i,q_i,\theta_i)=t_i-C^P(q_i,\theta_i)z_i-C^I(q_i,\theta_i),
\end{equation*}
where $t_i$ is the monetary transfer to firm $i$.

The buyer's (principal's) payoff is given by $$U(\mathbf{z},\mathbf{t},\mathbf{q})=\sum_{i=1}^n (V(q_i)z_i-t_i)$$ for some function $V(q_i)$ {representing her preference towards the quality of works}.\footnote{We will use boldface to denote a vector indexed by $i = 1, \ldots, n$, throughout the paper.}

We impose the following standard assumptions on the primitives of the model.
\begin{stand}\label{cost_properties} \noindent
\begin{enumerate}
\item $C^P_{q}>0$, $C^P_{\theta}>0$, $C^I_{q}>0$, $C^I_{\theta}>0$,
\item $C^P_{q\theta}>0$, $C^I_{q\theta}>0$,
\item $C^I(0,\theta)=0$, $C^I_q(0,\theta)=0$,
\item $V'(q) \geqslant 0$, $V''(q) \leq 0$.
\end{enumerate}
\end{stand}

Because quality is observable and contractible, we ought to include $q_i$ directly as determinants of allocation and transfers. 
Thus, our definition of a mechanism is somewhat non-standard. A \textbf{mechanism} $\mathcal{M}$ is a  collection of message sets $M_i$ and functions $z_i(\mathbf{m},\mathbf{q})$,  $t_i(\mathbf{m},\mathbf{q})$, for all  $i=1,\ldots,n$. 

We consider the following timing of the game:
\begin{center}
    \begin{tikzpicture}
    \label{timeline}
        \draw[->, line width=.5pt, >=stealth] (-8,0) -- (4,0);
    \foreach \x in {-7,-5,-3, -1,1,3}
    \draw (\x cm,3pt) -- (\x cm,-3pt);
    
    \draw (-5.5,0) node[above=3pt,anchor=south] {\begin{turn}{30}
            buyer announces $\mathcal{M}$
        \end{turn}};
    \draw (-3.3,0) node[above=3pt,anchor=south] {\begin{turn}{30} 
        firm learns her type $\theta_i$
        \end{turn}};
    \draw (-1,0) node[above=3pt,anchor=south] {\begin{turn}{30}  firm chooses her quality $q_i$ \end{turn}};
    \draw (1.2,0) node[above=3pt,anchor=south] {\begin{turn}{30} firm submits her message $m_i$ \end{turn}};
    \draw (3.1,0) node[above=3pt,anchor=south] {\begin{turn}{30} buyer assigns outcomes $z_i, t_i$\end{turn}};
    \draw (4,0) node[above=3pt,anchor=south] {\begin{turn}{30} payoffs realize \end{turn}};

    \draw (-7,0) node[below=2pt] {$ 1 $} ;
    \draw (-5,0) node[below=2pt] {$ 2 $};
    \draw (-3,0) node[below=2pt] {$ 3 $} ;
    \draw (-1,0) node[below=2pt] {$ 4 $};
    \draw (1,0) node[below=2pt] {$ 5 $};
    \draw (3,0) node[below=2pt] {$ 6 $};
    \end{tikzpicture}
\end{center}

Importantly, given this timing, quality $q_i$ may depend only on \emph{own} type $\theta_i$ while allocation $z_i$ and transfers $t_i$ may depend on the types of \emph{all} firms through the messages sent. That is, we assume that the decisions about investment in quality are \emph{independent}~--- a bidder cannot condition her quality on her competitor's types. This is a common assumption in the literature on auctions with endogenous valuations (\cite{gershkov2021theory}, \cite{celik2009optimal}).  

We posit that the firms will play a (principal-suggested) Bayes-Nash equilibrium (BNE) of any given mechanism. We will use the following shorthands: $$\sigma_i(\theta_i) := (m_i(\theta_i),q_i(\theta_i)), \quad \sigma_{-i}(\theta_{-i}) := (m_{-i}(\theta_{-i}),q_{-i}(\theta_{-i})), \quad \bm\sigma(\bm\theta) := (\sigma_i(\theta_i),\sigma_{-i}(\theta_{-i})).$$ %
The buyer's problem can be written as
\begin{align}
 \hypertarget{P}{\mbox{(P) }}& \max\limits_{\mathcal{M},\bm\sigma} \mathbb{E}_{\bm\theta}U(\bm z(\bm\sigma(\bm\theta)),\bm t(\bm\sigma(\bm\theta)),q_1(\theta_1), \ldots, q_n(\theta_n)) \nonumber \\
\mbox{s.t. } & \mathbb{E}_{\theta_{-i}}\Pi_i(z_i(\bm\sigma(\bm\theta)),t_i(\bm\sigma(\bm\theta)),q_i(\theta_i),\theta_i)\geq \nonumber \\
& \mathbb{E}_{\theta_{-i}}\Pi_i(z_i(m_i',q''_i,\sigma_{-i}(\theta_{-i})),t_i(m_i',q''_i,\sigma_{-i}(\theta_{-i})),q''_i,\theta_i),\label{IC} \\
& \mathbb{E}_{\theta_{-i}}\Pi_i(z_i(\bm\sigma(\bm\theta)),t_i(\bm\sigma(\bm\theta)),q_i(\theta_i),\theta_i)\geq 0, \label{PC}\\
& z_i(\bm\sigma(\bm\theta))\geqslant 0, \ \sum_{i=1}^n z_i(\bm\sigma(\bm\theta))=1
\end{align}
for all $i$, $(m'_i, q''_i) \in M_i \times \mathbb{R}_{+}$, and $\theta_i, \bm\theta$ in the support.
We stress that a firm can make a \emph{double deviation} $(m_i',q''_i)$, which is an important part of our analysis below.

\subsection{Problem relaxations}

 We shall proceed by making several relaxations of the incentive constraints. 
 
First, we focus on the case where by sending a message $m'_i$, a firm mimics a type $\theta'$ while in choosing its quality $q''_i$, the firm mimics a potentially different type $\theta''$. %
Second, we will discard all the IC constraints \eqref{IC} for $\theta_i'\neq \theta_i''$. That is, we will focus only on deviations whereby the firm mimics the same type in its message and quality decisions.
Finally, by using the Envelope Theorem, we will discard all ``distant'' IC constraints, leaving in place only the local ones. 

Define the equilibrium \textbf{outcome functions} induced by $\mathcal{M}$ and $\bm\sigma$ as $$\bm z(\bm \theta) := \bm z(\bm\sigma(\bm\theta)), \quad \bm t(\bm\theta) := \bm t(\bm\sigma(\bm\theta)).$$ Note that these functions are different from a direct, truthful mechanism, which entails a pair of mappings $z(\theta,q)$, $t(\theta,q)$.\footnote{The reason is that in such a direct mechanism for every (potentially untruthful) report $\theta'_i$ the buyer would \emph{impose} the quality $q_i(\theta'_i)$ on the firm $i$. However, he can not impose qualities, as they are chosen beforehand; the designer can only impose outcomes $z,t$ conditional on qualities observed, but not the qualities themselves} After making the first two relaxations (setting $\theta'_i=\theta''_i$), 
the buyer's problem becomes
\begin{align*}\nonumber
\hypertarget{P1}{\mbox{(P1) }} & \max\limits_{q_i(\theta_i),\bm z(\bm\theta),\bm t(\bm\theta)} \mathbb{E}U(\bm z(\bm\theta),\bm t(\bm\theta),q_1(\theta_1),\ldots,q_n(\theta_n))\\
\nonumber
\mbox{s.t. } & \mathbb{E}_{\theta_{-i}}\Pi_i(z_i(\bm\theta),t_i(\bm\theta),q_i(\theta_i),\theta_i)\geq \\
& \mathbb{E}_{\theta_{-i}}\Pi_i(z_i(\theta'_i,\theta_{-i}),t_i(\theta'_i,\theta_{-i}),q_i(\theta'_i),\theta_i),\\
& \mathbb{E}_{\theta_{-i}}\Pi_i(z_i(\bm\theta),t_i(\bm\theta),q_i(\theta_i),\theta_i)\geq 0,\\
& z_i(\bm\theta)\geqslant 0, \ \sum_{i=1}^n z_i(\bm\theta)=1, \ q_i(\theta_i) \geq 0
\end{align*}
for all $i$ and $\theta'_i, \bm \theta$ in the support.
Thus, we have replaced optimization over (indirect) mechanisms and equilibria with optimization over the outcome functions.

We employ the standard envelope argument that uses local IC constraints to express interim expected transfers $t_i$ as a function of allocation $z_i$ and quality $q_i$, with the firm's profit at the worst-off type normalized to zero.  We then use standard integration by parts to arrive at the following problem:
\begin{align}
\hypertarget{P2}{\mbox{(P2) }} & \max\limits_{q_i(\theta_i),\bm z(\bm\theta)} \mathbb{E} \sum_{i=1}^n \left(V(q_i)z_i - \tilde C^{P}(q_i, \theta_i)z_i - \tilde C^I(q_i, \theta_i)\right) \\
\mbox{s.t. } & z_i(\bm\theta)\geqslant 0, \ \sum_{i=1}^n z_i(\bm\theta)=1, \ q_i(\theta_i) \geq 0\label{zconstraints}
\end{align}
where $\tilde{C}^P(q_i, \theta_i)$ are the $i$'th firm's \textbf{virtual production costs} defined as
\begin{equation*}
\tilde C^P(q, \theta) := C^P(q, \theta) + C^{P}_{\theta}(q, \theta)\frac{F(\theta)}{f(\theta)}
\end{equation*}
and $\tilde{C}^I(q_i, \theta_i)$ are the $i$'th firm's \textbf{virtual investment costs} defined as
\begin{equation*}\tilde C^I(q, \theta) := C^I(q, \theta) + C^{I}_{\theta}(q, \theta)\frac{F(\theta)}{f(\theta)}.
\end{equation*}

Our aim is to solve the relaxed problem \hyperlink{P2}{(P2)}, getting a certain optimal outcome functions $\bm z(\bm\theta)$ and $q_i(\theta_i)$ and an upper bound $\overline{U}$ on the buyer's utility from any mechanism. After that we will show that, under fairly standard regularity conditions, $\overline{U}$ will be attained in an equilibrium of an actual mechanism implementing $\bm z(\bm\theta), \bm t(\bm\theta)$ and  $q_i(\theta_i)$, thus proving that this solution is indeed optimal in the full problem \hyperlink{P}{(P)}. 

The appropriate regularity conditions are formulated below.

\begin{stand}\label{regularity} \noindent%
\begin{enumerate}
\item $\tilde{C}^P_{\theta}>0$, $\tilde{C}^P_{qq}>0$, $\tilde{C}^P_{q\theta}\geq 0$, $\tilde{C}^I_{\theta}>0$, $\tilde{C}^I_{qq}>0$, $\tilde{C}^I_{q\theta}\geq 0$.
    \item $V'(q) > \tilde{C}^P_{q}(q,\theta)$ for $q = 0$, and $V'(q)<\tilde{C}^P_{q}(q,\theta)+\tilde{C}^I_{q}(q,\theta)$ for $q = \infty$ %
\end{enumerate}

\end{stand}

\setcounter{assu}{2}

{The latter Inada-style conditions are only needed for the interiority of the solutions.} By default, we will assume \cref{cost_properties,regularity} throughout the paper.

We will consider separately two cases: (i) the buyer restricts himself to using only symmetric mechanisms, leading to an upper bound of $\overline{U}_{sym}$; (ii) the buyer can possibly employ asymmetric mechanisms, leading to an upper bound of  $\overline{U}_{asy}$. One of the key messages of this paper is that frequently $\overline{U}_{sym}<\overline{U}_{asy}$ even though the firms are ex-ante symmetric.

\section{Optimal symmetric mechanisms}\label{sect:symmetric}
In this section, we characterize optimal symmetric\footnote{Symmetry here means that the quality functions and allocation functions are invariant under the permutations of indices.} mechanisms, similar in spirit to the optimal mechanism in \cite{che1993design}, but in the presence of investment (all-pay) costs. This is done in two steps: first, we find the optimal symmetric outcome functions, second, we show that a symmetric mechanism exists that implements them. See \cref{app:symmetric} for the technical lemmas and proofs.

\subsection{Optimal outcomes}
Define the \textbf{virtual production surplus} as \[x(q,\theta):=V(q)-\tilde{C}^P(q,\theta),\]
then the buyer's payoff can be written as
\begin{equation*}\label{V_b}
U=\mathbb{E}\sum_{i=1}^n \left(x(q_i, \theta_i) z_i(\theta_i)-\tilde{C}^I(q_i,\theta_i)\right).
\end{equation*}

With no investment costs, as in \cite{che1993design}, the optimal quality function would be the one maximizing $x(q,\theta)$ pointwise, whether or not the symmetry constraint is applied. The situation becomes more complicated when investment costs are involved. 

Firstly, for any quality functions $q_i(\theta_i)$, allocating the contract to the firm with the highest $x(q,\theta)$ is optimal, thus $\sum_{i=1}^n x(q_i, \theta_i) z_i(\theta_i) = \max_i x(q_i, \theta_i)$. Secondly, restricting ourselves to symmetric outcomes, we can argue that, under appropriate regularity conditions (\cref{regularity}, part 1), it is the firm with the lowest (efficient) type, which we denote as $\theta_{(1)}$. In other words, the \textbf{optimal symmetric allocation functions} which we denote as $z^*_{i,sym}(\theta)$ are equal to $\mathbb{I}[\theta_i=\theta_{(1)}]$.

{The last step is not obvious, since, generally, $\max_i x(q(\theta_i), \theta_i) \geqslant x(q(\min_i \theta_i), \min_i \theta_i)$, unless $x(q(\theta), \theta)$ is shown to be monotone in $\theta$ at the symmetry-constrained optimum. To argue this, we can replace any non-monotone $x(q(\theta), \theta)$ with its equimeasurable non-increasing rearrangement, yielding an increase in the payoff, see \cref{x-decreasing}.}

Consequently, we can reduce attention to the following payoff function
\[U_{sym}:=\mathbb{E}\left(x(q(\theta_{(1)}),\theta_{(1)})-n \tilde{C}^I(q(\theta_i),\theta_i)\right).\]

Define the \textbf{symmetric probability of winning} given type $\theta$ as
$$\textit{PW}_{sym}(\theta) := (1-F(\theta))^{n-1},$$
then the buyer's payoff is equal to
\begin{equation}\label{int}
	U_{sym}=n\int \left[x(q(\theta),\theta) \cdot \textit{PW}_{sym}(\theta)-\tilde{C}^I(q(\theta),\theta)\right]f(\theta)d\theta,
\end{equation}
due to the pdf of $\theta_{(1)}$ being equal to  $n(1-F(\theta))^{n-1}f(\theta)$. 

The \textbf{optimal symmetric quality function} solving \hyperlink{(P2)}{(P2)}, which we denote as $q^*_{sym}(\theta)$, can then be easily derived via first order conditions since it maximizes \eqref{int} pointwise. It remains to check that it is decreasing, which follows from \cref{regularity}, part 1.

We summarise the derivation in the proposition below.

\begin{prop}\label{optimalsym}
The quality function $q^*_{sym}(\theta_i)$ solving the relaxed problem (P2) under the symmetry constraint is determined by finding $q$ that solves the equation below
\begin{equation}\label{virtual_surplus}
V'(q) = \tilde{C}^P_q(q,\theta)) + \tilde{C}^I_q(q,\theta)/\textit{PW}_{sym}(\theta).   
\end{equation}
Moreover, $q^*_{sym}(\theta_i)$ is decreasing, and $z^*_{i,sym}(\bm \theta) = \mathbb{I}[\theta_i=\theta_{(1)}].$
\end{prop}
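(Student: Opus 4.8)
The plan is to peel \hyperlink{P2}{(P2)} apart in the order already flagged in the text: first the allocation, then the symmetric quality function, then its monotonicity. Fix any profile of quality functions $q_i(\cdot)$. For each realization $\bm\theta$ the choice of $\bm z$ affects only the term $\sum_i x(q_i(\theta_i),\theta_i)z_i$, which is linear in $\bm z$ over the simplex $\{z_i\geq 0,\ \sum_i z_i=1\}$; hence a maximizer places all weight on some $i$ attaining $\max_j x(q_j(\theta_j),\theta_j)$, so $\sum_i x(q_i,\theta_i)z_i=\max_i x(q_i,\theta_i)$ while $-\sum_i\tilde C^I(q_i,\theta_i)$ is untouched. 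Thus \hyperlink{P2}{(P2)} reduces to choosing $q_i(\cdot)$ to maximize $\mathbb{E}\big[\max_i x(q_i(\theta_i),\theta_i)\big]-\sum_i\mathbb{E}\big[\tilde C^I(q_i(\theta_i),\theta_i)\big]$, and imposing symmetry $q_i\equiv q$ turns this into $\max_q\big\{\mathbb{E}[\max_i x(q(\theta_i),\theta_i)]-n\,\mathbb{E}[\tilde C^I(q(\theta),\theta)]\big\}$.

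Next I would invoke \cref{x-decreasing} to restrict, without loss, to quality functions $q$ for which $\theta\mapsto x(q(\theta),\theta)$ is nonincreasing. For such $q$ one has $\max_i x(q(\theta_i),\theta_i)=x(q(\theta_{(1)}),\theta_{(1)})$, and since $\theta_{(1)}$ has density $n(1-F(\theta))^{n-1}f(\theta)$ the objective becomes the right-hand side of \eqref{int}. Since the resulting integral decouples across $\theta$, it is maximized by maximizing $g(q,\theta):=x(q,\theta)\,\textit{PW}_{sym}(\theta)-\tilde C^I(q,\theta)$ pointwise in $q$ for each $\theta$. By \cref{regularity}, part 1, and \cref{cost_properties}, part 4, $g(\cdot,\theta)$ is strictly concave ($x_{qq}=V''-\tilde C^P_{qq}<0$, $\textit{PW}_{sym}(\theta)>0$, $\tilde C^I_{qq}>0$), and its maximizer is interior and positive: $g_q(0,\theta)=\textit{PW}_{sym}(\theta)\big(V'(0)-\tilde C^P_q(0,\theta)\big)>0$, using $\tilde C^I_q(0,\theta)=0$ from \cref{cost_properties}, part 3, while $g_q(q,\theta)<0$ as $q\to\infty$ since $\textit{PW}_{sym}\leq 1$ and \cref{regularity}, part 2. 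The first-order condition $g_q(q,\theta)=0$ is precisely \eqref{virtual_surplus}, which defines $q^*_{sym}(\theta)$.

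It then remains to check monotonicity of $q^*_{sym}$ and to close the loop on the allocation. Writing $\Phi(q,\theta):=\tilde C^P_q(q,\theta)+\tilde C^I_q(q,\theta)/\textit{PW}_{sym}(\theta)-V'(q)$, we have $\Phi_q=\tilde C^P_{qq}+\tilde C^I_{qq}/\textit{PW}_{sym}-V''>0$, and, since $\textit{PW}_{sym}$ is decreasing so $1/\textit{PW}_{sym}$ is increasing and $\tilde C^I_q>0$, also $\Phi_\theta=\tilde C^P_{q\theta}+\tilde C^I_{q\theta}/\textit{PW}_{sym}+\tilde C^I_q\,\partial_\theta(1/\textit{PW}_{sym})>0$ (\cref{regularity}, part 1); hence $(q^*_{sym})'(\theta)=-\Phi_\theta/\Phi_q<0$. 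Along $q^*_{sym}$ one has $x_q=\tilde C^I_q/\textit{PW}_{sym}>0$ by \eqref{virtual_surplus}, $(q^*_{sym})'<0$, and $x_\theta=-\tilde C^P_\theta<0$, so $\frac{d}{d\theta}x(q^*_{sym}(\theta),\theta)<0$; thus $x(q^*_{sym}(\cdot),\cdot)$ is strictly decreasing, so $q^*_{sym}$ belongs to the restricted class of the previous step, where it attains the maximum, hence it solves \hyperlink{P2}{(P2)} under symmetry; moreover $\theta_{(1)}$ is a.s. the unique type attaining $\max_i x(q^*_{sym}(\theta_i),\theta_i)$, so $z^*_{i,sym}(\bm\theta)=\mathbb{I}[\theta_i=\theta_{(1)}]$.

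The main obstacle is the reduction in the second step, i.e.\ \cref{x-decreasing}. The equimeasurable nonincreasing rearrangement of $x(q(\cdot),\cdot)$ leaves $\mathbb{E}[\max_i x(q(\theta_i),\theta_i)]$ unchanged, because that quantity depends only on the $F$-distribution of the random variable $x(q(\theta),\theta)$; so the content of the lemma is that the rearrangement does not raise $\mathbb{E}[\tilde C^I(q(\theta),\theta)]$. This is a rearrangement inequality, and it holds because the minimal investment cost of delivering a given virtual production surplus $y$ at type $\theta$ is supermodular in $(y,\theta)$ — a property pinned down by the curvature and single-crossing conditions in \cref{cost_properties,regularity}. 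I would either invoke \cref{x-decreasing} for this or, if a self-contained argument is preferred, establish it via a pairwise swap between a ``misordered'' pair $\theta'<\theta''$, exchanging the surplus profiles delivered at these types and checking, using supermodularity, that the total investment bill weakly falls.
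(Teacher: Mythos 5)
Your proof is correct and follows essentially the same path as the paper's: allocate to the argmax of the virtual production surplus, invoke \cref{x-decreasing} (rearrangement plus supermodularity of the indirect investment cost) to justify the order-statistic reduction to \eqref{int}, read off the pointwise first-order condition \eqref{virtual_surplus}, and obtain monotonicity of $q^*_{sym}$ from \cref{regularity}. You add some useful detail the paper leaves implicit — strict concavity and interiority via the Inada conditions, and the closing check that $x(q^*_{sym}(\cdot),\cdot)$ is indeed decreasing so that $q^*_{sym}$ lies in the restricted class — but the argument is the same in substance.
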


Notice that the trade-off between the virtual production surplus and investment costs is steeper for the firms with higher $\theta_i$ because they win less frequently. In particular, the highest (least efficient) type $\theta_i = 1$ in our model produces the worst possible quality $q^*_{sym}(1) = 0$, due to the monotonicity of the investment costs in $q_i$, see \cref{cost_properties}.

Substituting $q^*_{sym}(\theta_i)$ into $U_{sym}$ yields an upper bound $\overline U_{sym}$ on the buyer's utility in the full problem \hyperlink{P}{(P)}. It remains to show that it can also be attained.

\subsection{Implementation}
\label{sec:symmimplement}

In this section, we will show that the optimal quality function $q^*_{sym}(\theta_i)$ and the associated allocation functions $z_{i,sym}^{*}(\bm\theta)$ are implemented in the equilibrium of a first-score auction, with a carefully picked scoring rule\footnote{Formally, a first-score auction is a mechanism such that, for every firm $i$, the message set is $M_i=\mathbb{R}_+$ with message being the price offered. Let $W(\bm m,\bm q)$ be set of auction winners given by $W(\bm m,\bm q)=\{i|s(q_i)-m_i=\max_j(s(q_j)-m_j)\}$. Firm's $i$ allocation is then given by $z_i(\bm m,\bm q)=\mathbb{I}(i\in W(\bm m,\bm q))/|W(\bm m,\bm q)|$. Firm's $i$ transfer is $t_i(\bm m,\bm q)=m_i z_i(\bm m,\bm q)$.}. In fact, we will show that this is true for all decreasing quality functions $q(\theta)$ and the same allocation functions. This will also imply that none of the constraints in problem \hyperlink{P}{(P)} are violated, and thus $\overline U_{sym}$ is indeed attained.

As is usual in the quasi-linear environments, we shall seek a \textbf{quasi-linear scoring rule} $S(q,p)=s(q)-p$, where $p$ is the price-bid. It is not hard to show that $s(q)$ must, due to every firm's first-order condition, satisfy
\begin{equation}\label{scoredef}
s'(q) =  C^P_q(q,\theta(q))+C^I_q(q,\theta(q))/\textit{PW}_{sym}(\theta(q)),
\end{equation}
where $\theta(q)$ is the inverse of the chosen quality function $q(\theta)$.

This leads to the following equilibrium score and price-bid strategies
\begin{align}
S(\theta) & =s(q(\theta))-p(\theta),\label{eqstrat1}\\     
p(\theta) & =C^P(q(\theta),\theta)+\frac{C^I(q(\theta),\theta)+\textit{IR}(\theta,1)}{\textit{PW}_{sym}(\theta)}.\label{eqstrat2} 
\end{align}
where $\textit{IR}(\theta, \theta')$ are the \textbf{informational rents} accumulated between types $\theta$ and $\theta'$%
\begin{gather}\label{IR1}
\textit{IR}(\theta, \theta') := \int_{\theta}^{\theta'}[C^P_{\theta}(q(u),u)\textit{PW}_{sym}(u)+C^I_{\theta}(q(u),u)]du.
\end{gather}

However, it is not clear that global IC constraints will hold as we must exclude all joint quality-price deviations.

We illustrate the problem with a stylized plot of the agent's  response curves $\hat{q}(S|\theta)$ and $\hat{S}(q|\theta)$, for a fixed type $\theta$ and the equilibrium score distribution of opponents. When there were no investment costs, as in \cite{che1993design}, quality was chosen independently from the score. Thus, it was clear that $\hat{S}(q|\theta)$ was crossing $\hat{q}(S|\theta)$ only once and also ``from below'', see \cref{fig:doubledev} (left), which characterizes the intersection point as a global maximum. To the contrary, in the presence of investment costs, $\hat{q}(S|\theta)$ is not constant, and so the number of intersections is not obvious, see \cref{fig:doubledev} (right). 

But even assuming a unique intersection and the monotonicity of both response curves, we still have to argue that $\hat{S}(q|\theta)$ crosses $\hat{q}(S|\theta)$ ``from below''. To make things worse, by deviating from the conjectured optimum in the score dimension only, it is impossible to distinguish the type of crossing. Thus, considering (double) deviations in score and quality are absolutely necessary for the proof of optimality.

\begin{figure}[t!]
\centering
\includegraphics[scale=0.9,width = .48\linewidth]{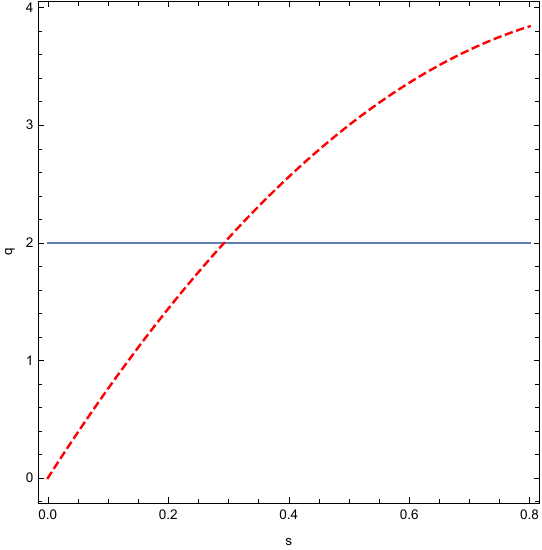}
\includegraphics[scale=0.9,width = .48\linewidth]{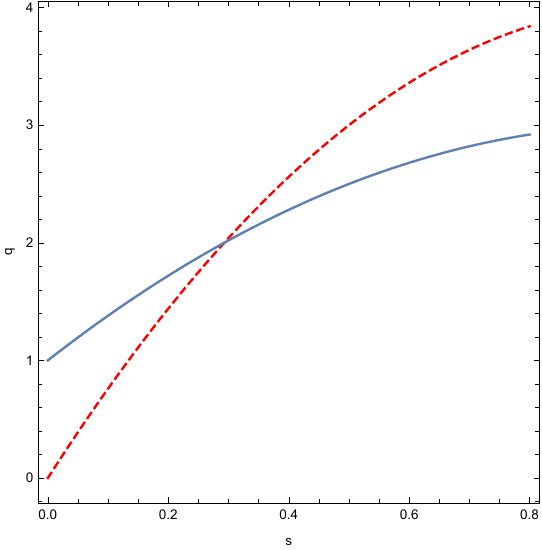}
\caption{Stylized plot of the optimal response curves $\hat{q}(S|\theta)$ (solid line) and $\hat{S}(q|\theta)$ (dashed line) without investment costs (left) and with investment costs (right).}
\label{fig:doubledev}
\end{figure}

Despite this complication, no condition beyond \cref{cost_properties} is needed to guarantee that the implementation works. The trick is to decompose the deviation towards an arbitrary point $(q,S)$ into a (sideways) deviation along the $q=\hat{q}(S|\theta)$ curve, and a (vertical) deviation in quality, see \cref{fig:doubledev}. It remains to verify that the $q=\hat{q}(S|\theta)$ curve passes through the conjectured score-quality pair and that the agent would choose it among all other points on that curve. In other words, we need to confirm that $q(\theta)$ is optimal along the $S = S(\theta)$ line and that $S(\theta)$ is optimal along the $q=\hat{q}(S|\theta)$ curve, for each $\theta$. These two conditions amount to $S(\theta), q(\theta)$ being a proper best response.\footnote{We stress that $\hat{q}(S|\theta)$ different from the curve $(q(\theta), S(\theta))$, which is not depicted on \cref{fig:doubledev}. Thus, ruling out deviations associated with lying about ones type, even if we chose to do so, would not make the analysis easier.}

{With the double deviations taken care of, we can establish the existence of an equilibrium in a first-score auction that implements a desired outcome.}

\begin{prop}\label{symmimplement}
For any decreasing quality function $q(\theta)$ with inverse $\theta(q)$, the first-score auction with quasi-linear score $S(q,p)=s(q)-p$, where $S(q)$ satisfies \eqref{scoredef}, has a BNE in which the strategy of every firm is $q(\theta)$ and $p(\theta)$, $S(\theta)$ defined by formulas \eqref{eqstrat1}, \eqref{eqstrat2}, the most efficient firm gets the contract, and the expected profit of every firm type is non-negative.
\end{prop}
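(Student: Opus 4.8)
\textbf{Proof strategy for \cref{symmimplement}.}

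The plan is to verify that the proposed strategy profile $(q(\theta), p(\theta))$ is a mutual best response in the first-score auction, and that the resulting allocation and rents match the claimed ones. First I would establish the auxiliary facts that are essentially bookkeeping: (a) since $q(\theta)$ is decreasing with inverse $\theta(q)$, the score function $s(q)$ defined by integrating \eqref{scoredef} is well-defined and increasing in $q$; (b) along the conjectured profile, firm $i$ of type $\theta$ submitting $(q(\theta), p(\theta))$ produces score $S(\theta) = s(q(\theta)) - p(\theta)$, and one checks by differentiating \eqref{eqstrat2} and using \eqref{scoredef} that $S(\theta)$ is decreasing, so that the highest-score winner is the lowest-type (most efficient) firm, giving $z_i = \mathbb{I}[\theta_i = \theta_{(1)}]$ as claimed; (c) the interim probability of winning for type $\theta$ is exactly $\textit{PW}_{sym}(\theta) = (1-F(\theta))^{n-1}$; and (d) plugging into the payoff and using the envelope/integration-by-parts identity recovers $\textit{IR}(\theta,1)$ from \eqref{IR1}, so that the worst type $\theta = 1$ earns zero and every other type earns $\textit{IR}(\theta,1) \geq 0$, which gives the non-negativity of profits once monotonicity of $C^P_\theta, C^I_\theta$ and positivity of $\textit{PW}_{sym}$ are invoked.

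The substantive step is ruling out \emph{double deviations}: a type-$\theta$ firm can pick any quality $q''$ and any price-bid $p''$, hence land at an arbitrary point $(q'', S'')$ in quality-score space, and we must show no such point beats $(q(\theta), S(\theta))$. Following the decomposition sketched before the statement, I would fix the equilibrium score distribution $G(\cdot)$ of the $n-1$ opponents (which is the distribution of $S(\theta_j)$ under the conjectured profile) and define the firm's best quality response $\hat q(S\mid\theta) := \arg\max_{q} \big[\big(s(q) - C^P(q,\theta)\big)\cdot\mathbb{I}[\text{win at score }S] - C^I(q,\theta)\big]$ holding the \emph{score} fixed — equivalently, maximizing $\big(s(q)-C^P(q,\theta)\big)G(S) - C^I(q,\theta)$ over $q$, whose first-order condition is $s'(q) = C^P_q(q,\theta) + C^I_q(q,\theta)/G(S)$. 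The key observation is that, by construction \eqref{scoredef}, the conjectured pair lies on this curve: at $S = S(\theta)$ we have $G(S(\theta)) = \textit{PW}_{sym}(\theta)$, so the first-order condition becomes $s'(q) = C^P_q(q,\theta) + C^I_q(q,\theta)/\textit{PW}_{sym}(\theta)$, which is precisely what $q = q(\theta)$ solves. Thus any target point $(q'', S'')$ is reached by first moving along $\hat q(\cdot\mid\theta)$ from $(q(\theta), S(\theta))$ to $(\hat q(S''\mid\theta), S'')$ and then moving vertically in quality from $\hat q(S''\mid\theta)$ to $q''$; since $\hat q(S''\mid\theta)$ is by definition the payoff-maximizing quality at score $S''$, the vertical leg only decreases the payoff, so it suffices to bound the payoff along the curve $\hat q(\cdot\mid\theta)$.

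It then remains to show that, restricted to the curve $q = \hat q(S\mid\theta)$, the score $S(\theta)$ is optimal — i.e.\ the ``crossing from below'' property. Substituting $\hat q(S\mid\theta)$ into the payoff gives a one-dimensional objective $\Phi(S\mid\theta) := \big(s(\hat q(S\mid\theta)) - C^P(\hat q(S\mid\theta),\theta)\big)G(S) - C^I(\hat q(S\mid\theta),\theta)$; I would differentiate in $S$, note that the envelope theorem kills the $\partial/\partial q$ terms (because $\hat q$ is a maximizer at each $S$), leaving $\Phi'(S\mid\theta) = \big(s(\hat q(S\mid\theta)) - C^P(\hat q(S\mid\theta),\theta)\big)G'(S)$, and then argue that the bracket is positive for $S < S(\theta)$ and negative for $S > S(\theta)$. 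This single-crossing in $S$ follows because the bracket, evaluated along the curve, is the ``markup'' $s(\hat q) - C^P(\hat q,\theta)$, which one checks is decreasing in $S$ (higher target score forces a higher quality via $\hat q$, and the gap between $s$ and $C^P$ shrinks by the first-order condition together with the convexity/monotonicity conditions in \cref{cost_properties}), and vanishes exactly at $S = S(\theta)$ by \eqref{eqstrat2}. I expect this monotonicity of the markup along $\hat q(\cdot\mid\theta)$ — equivalently, the sign analysis of $\Phi'$ — to be the main obstacle, since it is where the structure of the cost functions (rather than mere first-order conditions) actually gets used; everything else is the envelope-theorem bookkeeping of steps (a)–(d).
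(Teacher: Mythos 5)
The decomposition you propose — first find the quality best-response curve $\hat q(S\mid\theta)$, note the conjectured strategy lies on it, then reduce to a one-dimensional problem along that curve — is exactly the paper's sequential-optimization strategy (part (ii) / part (i) in their proof). Your bookkeeping steps (a)–(d) are fine and match the paper's use of Lemma~\ref{Sdecreasing}. The gap is in the crucial single-crossing argument, and it is not a small one.

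First, your objective $\Phi$ is missing a term: if a firm bids score $S$ with quality $q$, the price received upon winning is $p = s(q) - S$, so the conditional-on-winning profit is $s(q) - C^P(q,\theta) - S$, not $s(q)-C^P(q,\theta)$. Hence
\[
\Phi(S\mid\theta) = \bigl(s(\hat q(S\mid\theta)) - C^P(\hat q(S\mid\theta),\theta) - S\bigr)G(S) - C^I(\hat q(S\mid\theta),\theta),
\]
and the envelope derivative is
\[
\Phi'(S\mid\theta) = \bigl(s(\hat q(S\mid\theta)) - C^P(\hat q(S\mid\theta),\theta) - S\bigr)G'(S) - G(S),
\]
not $(s(\hat q)-C^P(\hat q,\theta))G'(S)$. (The $-S$ term does not affect the quality FOC, which is why your definition of $\hat q$ is still right, but it does affect $\Phi'$.) Second, and consequently, your zero-crossing identification is wrong: by \eqref{eqstrat2}, $s(q(\theta)) - C^P(q(\theta),\theta) - S(\theta) = \bigl(C^I(q(\theta),\theta) + \textit{IR}(\theta,1)\bigr)/\textit{PW}_{sym}(\theta) > 0$. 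The bracket does not vanish at $S=S(\theta)$, so the sign change of $\Phi'$ cannot come from the bracket alone; it must come from the interplay with the $-G(S)$ term.

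The paper handles this by reparameterizing $S = S^*(\tau)$, differentiating $\Pi_{\max}(\tau\mid\theta)$ in $\tau$, and using the explicit form of $S^*(\tau)$ (equation \eqref{Sprime}) to massage $\frac{\partial}{\partial\tau}[S^*(\tau)(1-F(\tau))^{n-1}]$ into $-(n-1)(1-F(\tau))^{n-2}f(\tau)\bigl(s(q(\tau))-C^P(q(\tau),\tau)\bigr)$. This makes $\frac{\partial}{\partial\tau}\Pi_{\max}(\tau\mid\theta)$ proportional to the \emph{difference} $s(q(\tau)) - s(\tilde q) + C^P(\tilde q,\theta) - C^P(q(\tau),\tau)$, where $\tilde q := q^*((1-F(\tau))^{n-1},\theta)$. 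Crucially, what must be shown is that this quantity has the sign of $\theta-\tau$, which is genuinely not a ``markup decreasing in $S$'' claim. The paper establishes it by writing the difference as a sum of three integrals (via the fundamental theorem of calculus along the path from $\theta$ to $\tau$), and controls each integral's sign using supermodularity $C^P_{q\theta}>0$, $C^I_{q\theta}>0$, the positivity $C^I_q>0$, $C^P_\theta>0$, and Lemma~\ref{qstarproperties} (monotonicity of $q^*(z,\theta)$ in $z$). You correctly flagged this as the spot where the cost structure matters, but the heuristic you offer for it does not hold, and the actual argument is both structurally and technically different from the one you sketched.
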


Finally, by applying \cref{symmimplement} to the optimal quality function $q^*_{sym}(\theta)$ derived in \cref{optimalsym}, we can claim or first main result~--- a full characterization of the optimal symmetric mechanism.

\begin{thm}\label{th1}
The first-score auction with the score $S(q,p)=S^*(q)-p$, where $S^*(q)$ satisfies \eqref{scoredef} with $\theta(q)$ being the inverse of $q^*_{sym}(\theta)$ solves the problem \hyperlink{(P)}{(P)} with the restriction to symmetric mechanisms, i.e., $\bar U_{sym}$ is attained, and thus the first-score auction is an optimal symmetric mechanism.
\end{thm}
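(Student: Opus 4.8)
The plan is to obtain \cref{th1} by gluing \cref{optimalsym} and \cref{symmimplement} together with a standard upper-bound/attainment argument, so essentially nothing new has to be proved. First I would record that the passage from \hyperlink{P}{(P)} to \hyperlink{P1}{(P1)} to \hyperlink{P2}{(P2)} only ever \emph{drops} constraints (the non-coincident deviations with $\theta_i'\neq\theta_i''$, the quality deviations with $q_i''\neq q_i(\theta_i')$, and all non-local IC constraints) and substitutes transfers out via the envelope theorem; the same is true under the symmetry restriction. Hence the value of \hyperlink{P2}{(P2)} under symmetry — which \cref{optimalsym} identifies as attained at the decreasing quality function $q^*_{sym}$ together with $z^*_{i,sym}(\bm\theta)=\mathbb{I}[\theta_i=\theta_{(1)}]$, and which is by definition $\overline U_{sym}$ — is an upper bound for the value of \hyperlink{P}{(P)} restricted to symmetric mechanisms. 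It therefore suffices to exhibit one symmetric mechanism, together with a BNE of it, whose buyer payoff equals $\overline U_{sym}$.

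Next I would apply \cref{symmimplement} to $q=q^*_{sym}$: since $q^*_{sym}$ is decreasing, the first-score auction with quasi-linear score $S^*(q)-p$, where $S^*$ satisfies \eqref{scoredef} along $\theta(q)=(q^*_{sym})^{-1}$, has a BNE in which every firm plays $q^*_{sym}(\theta)$ and the price bid \eqref{eqstrat2}, the most efficient firm gets the contract, and every type's expected profit is non-negative. This mechanism is symmetric (both the scoring rule and the highest-score allocation rule are permutation-invariant) and the equilibrium is symmetric; because $F$ has a strictly positive density, ties are a null event, so the realized allocation is a.s.\ $\mathbb{I}[\theta_i=\theta_{(1)}]=z^*_{i,sym}(\bm\theta)$. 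Being a genuine BNE of an actual mechanism, it satisfies \emph{all} the constraints of \hyperlink{P}{(P)} — in particular the full incentive constraints \eqref{IC} allowing arbitrary double deviations $(m_i',q_i'')$ (this is exactly the content of \cref{symmimplement} and the double-deviation discussion preceding it), the participation constraints \eqref{PC}, and feasibility of $z$ — so it is an admissible, symmetric point of \hyperlink{P}{(P)}.

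It then remains to identify its buyer payoff. Substituting \eqref{eqstrat2} into a firm's interim profit gives $\Pi(\theta)=\textit{IR}(\theta,1)$, so the worst-off type $\theta=1$ earns exactly zero, and the envelope representation of interim transfers used to reach \hyperlink{P2}{(P2)} holds verbatim for this equilibrium with the same zero normalization at the worst type. Running the integration-by-parts step of \hyperlink{P2}{(P2)} in reverse then shows that the buyer's realized payoff $\mathbb{E}\sum_i(V(q_i)z_i-t_i)$ equals $\mathbb{E}\sum_i\big(V(q_i)z_i-\tilde C^P(q_i,\theta_i)z_i-\tilde C^I(q_i,\theta_i)\big)$ evaluated at $q^*_{sym},z^*_{sym}$, i.e.\ $\overline U_{sym}$. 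Combining the two directions: the value of \hyperlink{P}{(P)} under symmetry is $\le\overline U_{sym}$ (relaxation bound) and $\ge\overline U_{sym}$ (attained by the first-score auction), so the bound is tight and the first-score auction is an optimal symmetric mechanism.

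The genuinely hard parts — ruling out joint price-and-quality deviations, monotonicity of $q^*_{sym}$, and the rearrangement argument that at the symmetric optimum the contract goes to $\theta_{(1)}$ — are already discharged in \cref{symmimplement}, \cref{optimalsym}, and \cref{x-decreasing}, so the only real work here is bookkeeping. I expect the single point that needs care to be the consistency of the worst-type rent normalization: one must check that the equilibrium of \cref{symmimplement} delivers $\Pi(1)=0$, as otherwise the \hyperlink{P2}{(P2)}-objective value would not literally equal the realized buyer payoff and the two inequalities above would not close. Once that is verified the theorem is immediate.
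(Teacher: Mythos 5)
Your proposal is correct and follows essentially the same route the paper (implicitly) takes: the paper offers no separate proof of \cref{th1} beyond the one-sentence observation that it follows by applying \cref{symmimplement} to the $q^*_{sym}$ of \cref{optimalsym}, and your write-up simply fills in the standard relaxation/attainment bookkeeping that this sentence presupposes. You correctly isolate the one point that deserves care — that the BNE of \cref{symmimplement} delivers $\Pi(1)=\textit{IR}(1,1)=0$, so the zero-rent normalization used in passing to \hyperlink{P2}{(P2)} is met and the buyer's realized payoff coincides with the \hyperlink{P2}{(P2)} objective evaluated at $(q^*_{sym},z^*_{sym})$, i.e., $\overline U_{sym}$.
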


It is important to note that since the agents can signal their one-dimensional type through either quality or price, there are alternative implementations of these outcomes. In particular, it is possible to write a contract where agents compete only in the quality dimension, effectively choosing from a menu of price-quality pairs. The sets of messages $M_i$ will be empty in such a mechanism. However, we believe that the specific implementation via a scoring auction would be of greater value for practitioners.\footnote{In many countries, buyers are legally bound to use the scoring auction as a procurement mechanism in environments where quality matters. One could also argue that letting the agents compete in both price and quality generates more information relative to the price-quality menu system and thus may improve price discovery.}

\subsection{Comparative statics}

Having characterized the optimal symmetric quality function to implement and the optimal score function, we are now able to investigate how the solution in our model differs from that of \cite{che1993design}, without investment costs.

To introduce the size of the investment costs into the model, we simply scale $C^I(q,\theta)$ by a parameter  $\beta \geqslant 0$. Note that $\beta=0$ corresponds to the model of \cite{che1993design}. As usual, we will denote the number of firms as $n$.

\subsubsection{Downward quality distortion}

One of the main findings in \cite{che1993design} was that because naive scoring rules fail to take the informational rents into account, it becomes optimal to systematically discriminate against quality by lowering the slope of the scoring rule. As a result,  quality is distorted downwards, but the information rents of the relatively efficient firms are also reduced. The same intuition applies here.

Subtracting \eqref{virtual_surplus} from \eqref{scoredef} and using the definitions of the virtual costs, one gets
\begin{equation}\label{score_gap}
s^{*\prime}(q)-V'(q)=-C^P_{q\theta}(q,\theta)\frac{F(\theta)}{f(\theta)}-\frac{C^I_{q\theta}(q,\theta)}{\textit{PW}_{sym}(\theta)}\frac{F(\theta)}{f(\theta)}\leqslant 0,
\end{equation}
where the inequality follows from \cref{regularity}.
\begin{prop}\label{downward}
For the optimal scoring rule, $s^{*\prime}(q)\leqslant V'(q)$. 
\end{prop}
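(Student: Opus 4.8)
The plan is to obtain the inequality directly from the two first-order characterizations already in hand. Equation \eqref{scoredef} characterizes the slope of the optimal score function via the firm's first-order condition in the implementing auction, $s^{*\prime}(q) = C^P_q(q,\theta(q)) + C^I_q(q,\theta(q))/\textit{PW}_{sym}(\theta(q))$, while equation \eqref{virtual_surplus} from \cref{optimalsym} characterizes the optimal quality itself via $V'(q) = \tilde C^P_q(q,\theta) + \tilde C^I_q(q,\theta)/\textit{PW}_{sym}(\theta)$, evaluated along the optimal quality path, i.e.\ with $\theta = \theta(q)$. So first I would note that both identities hold at the same $(q,\theta(q))$ pair, which makes the subtraction meaningful.

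Next I would expand the virtual costs using their definitions, $\tilde C^P_q = C^P_q + C^P_{q\theta}\,F/f$ and $\tilde C^I_q = C^I_q + C^I_{q\theta}\,F/f$ (obtained by differentiating the definitions of $\tilde C^P$ and $\tilde C^I$ in $q$). Substituting these into \eqref{virtual_surplus} and subtracting from \eqref{scoredef}, the ``non-virtual'' terms $C^P_q$ and $C^I_q/\textit{PW}_{sym}$ cancel exactly, leaving
\[
s^{*\prime}(q) - V'(q) = -\,C^P_{q\theta}(q,\theta)\frac{F(\theta)}{f(\theta)} \;-\; \frac{C^I_{q\theta}(q,\theta)}{\textit{PW}_{sym}(\theta)}\frac{F(\theta)}{f(\theta)},
\]
with $\theta = \theta(q)$; this is exactly the displayed equation \eqref{score_gap}. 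Then I would invoke the sign conditions: $F/f \ge 0$ since $F$ is a cdf and $f>0$; $\textit{PW}_{sym}(\theta) = (1-F(\theta))^{n-1} \ge 0$; and $C^P_{q\theta}>0$, $C^I_{q\theta}>0$ from \cref{cost_properties} (part 2). Hence every term on the right-hand side is non-negative in absolute value and enters with a minus sign, so $s^{*\prime}(q) - V'(q) \le 0$, which is the claim.

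The only subtlety worth flagging — and the ``obstacle'', such as it is — is bookkeeping about where things are evaluated: \eqref{scoredef} is stated as a function of $q$ with $\theta(q)$ plugged in, whereas \eqref{virtual_surplus} is the pointwise first-order condition in $(q,\theta)$; the subtraction is legitimate precisely because the optimal quality function satisfies \eqref{virtual_surplus} at $(q^*_{sym}(\theta),\theta)$, equivalently at $(q,\theta(q))$, so along the optimal path the two left-hand sides refer to the same point. Once that identification is made explicit, the proof is a one-line cancellation followed by a sign check, and nothing beyond \cref{cost_properties,regularity} is needed. I would therefore present the proof as: (i) differentiate the virtual-cost definitions in $q$; (ii) subtract \eqref{virtual_surplus} from \eqref{scoredef} along the optimal path to get \eqref{score_gap}; (iii) read off the sign from the assumed monotonicity and cross-partial conditions.
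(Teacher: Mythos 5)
Your argument is exactly the paper's: subtract \eqref{virtual_surplus} from \eqref{scoredef}, expand $\tilde C^P_q$ and $\tilde C^I_q$ via the virtual-cost definitions so the non-virtual terms cancel, and read off the sign of \eqref{score_gap} from \cref{cost_properties}. The extra care you take in noting that both first-order conditions are evaluated at $(q,\theta(q))$ along the optimal path is a harmless elaboration of what the paper leaves implicit.
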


Interestingly, it also follows from \eqref{score_gap} that when the production costs depend only on type, and the investment costs depend only on quality, the truthful score with $S(q,p)=V(q)-p$ is actually optimal; no score distortion is needed. This observation will be helpful in \cref{sect:asym} where we will focus on optimal asymmetric mechanisms in such environments.

\subsubsection{Number of firms and the size of investment costs}

Another important takeaway from \cite{che1993design} was that neither the quality of the firm nor the scoring rule in the optimal mechanism depended on the number of firms $n$. That is because without the investment costs, it was a dominant strategy for each firm to choose quality to maximize it's own strength (relative to the scoring rule) in the upcoming scoring auction. Furthermore, since the optimal scoring rule was meant to induce the same quality function for each firm, it also did not depend on $n$. It turns out that neither is generally true in the presence of investment costs.

Recall that $q^*_{sym}(\theta)$ maximizes 
\begin{equation}\label{symm_obj}
x(q,\theta)-\frac{\beta}{\textit{PW}_{sym}(\theta)} \tilde C^I(q,\theta)    
\end{equation}
pointwise. For any given type $\theta_i$, a higher number of firms lowers the perceived probability of winning in the symmetric mechanism. As a result, the trade-off between the virtual production surplus and the virtual investment costs, captured by the formula \eqref{symm_obj}, becomes steeper, thus inducing lower quality. The size of investment costs $\beta$ acts in a similar way. \footnote{Formally, the function \eqref{symm_obj} is supermodular in $(-q,\beta,n)$, thus, the optimal symmetric quality function is decreasing in $\beta, n$.} 

\begin{prop}\label{q_in_n}
The optimal symmetric quality function $q^*(\theta)$ is decreasing in $n, \beta$.
\label{symmcompare}
\end{prop}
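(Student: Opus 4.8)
The plan is to verify the claimed monotone comparative statics via a supermodularity (single-crossing) argument applied to the pointwise objective, exactly as hinted in the footnote preceding the statement. Recall from \cref{optimalsym} and equation~\eqref{symm_obj} that, with the investment costs scaled by $\beta$, the optimal symmetric quality at type $\theta$ is
\[
q^*_{sym}(\theta) \in \arg\max_{q \ge 0}\ G(q,\theta,\beta,n), \qquad G(q,\theta,\beta,n) := x(q,\theta) - \frac{\beta}{\textit{PW}_{sym}(\theta)}\,\tilde C^I(q,\theta),
\]
where $\textit{PW}_{sym}(\theta) = (1-F(\theta))^{n-1}$ and $x(q,\theta) = V(q) - \tilde C^P(q,\theta)$. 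Since this is an unconstrained-in-$q$ (in fact interior, by the Inada conditions in \cref{regularity}, part 2) pointwise maximization, it suffices to show that $G$ has increasing differences in $(-q)$ against each of $\beta$ and $n$, and then invoke Topkis' monotonicity theorem to conclude that $\arg\max_q G$ is non-increasing in $\beta$ and in $n$.

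First I would treat $\beta$. Note $x(q,\theta)$ does not depend on $\beta$, so
\[
\frac{\partial^2 G}{\partial q\,\partial \beta} = -\frac{1}{\textit{PW}_{sym}(\theta)}\,\tilde C^I_q(q,\theta) < 0,
\]
because $\tilde C^I_q > 0$: indeed $\tilde C^I_q = C^I_q + C^I_{q\theta}\,F/f > 0$ by \cref{cost_properties} (parts 1 and 2) together with $F/f \ge 0$. Hence $G$ is submodular in $(q,\beta)$, equivalently supermodular in $(-q,\beta)$, so $q^*_{sym}(\theta)$ is decreasing in $\beta$ for every $\theta$. Next, for $n$: the only $n$-dependence is through $1/\textit{PW}_{sym}(\theta) = (1-F(\theta))^{-(n-1)}$, which is non-decreasing in $n$ for every $\theta \in [0,1)$ (strictly increasing whenever $0 < F(\theta) < 1$). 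Since $\tilde C^I_q(q,\theta) > 0$, the coefficient $\beta\,\tilde C^I_q(q,\theta)/\textit{PW}_{sym}(\theta)$ is non-decreasing in $n$, so
\[
\frac{\partial^2 G}{\partial q\,\partial n} = -\beta\,\tilde C^I_q(q,\theta)\,\frac{\partial}{\partial n}\!\left(\frac{1}{\textit{PW}_{sym}(\theta)}\right) \le 0,
\]
again giving submodularity in $(q,n)$ and hence $q^*_{sym}(\theta)$ decreasing in $n$. (At $\theta = 0$ or $\theta = 1$ the relevant cross-partial is zero, so monotonicity is weak there, consistent with the statement; e.g.\ $q^*_{sym}(1) = 0$ regardless.)

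The only genuine subtlety — and the step I would be most careful about — is ensuring the argmax is a well-defined, finite quantity so that Topkis applies cleanly: I would appeal to \cref{regularity}, which guarantees $G(\cdot,\theta,\beta,n)$ is strictly concave in $q$ (since $V'' \le 0$, $\tilde C^P_{qq} > 0$, $\tilde C^I_{qq} > 0$) and that the Inada-type conditions in part 2 force an interior maximizer, so $q^*_{sym}(\theta)$ is the unique solution of the first-order condition \eqref{virtual_surplus} rewritten with the $\beta$-scaling. Then the comparative statics can equivalently be read off directly from that first-order condition by implicit differentiation: differentiating $V'(q) - \tilde C^P_q(q,\theta) - (\beta/\textit{PW}_{sym})\tilde C^I_q(q,\theta) = 0$ in $\beta$ (resp.\ $n$), the coefficient on $dq$ is negative (strict concavity) while the coefficient on $d\beta$ (resp.\ $dn$) is $-\tilde C^I_q/\textit{PW}_{sym} < 0$ (resp.\ $\le 0$), yielding $dq^*_{sym}/d\beta \le 0$ and $dq^*_{sym}/dn \le 0$. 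Monotonicity of the full function $q^*_{sym}$ in $\theta$ was already established in \cref{optimalsym}, so nothing further is needed there.
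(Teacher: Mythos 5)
Your proposal is correct and follows essentially the same route as the paper, which justifies the result in a footnote by noting that the pointwise objective \eqref{symm_obj} is supermodular in $(-q,\beta,n)$ and invoking monotone comparative statics. Your explicit computation of the cross-partials and the backup implicit-function-theorem check merely spell out the details the paper leaves implicit.
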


The dependence of the scoring rule on $n, \beta$ is less obvious. Recall that the optimal scoring auction is defined by the following first-order condition
\begin{equation*}
s'(q) =  C^P_q(q,\hat \theta)+\frac{\beta}{\textit{PW}_{sym}(\hat \theta)}C^I_q(q,\hat \theta),
\end{equation*}
where $\hat \theta=\theta(q)$ is the inverse of $q^*_{sym}(\theta)$. On the one hand, higher $n, \beta$ increase the right-hand side of this equation for any given $\hat \theta$. On the other hand, $q^*_{sym}(\theta)$ is decreasing in $n, \beta, \theta$ and thus $\theta(q)$ is also decreasing in $n, \beta$, for any given $q$, which decreases the right-hand-side of the equation. Thus, the cumulative effect is ambiguous.

To gain traction, we restrict attention to environments in which the elasticity of costs and $F(\theta)$ w.r.t $\theta$ is constant. These elasticities can be interpreted as the heterogeneity of firms or the importance of private information with respect to these costs.

\begin{prop}\label{elmatters}
Suppose $F(\theta)=\theta^{\frac{1}{\delta}}$ for some $\delta>0$, $C^P(q,\theta)=\theta^{E_P }g_1(q), C^I(q,\theta)=\beta\cdot  \theta^{E_I } g_2(q)$ where $E_P ,E_I >0$ and $g_1(q)$, $g_2(q)$ are some well-behaved functions, then:
\begin{enumerate}
\item  If $E_P <E_I $ then $s^{*\prime}(q)$ decreases in $n, \beta$ at every $q$,
\item If $E_P >E_I $ then $s^{*\prime}(q)$ increases in $n, \beta$ at every $q$.
\end{enumerate}
\end{prop}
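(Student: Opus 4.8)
The plan is to reduce everything to a one-dimensional comparative-statics exercise by exploiting the constant-elasticity structure. With $F(\theta) = \theta^{1/\delta}$ we have $\frac{F(\theta)}{f(\theta)} = \delta \theta$, so the virtual costs inherit the power form: $\tilde C^P(q,\theta) = (1 + \delta E_P)\theta^{E_P} g_1(q)$ and $\tilde C^I(q,\theta) = \beta(1 + \delta E_I)\theta^{E_I} g_2(q)$, up to positive constants that I will fold into $g_1, g_2$. Also $\textit{PW}_{sym}(\theta) = (1 - \theta^{1/\delta})^{n-1}$. First I would write down the pointwise first-order condition for $q^*_{sym}(\theta)$ from \eqref{virtual_surplus} and observe that it defines an implicit relation between $q$ and $\theta$; the key move is to solve it for $\theta$ as a function of $q$, i.e. to obtain $\hat\theta = \theta(q)$ and study how $\hat\theta$ moves in $(n,\beta)$. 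Then substitute into the score equation \eqref{scoredef} and differentiate $s^{*\prime}(q)$ with respect to $n$ (resp.\ $\beta$) at fixed $q$, using the chain rule through $\hat\theta(q)$.

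The cleaner route, which I expect to work, is to avoid inverting and instead parametrize the whole optimal locus by $\theta$ and compute $s^{*\prime}$ as a function of $\theta$ composed with $q^*_{sym}(\theta)$; but since the claim is "at every $q$", I still need the sign of $\frac{d}{dn} s^{*\prime}(q)\big|_q$. Writing $A(\theta) := C^P_q(q,\theta) = \theta^{E_P} g_1'(q)$ and $B(\theta) := C^I_q(q,\theta)/\textit{PW}_{sym}(\theta) = \beta \theta^{E_I} g_2'(q)/(1-\theta^{1/\delta})^{n-1}$, the score slope is $s^{*\prime}(q) = A(\hat\theta) + B(\hat\theta)$ where $\hat\theta$ solves the FOC $V'(q) = \tilde A(\hat\theta) + \tilde B(\hat\theta)$ (the tilded analogues with the virtual-cost constants). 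Differentiating the FOC gives $\frac{\partial \hat\theta}{\partial n}$ and $\frac{\partial\hat\theta}{\partial\beta}$; both are negative by \cref{symmcompare} (equivalently, $q^*_{sym}$ decreasing in $n,\beta$ means $\hat\theta$ decreasing). Then
\[
\frac{d s^{*\prime}}{dn}\Big|_q = \frac{\partial B}{\partial n}\Big|_{\hat\theta} + \big(A'(\hat\theta) + B'(\hat\theta)\big)\frac{\partial \hat\theta}{\partial n},
\]
and I want to show this has the sign of $E_I - E_P$. The direct-effect term $\frac{\partial B}{\partial n}$ is positive (more competition raises the effective marginal investment cost); the indirect term is negative. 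The crux is to compare magnitudes, and the constant-elasticity assumption is exactly what makes the comparison clean: because $A \propto \theta^{E_P}$ and $B \propto \theta^{E_I}$, the logarithmic derivatives are $A'/A = E_P/\theta$ and (up to the $\textit{PW}$ factor) $B$ has a $\theta$-elasticity of $E_I$ plus a competition term. I would substitute the FOC-derived expression for $\partial\hat\theta/\partial n$ (which itself involves $A'(\hat\theta)+B'(\hat\theta)$ in the denominator and $\partial \tilde B/\partial n$ in the numerator) and watch the $A'+B'$ factors cancel, leaving a single factor proportional to $E_P - E_I$ times manifestly positive quantities. The same computation with $\beta$ in place of $n$ is formally identical, since $\beta$ enters $B$ and $\tilde B$ multiplicatively just as $(1-\theta^{1/\delta})^{-(n-1)}$ does, so one calculation delivers both parts.

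The main obstacle I anticipate is bookkeeping the two distinct constants $1 + \delta E_P$ and $1 + \delta E_I$ that distinguish $\tilde A$ from $A$ and $\tilde B$ from $B$: these break the would-be exact proportionality between the real FOC (defining $\hat\theta$) and the real score equation (defining $s^{*\prime}$), so the cancellation of the $A' + B'$ terms is not quite automatic and needs the supermodularity/monotonicity input from \cref{symmcompare} plus the sign of $\tilde A'/A' $ and $\tilde B'/B'$ to push through. Concretely, after clearing denominators the sign of $\frac{d s^{*\prime}}{dn}|_q$ should reduce to the sign of a bilinear expression in $(E_P, E_I)$ of the form $c_1 E_P - c_2 E_I$ with $c_1, c_2 > 0$ determined by the virtual-cost constants and the $\textit{PW}$ factor; verifying $c_1 = c_2$ (or at least that the expression changes sign precisely at $E_P = E_I$) is the one genuinely delicate algebraic step, and I would handle it by writing the whole thing in terms of elasticities so that the $g_1', g_2'$ factors and the $q$-dependence drop out entirely, leaving a pure inequality between $E_P$ and $E_I$. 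Everything else is routine differentiation of power functions.
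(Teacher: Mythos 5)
Your plan has the right general shape (reduce everything to the inverse $\hat\theta(q)$ and use that $\hat\theta$ is decreasing in $n,\beta$), but you stop short of the calculation that makes the sign immediate, and the chain-rule route you outline genuinely does require resolving the ``delicate algebraic step'' you flag — you never do. In particular you propose
\[
\frac{d s^{*\prime}}{dn}\Big|_q = \frac{\partial B}{\partial n}\Big|_{\hat\theta} + \bigl(A'(\hat\theta)+B'(\hat\theta)\bigr)\frac{\partial\hat\theta}{\partial n}
\]
and hope that the $A'+B'$ factors cancel against the implicit-function expression for $\partial\hat\theta/\partial n$; but because the FOC involves the \emph{virtual} terms $\tilde A = (1+\delta E_P)A$ and $\tilde B=(1+\delta E_I)B$ rather than $A,B$ themselves, what appears in the implicit-function denominator is $\tilde A'+\tilde B'$, not $A'+B'$, so the claimed cancellation is not automatic and the sign is not transparently $\operatorname{sign}(E_I-E_P)$ from that form. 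That is exactly where your argument is incomplete.

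The decisive move the paper makes (and the one your plan misses) is to use the FOC as an \emph{algebraic substitution} before doing any comparative statics: since $\tilde A(\hat\theta)+\tilde B(\hat\theta)=V'(q)$ holds identically, you can solve for $\tilde B$ (equivalently for $\beta\theta^{E_I}g_2'$) and eliminate the investment-cost term from $s^{*\prime}(q)=A(\hat\theta)+B(\hat\theta)=\tilde A/(1+\delta E_P)+\tilde B/(1+\delta E_I)$. This yields
\[
s^{*\prime}(q)=\frac{V'(q)+\delta(E_I-E_P)\,\theta(q)^{E_P}\,g_1'(q)}{1+\delta E_I},
\]
in which the entire $(n,\beta)$-dependence enters only through $\theta(q)$, which is decreasing in $n,\beta$ by \cref{symmcompare}. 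The sign of the comparative static is then literally the sign of $E_I-E_P$, with no need to juggle direct and indirect effects or to verify any cancellation. I'd recommend replacing the chain-rule bookkeeping in your plan with this substitution: it both closes the gap you identified and collapses the proof to a two-line computation.
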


Interestingly, when the elasticities are equal $E_P = E_I = E$, the optimal scoring rule does not depend on either $n$ or $\beta$, and can be, in fact, written out: $S(q, p) = \frac{V(q)}{1+\delta E} - p$.

A practical implication of \cref{elmatters} is that the buyer should adapt the quality incentives to the degree of competition, i.e., the number of firms $n$. If there is more heterogeneity in production costs than in investment costs $(E_P >E_I)$, the quality incentives should become more high-powered when competition rises. If there is less heterogeneity in production costs than in investment costs $(E_P < E_I)$, the quality incentives should become less high-powered when competition rises.\footnote{If the buyer is using a linear scoring rule $S(q,p) = \alpha q - p$, a simple increase in $\alpha$ would make the quality incentives more high-powered.} 

The parameter $\beta$ can be interpreted as either a tightening of regulation that increases the firms' investment costs or as the inverse of the number of contracts in which the firm simultaneously competes. In either case, an increase in $\beta$ acts similarly to the increase in $n$.

\section{Suboptimality of symmetric mechanisms}
\label{sec:suboptimal}

In this section, we show that the solution to our mechanism design problem is generically not a symmetric mechanism. We identify two cases, where the optimal mechanism is asymmetric with certainty. See \cref{app:suboptimal} for the formal proofs.

\subsection{Vanishing private information}
\label{sec:vanish}

The first case is that of vanishing private information. We introduce the latter by parametrizing the production and investment costs with $\alpha \geqslant 0$ like $C^P(q,\alpha \theta)$ and $C^I(q,\alpha \theta)$, so that when $\alpha$ vanishes, the costs become independent on the realization of the private type. We will refer to $\alpha$ as the \textbf{importance of private information}.

Intuitively, if private information does not enter in any of the costs, the optimal thing to do is always to award the contract to one ex-ante chosen firm so that the others do not incur wasteful investment costs. However, there is a caveat. The principal might benefit from screening if $x_i(\theta_i)$ are not constant. 

Our result is that the optimal symmetric mechanism is still sub-optimal, when private information is vanishing.

\begin{prop}\label{smallalpharesult}
There exists $\overline{\alpha}>0$ such that for all $0 \leqslant \alpha < \overline{\alpha}$ the optimal symmetric mechanism is not an optimal mechanism.
\end{prop}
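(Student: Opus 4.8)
The plan is to compare, as $\alpha \to 0$, the buyer's value from the optimal symmetric mechanism with her value from a simple asymmetric benchmark --- sole-sourcing from one ex-ante designated firm. First I would compute the limit value of the optimal symmetric mechanism at $\alpha = 0$. When $\alpha = 0$ the costs $C^P(q,0)$ and $C^I(q,0)$ no longer depend on the type, so the virtual costs coincide with the true costs, $\tilde C^P = C^P$ and $\tilde C^I = C^I$, and the informational rents $\textit{IR}(\theta,1)$ vanish. The objective \eqref{int} becomes $n\int\!\big[x(q(\theta))\,\textit{PW}_{sym}(\theta) - C^I(q(\theta))\big]f(\theta)\,d\theta$. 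Since $\textit{PW}_{sym}(\theta) = (1-F(\theta))^{n-1} < 1$ for $\theta > 0$ while the all-pay cost $C^I$ is incurred with the full weight $n$, the term $-n\,C^I(q(\theta))$ is a genuine deadweight loss for all non-winning realizations; the symmetric mechanism still forces $n$ firms to invest even though only one is served. I would show this value is strictly below $\max_q \big(V(q) - C^P(q,0) - C^I(q,0)\big)$, the value of sole-sourcing, whenever $C^I$ is not identically zero at the optimal quality --- which holds because \cref{regularity} part 2 guarantees an interior optimal quality $q^*_{sym}(\theta) > 0$ on a positive-measure set of types, and there $C^I(q^*_{sym}(\theta)) > 0$ by \cref{cost_properties} part 1 and 3.

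Second, I would address the caveat flagged in the text: even with $\alpha = 0$, screening could in principle help if the virtual production surplus varied across realizations. But at $\alpha = 0$ everything is type-independent, so $x(q,\theta) = x(q)$ is constant in $\theta$; hence there is no screening motive on the production side, and the only role of heterogeneity is through $\textit{PW}_{sym}$, which as argued only hurts. Thus at exactly $\alpha = 0$ we have the strict inequality $\overline U_{sym}\big|_{\alpha=0} < \overline U_{asy}\big|_{\alpha=0}$, with the right side achieved by sole-sourcing (a feasible, incentive-compatible mechanism since a single served firm with a constant cost has no information to reveal).

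Third, I would upgrade the strict inequality at $\alpha = 0$ to a neighborhood statement by a continuity argument. Both $\overline U_{sym}(\alpha)$ and the sole-sourcing value $U_{sole}(\alpha)$ are continuous in $\alpha$: the former because $q^*_{sym}$ depends continuously on $\alpha$ through the first-order condition \eqref{virtual_surplus} (using the regularity and smoothness assumptions on $C^P, C^I, F, f$) and the integrand in \eqref{int} is continuous; the latter because it is a max of a continuous family. Since $U_{sole}(\alpha) \leq \overline U_{asy}(\alpha)$ for all $\alpha$, strictness at $\alpha = 0$ plus continuity yields an $\overline\alpha > 0$ with $\overline U_{sym}(\alpha) < U_{sole}(\alpha) \leq \overline U_{asy}(\alpha)$ for all $\alpha \in [0,\overline\alpha)$, which is exactly the claim.

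The main obstacle I anticipate is handling the caveat rigorously for small positive $\alpha$ rather than only at $\alpha = 0$: one must make sure the screening gains the symmetric mechanism extracts from a small amount of private information are of lower order than the first-order (in fact zeroth-order, i.e. order-one) deadweight loss from duplicated investment, so that the comparison does not flip. This is precisely why the argument is organized as ``strict inequality at the limit plus continuity'' rather than a direct estimate at generic $\alpha$ --- the limit comparison isolates the dominant duplication effect cleanly, and continuity does the rest. A secondary technical point is confirming that sole-sourcing is genuinely feasible in the full problem \hyperlink{P}{(P)} (it is: set $z_i \equiv 1$ for the chosen firm, pay it its realized total cost, which here is a constant at $\alpha=0$ and continuous for small $\alpha$, and give the others nothing; all IC and PC constraints hold trivially), so that $U_{sole}(\alpha) \leq \overline U_{asy}(\alpha)$ is legitimate.
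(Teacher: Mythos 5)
Your proposal follows the same overall architecture as the paper's proof: establish strict dominance of sole-sourcing over the optimal symmetric mechanism at $\alpha=0$, then extend to a neighborhood of $0$ by continuity. The difference lies in how the strict inequality at $\alpha=0$ is established, and there is a gap in that step.

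The paper reparametrizes the problem in terms of the distribution functions $H_i(x)$ of the virtual-surplus schedules $x_i(\theta_i)$, writes the objective as $\int\bigl(1-\prod_i H_i(x)-\sum_i C'(x)(1-H_i(x))\bigr)\,dx$, optimizes pointwise (exploiting linearity in each $H_i$), and then shows the symmetric optimum is strictly dominated by a two-step perturbation: changing $H_1$ to the step-function optimum leaves the objective unchanged (because the symmetric FOC zeroes out the $H_1$-derivative and the integrand is linear in $H_1$), and then changing $H_2$ strictly increases it. Your approach instead compares the symmetric value directly against $\max_q(V(q)-C^P(q,0)-C^I(q,0))$. That is a genuinely different intermediate route, and it can be made to work, but the claim ``the symmetric value is strictly below sole-sourcing because $\textit{PW}_{sym}<1$ while $C^I$ is incurred by all $n$'' is a heuristic, not a proof: the symmetric mechanism also gets to tailor quality to $\theta$, and it is not immediate that the expected winner's $x$ minus the total investment cost falls short of $\varphi(1):=\max_q[x(q)-C^I(q)]$. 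The missing step is a convexity argument. Define $\varphi(z):=\max_q[x(q)z-C^I(q)]$; the symmetric value at $\alpha=0$ is exactly $n\int_0^1\varphi\bigl((1-F(\theta))^{n-1}\bigr)f(\theta)\,d\theta$ and sole-sourcing gives $\varphi(1)$. Since $\varphi$ is convex (pointwise sup of affine functions of $z$) with $\varphi(0)=0$, one has $\varphi(z)\le z\,\varphi(1)$ on $[0,1]$; combined with $n\int_0^1(1-F(\theta))^{n-1}f(\theta)\,d\theta=1$, this gives $\overline U_{sym}\big|_{\alpha=0}\le\varphi(1)$. Strictness then requires strict convexity of $\varphi$ on a set of positive $\theta$-measure, which follows because the envelope derivative $\varphi'(z)=x(q^*(z))$ is strictly increasing ($q^*(z)$ is strictly increasing in $z$ and $x$ is strictly increasing at $q^*(z)$ under \cref{cost_properties,regularity}). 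With that lemma added, your continuity step goes through unchanged and the argument is complete; as written, however, the heart of the $\alpha=0$ comparison is asserted rather than proved.
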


Above we have registered a particular kind of dominance --- by a single-bidder mechanism, or sole-sourcing. However, there can be other mechanisms that may dominate the optimal symmetric mechanism.

\subsection{Elastic indirect investment costs}

For the second case, we need to introduce a reduction of the problem to an equivalent problem with investment costs only. 

Define the \textbf{indirect investment costs function} $C(x,\theta)$ as $$C(x,\theta):=\min_q\{\tilde{C}^I(q,\theta):V(q)-\tilde{C}^P(q,\alpha \theta)=x\},$$ 
for all $x$ in the range of $V(q)-\tilde{C}^P(q,\alpha \theta)$. Given that quality is one-dimensional, and $\tilde C^I(q,\theta)$ is monotone in $q$, the minimum is achieved at the lowest root $q$ of the equation $V(q)-\tilde{C}^P(q,\alpha \theta)=x$.\footnote{In \cref{C-properties} in the Appendix we show that given \cref{regularity}, $C(x,\theta)$ is increasing in both arguments and supermodular.} After this cost minimization step, the principal's utility may be written simply as
\begin{equation}\label{utility_x}
U=\mathbb{E}(\max_i x_i(\theta_i)-\sum_{i=1}^n C(x_i(\theta_i),\theta_i)). 
\end{equation}

The problem \hyperlink{P2}{(P2)} is equivalent to choosing functions $x_i(\theta_i)$ to maximize \eqref{utility_x} and thus depends solely on the shape of the indirect investment costs. We would like to make these costs relatively elastic. 

We will first observe suboptimality of symmetric mechanisms in a special case: $V(q) = q, C^P(q,\alpha\theta) = \alpha \theta, C^I = q^\gamma/\gamma$ and $F(\theta) = \theta$ thus optimal quality is equal to $x + 2\alpha \theta$ and $C(x,\theta) = (x + 2\alpha \theta)^{\gamma}/\gamma$, where $\gamma$ captures the elasticity.

The optimal symmetric mechanism, derived in \cref{sect:symmetric}, is characterized by $$x^*_{sym}(\theta) = (1-\theta)^{\frac{1}{\gamma-1}}-2\alpha \theta.$$
For any cutoff $\theta_0 \in [0,1]$, define a pair of functions
\begin{equation}\label{x-example}
x_1(\theta) = (1-\min(\theta,\theta_0))^{\frac{1}{\gamma-1}}-2\alpha \theta , \quad x_2(\theta) = \mathbb{I}(\theta<\theta_0)(1-\theta)^{\frac{1}{\gamma-1}}-2\alpha \theta,    
\end{equation}
 see \cref{fig:superexample} (left). These functions are feasible in the relaxed problem \hyperlink{P2}{(P2)} and thus can be used to establish dominance there. These functions are associated with right-censored outcome functions which we define later in \cref{sec:optimaloutcomes} and  implement via a score floor auction in \cref{sec:implementation}. Thus these functions are also feasible in the full problem \hyperlink{P}{(P)}.

For $\gamma = 3$ and $\alpha = 4/10$, the utility of the buyer as a function of the cutoff is
$$ U(\theta_0) = \frac{2}{15} \left(\theta_0^3+\left(\sqrt{1-\theta_0}-3\right)
   \theta_0^2+\left(3-2 \sqrt{1-\theta_0}\right)
   \theta_0+\sqrt{1-\theta_0}+1\right).$$
This function is maximized at an interior point $\theta_0 = 11/36$, see \cref{fig:superexample}, yielding an approximately 3\% increase in utility relative to the extreme points (sole sourcing at $\theta_0 = 0$ and optimal symmetric mechanism at $\theta_0 = 1$) that have the same utility in this example. Thus, an optimal mechanism is asymmetric.

\begin{figure}[t!]
\centering
\includegraphics[width = .45\linewidth]{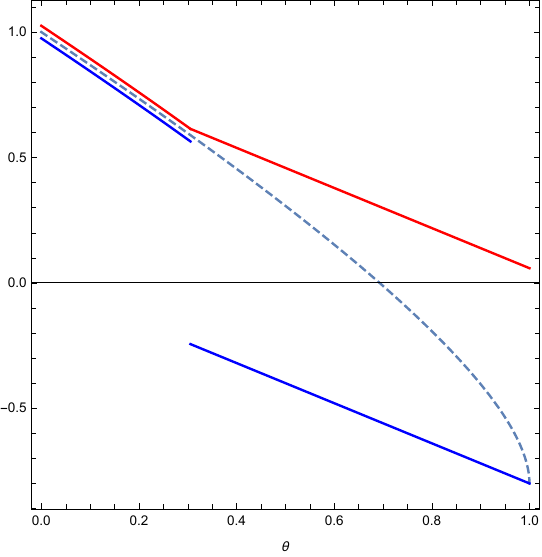}
\includegraphics[width = .45\linewidth]{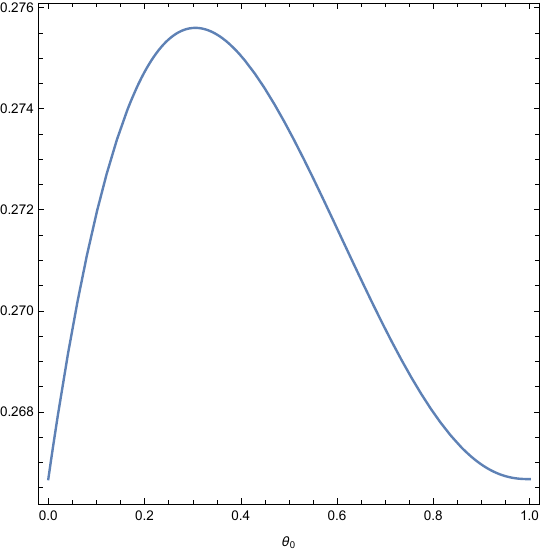}
\caption{$x_1(\theta),x_2(\theta),x^*_{sym}(\theta)$ functions (left) with the optimal cutoff $\theta_0 = 11/36$ and the associated buyer's utility (right) as function of the cutoff $\theta_0$, for $\gamma = 3,\alpha = 4/10$.}
\label{fig:superexample}
\end{figure}

More generally, let $\gamma$ denote the unique number such that $\lim\limits_{x\to\underline{x}^+}\frac{C(x,1)}{(x-\underline{x})^{\gamma}}$ is positive and finite, where $\underline{x}=V(0) - \tilde C^P(0,\alpha)$. \footnote{$\underline{x}$ is the solution to equation $0=C(\underline{x},1)$.} That is, for small $x$ the indirect cost function behaves like a power law with degree $\gamma$. For this setting, we show that, independently of $\alpha$, the optimal symmetric mechanism is dominated when $\gamma$ is large enough. This is valid not only for the constant-elasticity case as in the example above.

\begin{prop}\label{symmetricbad2}
For $n = 2$, for all $\gamma > 2$, the optimal symmetric mechanism is not an optimal mechanism.
\end{prop}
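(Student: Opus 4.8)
The plan is to exhibit, for any $\gamma > 2$ and any fixed $\alpha$, a family of feasible outcome functions $\{x_1^{\theta_0}, x_2^{\theta_0}\}_{\theta_0 \in (0,1)}$ that interpolates between sole-sourcing ($\theta_0 = 0$) and the optimal symmetric mechanism ($\theta_0 = 1$), and to show that the buyer's utility $U(\theta_0)$ along this family is not maximized at either endpoint. Concretely, I would take the ``right-censored'' functions in the spirit of \eqref{x-example}: firm $1$ uses the optimal symmetric profile $x^*_{sym}$ with its type censored from above at $\theta_0$, while firm $2$ uses $x^*_{sym}$ truncated to types below $\theta_0$ (and driven down otherwise). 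These are feasible in \hyperlink{P2}{(P2)} for the same reason as in the displayed example: the allocation rule ``give the contract to firm $1$ whenever $\theta_1 > \theta_0$ or $\theta_1$ wins the symmetric comparison, else give it to firm $2$'' is a valid share profile, and censoring $x$ only lowers the required quality, so the feasibility constraints $x_i \leq V(q) - \tilde C^P(q, \alpha\theta)$ for some $q \geq 0$ still hold. By the reduction in \eqref{utility_x}, it then suffices to work entirely with $U$ written in terms of $x_i$ and the indirect cost function $C(x,\theta)$.

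The core is a derivative computation at $\theta_0 = 1$. Write $U(\theta_0)$ using \eqref{utility_x} with the proposed $x_1^{\theta_0}, x_2^{\theta_0}$, and compute $U'(1^-)$ (the left derivative as $\theta_0 \to 1$). The first-order effect of lowering $\theta_0$ slightly below $1$ has two pieces: (i) a \emph{second-order} loss in the ``efficiency'' term $\mathbb{E}\max_i x_i$ and in firm $1$'s own cost, because near $\theta_0 = 1$ the symmetric allocation already almost never lets firm $2$ win and $x_2^{\theta_0}$ near $\theta = \theta_0$ is already near its floor — the envelope/optimality of $x^*_{sym}$ kills the first-order terms there; and (ii) a \emph{first-order} gain from saving firm $2$'s investment cost $C(x_2(\theta), \theta)$ on the sliver of types $\theta \in (\theta_0, 1)$ that get censored down. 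The key point is that $C(x_2(\theta),\theta)$ behaves like $(x_2(\theta) - \underline x)^{\gamma}$ for $x_2$ near $\underline x$ by the definition of $\gamma$, and along $x^*_{sym}$ we have $x^*_{sym}(\theta) - \underline x$ vanishing at a rate controlled by $(1-\theta)^{1/(\gamma-1)}$-type behavior near $\theta = 1$ (this is where the symmetric solution's boundary behavior $q^*_{sym}(1) = 0$ enters). So the cost saved on $(\theta_0,1)$ scales like $\int_{\theta_0}^1 (1-\theta)^{\gamma/(\gamma-1)} \, d\theta \sim (1-\theta_0)^{\gamma/(\gamma-1)+1}$, while the efficiency/own-cost loss scales like $(1-\theta_0)^{2}$ times a bounded factor (from the $n(1-F(\theta))^{n-1}$ weight with $n=2$). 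One checks that $\tfrac{\gamma}{\gamma-1}+1 < 2$ iff $\gamma < \gamma - 1 + \gamma - 1 = 2\gamma - 2$, i.e. iff $\gamma > 2$; hence for $\gamma > 2$ the gain dominates and $U'(1^-) > 0$ in the appropriate one-sided sense, so $\theta_0 = 1$ is not a local (hence not a global) maximizer of $U$ on $(0,1]$. Comparing endpoints, since the family is continuous and $U(1) = \overline U_{sym}$ while some interior (or the $\theta_0 \to 0$) point does at least as well, the optimal symmetric mechanism is strictly dominated.

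I should then handle the bookkeeping that makes this rigorous: (a) confirming $U(\theta_0)$ is right-differentiable at every interior point and continuous on $[0,1]$, using the smoothness of $C(x,\theta)$ from \cref{regularity} and the footnote ``$C$-properties''; (b) pinning down the exact exponents — the cleanest route is to not compute $U'(1^-)$ directly but instead to bound $U(\theta_0) - U(1)$ from below by (cost saved on $(\theta_0,1)$ for firm $2$) minus (total change in the remaining terms), then show the first term is $\Theta((1-\theta_0)^{1 + \gamma/(\gamma-1)})$ and the second is $O((1-\theta_0)^2)$, so the difference is positive for $\theta_0$ close enough to $1$; (c) verifying the claimed near-boundary rate $x^*_{sym}(\theta) - \underline x \asymp (1-\theta)^{1/(\gamma-1)}$ from the first-order condition \eqref{virtual_surplus} together with the power-law definition of $\gamma$ — this is essentially a local inversion of $C_x(x,1) \asymp (x-\underline x)^{\gamma-1}$ against the winning-probability weight $\textit{PW}_{sym}(\theta) = 1 - F(\theta)$, which for $n = 2$ is exactly $1 - F(\theta) \to 0$ linearly in $1 - \theta$ when $f(1) > 0$, or more generally at whatever rate $F$ dictates (one may need to argue the exponent comparison is robust to the rate of $F$ near $1$, or invoke that only the constant-elasticity subfamily is needed to exhibit dominance, as the displayed example already does for $\gamma = 3$).

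\textbf{Main obstacle.} The delicate step is (c) — extracting the precise rate at which $x^*_{sym}(\theta)$ hits its floor $\underline x$ as $\theta \to 1$ and showing the exponent comparison $1 + \tfrac{\gamma}{\gamma-1} < 2 \iff \gamma > 2$ is exactly the right threshold, robustly in the shape of $F$ near $\theta = 1$. If the rate of $F$ near $1$ interfered with the comparison one would have to either add a regularity hypothesis or restrict to a constant-elasticity family; I expect the paper circumvents this by noting that it suffices to find \emph{one} environment with the given $\gamma$ where dominance holds (the proposition is an impossibility/genericity statement about ``the optimal symmetric mechanism is not optimal''), so the constant-elasticity instance with $F(\theta) = \theta$ — for which every exponent is explicit, exactly as in the worked $\gamma = 3, \alpha = 4/10$ example — does the job for all $\gamma > 2$ simultaneously after checking the sign of $U'(1^-)$ symbolically.
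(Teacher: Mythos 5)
Your high-level approach matches the paper's: exhibit the right-censored family $x_1^{\theta_0},x_2^{\theta_0}$ parametrized by a threshold $\theta_0$, compute the behaviour of $U(\theta_0)$ near $\theta_0=1$, and show that $\theta_0=1$ (the optimal symmetric mechanism) is not a local maximizer. The paper also does exactly this, working with the derivative $U'(\theta_0)$ (obtained via the Envelope theorem applied to $\varphi(z,\theta)=\max_x[zx-C(x,\theta)]$) and proving that $\lim_{\theta_0\to 1} U'(\theta_0)/(1-F(\theta_0))^{\gamma/(\gamma-1)}<0$.

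However, your exponent analysis contains a concrete error that, taken at face value, would make the argument fail. You state the gain (cost saved for firm 2) as $\Theta\bigl((1-\theta_0)^{1+\gamma/(\gamma-1)}\bigr)$ and the loss as $O\bigl((1-\theta_0)^{2}\bigr)$, and then claim $\tfrac{\gamma}{\gamma-1}+1<2 \iff \gamma>2$. This is false: since $\tfrac{\gamma}{\gamma-1}>1$ for all $\gamma>1$, the quantity $\tfrac{\gamma}{\gamma-1}+1$ is \emph{always} greater than $2$, so by your own comparison the gain would \emph{never} dominate. The bug is a level mismatch: the exponent $1+\tfrac{\gamma}{\gamma-1}$ is correct for the integrated quantity $U(\theta_0)-U(1)$, but the exponent $2$ for the ``efficiency/own-cost loss'' is the \emph{derivative-level} exponent. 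At the integrated level the loss term is $O((1-\theta_0)^{3})$ (it comes from a second-order Taylor cancellation in $U'$, giving $U'\sim A(1-\theta_0)^2-B(1-\theta_0)^{\gamma/(\gamma-1)}$, and integrating once raises both exponents by one). The correct comparison is $1+\tfrac{\gamma}{\gamma-1}<3 \iff \tfrac{\gamma}{\gamma-1}<2 \iff \gamma>2$, which is exactly the threshold in the statement. The paper avoids this pitfall by working entirely at the level of $U'$, establishing $L_1=0$ (the Taylor term vanishes relative to $(1-F(\theta_0))^{\gamma/(\gamma-1)}$, which requires precisely $2>\gamma/(\gamma-1)$, i.e.\ $\gamma>2$) and $L_2>0$ (the cost term, via the power-law behaviour of $C$ near $\underline x$ and l'H\^opital).

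A secondary issue: your fallback plan in the last paragraph --- ``it suffices to find one environment with the given $\gamma$ where dominance holds'' --- does not prove the proposition. The proposition asserts that for \emph{every} environment whose indirect cost exhibits the power-law exponent $\gamma>2$ near the floor, the optimal symmetric mechanism is dominated; restricting to the constant-elasticity instance with $F=$ uniform proves a strictly weaker claim. The paper's argument, by contrast, is genuinely non-parametric: it only uses the definition $\lim_{x\to\underline x^+}C(x,1)/(x-\underline x)^\gamma\in(0,\infty)$ and l'H\^opital to extract $C_x(x,1)\sim(x-\underline x)^{\gamma-1}$, together with the first-order condition $C_x(x^*(1,\theta_0),1)=(1-F(\theta_0))^{n-1}$, so no constant-elasticity restriction is needed.
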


\section{Optimal asymmetric mechanisms for two firms}
\label{sect:asym}

In \cref{sec:suboptimal}, we have seen that symmetric mechanisms are not, in general, optimal for procurement in the presence of investment costs. What would then be an optimal mechanism? In this section, we provide results about optimal mechanisms without the \emph{a priori} restriction to the symmetric ones. See \cref{app:asym} for the formal proofs.

In general, the analysis is hard; to the best of our knowledge, only partial results are available in the literature for similar problems\footnote{\cite{zhang2017auctions} provides the result for $n=2$ and additively separable environment with quadratic costs; \cite{gershkov2021theory} provide a general condition under which the optimal mechanism is symmetric; \cite{celik2009optimal} study binary investment decisions.}.
The key difficulty comes from the assumption that a firm's quality is allowed to depend only on her own type. This precludes the use of pointwise integral maximization, which is typical in mechanism design. As a result, one has to use variational techniques to characterize an optimum. %

We are able to fully characterize the optimal mechanism under additional restrictions to $n=2$ players and the separable cost structure that we define below. 

\begin{assu}\label{separablestructure}
Separable cost structure:
\begin{enumerate}
\item $C^P(q,\theta)=\alpha\theta$ for some $\alpha>0$, 
\item $C^I(q,\theta)=g(q)$, for some function $g(q)$: $g'(q)>0$, $g''(q)>0$.
\item $V(q) = q$
\end{enumerate}
\end{assu}

Assumption $V(q) = q$ does not lead to the loss of generality, as quality can always be reparametrized. The $\alpha$ parameter here is consistent with the one defined in \cref{sec:vanish}.

We start by defining, for the set-up just described, the notion of convexity that, as we show, is relevant for the shape of the optimal mechanism. Recall that $F(\theta)$ is the cdf of the type $\theta$.
Denote by $J(\theta)=\theta+\frac{F(\theta)}{f(\theta)}$ the standard virtual type. It follows from our \cref{regularity} that $J(\theta)$ is increasing. Define $\xi(z):=1-J(F^{-1}(1-z))$. As $J(\theta)$ is increasing, $\xi(z)$ is also increasing. 

We say that \textbf{marginal investment costs are sufficiently convex} iff the function $q\to \alpha \xi(g'(q))-q$ is strictly quasi-convex. We say that \textbf{marginal investment costs are sufficiently concave} iff the function $q\to \alpha \xi(g'(q))-q$ is strictly quasi-concave.

Note that the above definition depends not only on the marginal investment costs $g'(q)$ but on the other primitives as well, thus, the qualifier ``sufficiently'' is context-dependent. In the simple setting where the type $\theta$ is distributed uniformly on [0,1], $\alpha\xi(g'(q))-q=2\alpha g'(q)-q-1$, so any convex marginal costs $g'(q)$ would be sufficiently convex while any concave marginal costs $g'(q)$ would be sufficiently concave. 

\subsection{Optimal outcomes}
\label{sec:optimaloutcomes}

Recall that in \cref{sec:suboptimal}, we showed that the optimal symmetric mechanism could be dominated by a certain asymmetric mechanism, which can be thought of as an optimal symmetric mechanism with a special (asymmetric) treatment of high (inefficient) types. 

In a restricted setting with two firms, we are able to prove that this asymmetric mechanism will be, in fact, optimal under our novel condition. Recall the optimal symmetric quality functions $q^*_{sym}(\theta)$ and denote a optimal pair of quality functions as $(q^*_1(\theta), q^*_2(\theta))$ and the optimal pair of allocation functions as $(z^*_1(\bm\theta),z^*_2(\bm\theta))$.

\begin{figure}[t!]
\centering
\includegraphics[width = .45\linewidth]{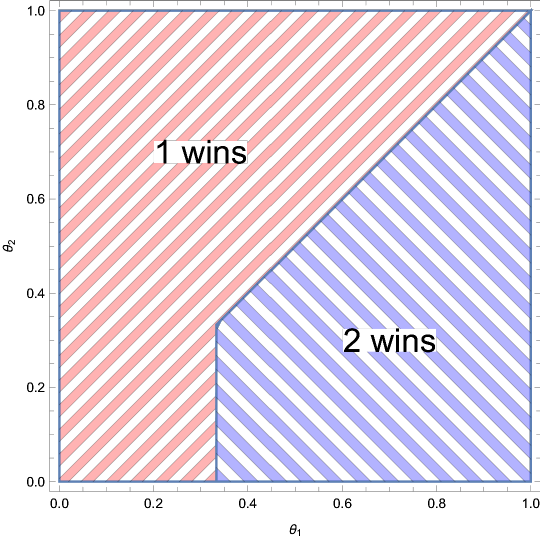}
\includegraphics[width = .45\linewidth]{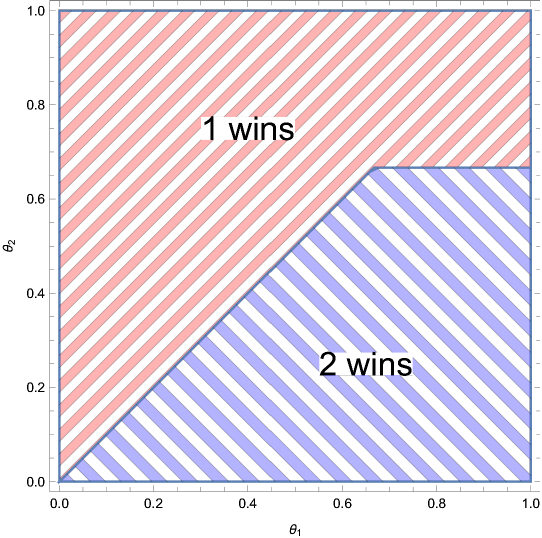}
\caption{Allocation function for the (favored) firm 1 and (unfavored) firm 2 in the score ceiling (left) and score floor (right).}
\label{fig:figureone}
\end{figure}

When the marginal costs are sufficiently convex, we show that the relaxed problem \hyperlink{(P2)}{(P2)} is solved by the \textbf{right-censored allocation functions}: 
\begin{gather}\label{right_allocation}
z^*_1(\bm\theta) = \mathbb{I}(\tilde \theta_1 < \theta_2), \quad
z^*_2(\bm\theta) = 1 - z^*_1(\bm\theta),
\end{gather}
where $\tilde \theta_1 = \min(\theta_0,\theta_1)$. In other words, the allocation is efficient, but as if the type of firm 1 was capped at the threshold $\theta_0$, i.e., censored from the right. The associated \textbf{right-censored quality functions} are defined as below:
\begin{gather}\label{q_floor}
q^*_1(\theta_1) = q^*_{sym}(\tilde \theta_1), \quad 
q^*_2(\theta_2) = \begin{cases}
q^*_{sym}(\theta_2), & \theta_2 < \theta_0\\
q^*_{sym}(1), & \theta_2 > \theta_0.
\end{cases}
\end{gather}
Here, firm 1 produces optimal symmetric quality at her censored type. Regarding firm 2, if her type is above the threshold, she exits, which amounts to producing the lowest possible quality. For the efficient types, i.e., below the threshold, both firms behave like in the optimal symmetric mechanism. We will later implement the right-censored outcome functions using a novel score floor auction.

When the marginal costs are sufficiently concave, we show that the relaxed problem \hyperlink{(P2)}{(P2)} is solved by the \textbf{left-censored allocation functions}: 
\begin{gather}
z^*_1(\bm\theta) = 1 - z^*_2(\bm\theta), \quad 
z^*_2(\bm\theta) = \mathbb{I}(\tilde \theta_2 < \theta_1)
\end{gather}
where $\tilde \theta_2 = \max(\theta_0,\theta_2)$. In other words, the allocation is efficient, but as if the type of firm 2 was censored from the left. The associated \textbf{left-censored quality functions} are defined as below:
\begin{gather}\label{q_ceiling}
q^*_1(\theta_1) = \begin{cases} q^*_{sym}(\theta_1), & \theta_1 > \theta_0\\
q^*_{sym}(0), & \theta_1 < \theta_0
\end{cases}, \quad
q^*_2(\theta_2) = q^*_{sym}(\tilde \theta_2)
\end{gather}
Here, firm 2 produces optimal symmetric quality at her censored type. Regarding firm 1, if her type is below the threshold, she becomes the sole supplier, producing quality at the optimal monopolistic level $q_{sym}^*(0)$. For the inefficient types, i.e., above the threshold, both firms behave like in the optimal symmetric mechanism. 

\begin{figure}[t!]
\centering
\includegraphics[scale=0.9,width = .48\linewidth]{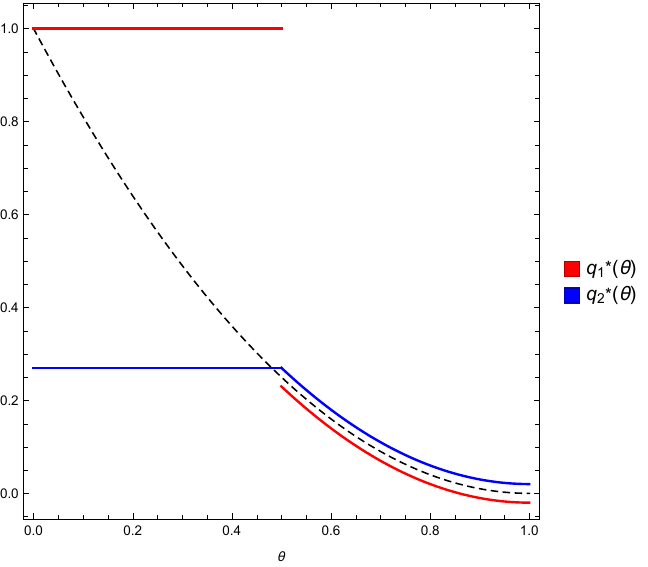}
\includegraphics[scale=0.9,width = .402\linewidth]{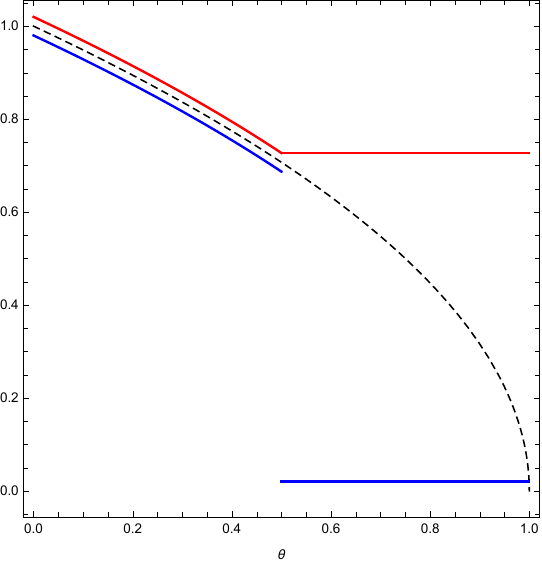}
\caption{Quality function for the (favored) firm 1 and (unfavored) firm 2 in the score ceiling (left) and score floor (right), given separable cost structure with $g(q)=q^{\gamma}/\gamma$.}
\label{fig:figuretwo}
\end{figure}

In both cases, firm 1 is the ``favored'' firm while firm 2 is the ``unfavored'' one, but the censoring is applied to different firms. The typical optimal allocation and quality functions are illustrated in \cref{fig:figureone,fig:figuretwo} for some $\theta_0 \in [0,1]$. Left figure corresponds to left-censored outcomes (score ceilings), right figure corresponds to right-censored outcomes (score floors). 

We summarise in the proposition below.

\begin{prop}\label{main result}
Suppose $n=2$ and \cref{separablestructure} holds. Then \hyperlink{(P2)}{(P2)} is solved by the right-censored allocation and quality functions when the marginal costs are sufficiently convex and by the left-censored allocation and quality functions when the marginal costs are sufficiently concave.
\end{prop}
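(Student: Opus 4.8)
The plan is to reformulate \hyperlink{(P2)}{(P2)}, via the cost-minimization step of \cref{sec:suboptimal}, as the problem of choosing functions $x_1(\theta_1),x_2(\theta_2)$ to maximize $\mathbb{E}\left(\max_i x_i(\theta_i)-\sum_i C(x_i(\theta_i),\theta_i)\right)$ subject only to $x_i$ being a valid virtual-surplus function (and, crucially, each $x_i$ depending on $\theta_i$ only). Under \cref{separablestructure} the indirect cost $C(x,\theta)$ is explicit: the quality equation $q-\alpha J(\theta)=x$ (using $\tilde C^P(q,\theta)=\alpha J(\theta)$) gives $q=x+\alpha J(\theta)$, so $C(x,\theta)=\tilde C^I(x+\alpha J(\theta))$ for the appropriate virtual version of $g$. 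The single-crossing structure of $J$ and of $g'$ is what makes the ``sufficiently convex/concave'' hypothesis bite. I would first record the monotonicity and supermodularity of $C$ (citing \cref{C-properties}) and note that WLOG any optimum has each $x_i$ nonincreasing, so the allocation $\max_i x_i$ is realized by the lower-type firm on the region where neither is censored.

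The core of the argument is a variational / pointwise-exchange analysis. Since $x_i$ cannot be chosen state-by-state (it is a function of $\theta_i$ alone), I would set up the Lagrangian/Euler-type necessary conditions for an interior optimum: differentiating the objective in the direction of a perturbation $\delta x_i(\theta_i)$ yields, for each $\theta_i$, a condition balancing (a) the marginal gain from raising $x_i$ when firm $i$ wins — which involves $\textit{PW}(\theta_i)$, the probability that $x_i(\theta_i)>x_j(\theta_j)$ — against (b) the marginal indirect cost $C_x(x_i(\theta_i),\theta_i)$ integrated over $\theta_i$'s realization. For the symmetric candidate this reproduces \cref{optimalsym}; the question is whether a one-sided deviation that ``flattens'' one firm's schedule on an interval of worst types raises the objective. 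I would show that the sign of the second-order / exchange term is governed exactly by the monotonicity of $q\mapsto\alpha\xi(g'(q))-q$: when this map is strictly quasi-convex, the gain from censoring firm 1 from the right (pooling its worst types at $\theta_0$, letting firm 2 exit above $\theta_0$) dominates, because the saved duplicated investment cost outweighs the efficiency loss near the top; when it is strictly quasi-concave, the symmetric gain from making firm 1 the sole efficient supplier on $[0,\theta_0]$ (left-censoring firm 2) dominates instead. Establishing that these one-sided modifications are not merely local improvements but global optima is where I would lean hardest on the structure: I would argue that, holding the ``shape'' fixed, the objective as a function of the single cutoff $\theta_0$ is quasi-concave (as in the worked example $U(\theta_0)$ of \cref{sec:suboptimal}), and that no outcome outside the censored family can do better, using an exchange argument that pushes any candidate toward monotone, single-crossing $x_i$'s and then toward one of the two canonical censored forms.

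Concretely, the steps in order: (1) pass to the $(x_1,x_2)$ formulation and record properties of $C$; (2) reduce to monotone $x_i$ and parametrize the class of ``censored'' candidates by a cutoff $\theta_0$ and which firm is favored; (3) compute the buyer's objective $U(\theta_0)$ along this class in closed-ish form and identify the first-order condition for the optimal $\theta_0$, relating it to $\alpha\xi(g'(q_{sym}^*(\theta_0)))=\cdots$; (4) show that under sufficient convexity the right-censored family contains the global optimum and under sufficient concavity the left-censored family does — the key lemma being a sign computation of a cross-partial/exchange functional that flips exactly at the quasi-convex/quasi-concave boundary; (5) verify the candidate satisfies the constraints of \hyperlink{(P2)}{(P2)} (nonnegativity of $z_i$, summation to one, $q_i\ge 0$), which is immediate from the formulas, and confirm the quality functions \eqref{q_floor}, \eqref{q_ceiling} indeed realize the corresponding $x_i$ via $q=x+\alpha J(\theta)$ with the censoring.

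\textbf{Main obstacle.} The hard part will be step (4): moving from ``the symmetric solution is not a local max when $\gamma$ is large'' (the easy direction, already seen in \cref{symmetricbad2}) to ``the right-/left-censored outcome is the \emph{global} optimum of a non-pointwise variational problem.'' Because $x_i$ is tied across all states of the world, the usual pointwise arguments fail, and one must rule out exotic schedules (e.g., censoring both firms, or censoring on an interior interval rather than a tail). I expect the resolution to require showing that any feasible $(x_1,x_2)$ can be monotonically improved toward a censored pair without ever decreasing the objective, with the direction of improvement (right vs.\ left) dictated by the single scalar invariant — the quasi-convexity vs.\ quasi-concavity of $q\mapsto\alpha\xi(g'(q))-q$ — so that convexity and concavity are genuinely the two exhaustive regimes. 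Pinning down that this invariant exactly controls the global comparison, rather than merely the local second-order condition, is the crux.
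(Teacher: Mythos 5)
Your overall set-up — passing to the $(x_1,x_2)$ formulation, recording $C(x,\theta)=g(x+\alpha J(\theta))$ under \cref{separablestructure}, reducing to monotone $x_i$, and identifying the quasi-convexity/quasi-concavity of $q\mapsto\alpha\xi(g'(q))-q$ as the scalar invariant controlling the shape of the solution — is on target, and you correctly flag step (4) as the crux. But step (4) is precisely where your proposal stops being a proof and becomes a research program: you write that you would need an exchange argument showing any feasible $(x_1,x_2)$ can be pushed toward a censored pair without decreasing the objective, and that you ``expect'' the same invariant to control the global rather than merely local comparison. That is the entire theorem. As you yourself note, the pointwise variational machinery fails because $x_i$ is tied across all states, so exotic schedules (interior censoring windows, alternating favored intervals, mutual censoring) need to be ruled out; no exchange argument is exhibited that does so, and it is not clear that the FOC-plus-quasiconcavity-in-$\theta_0$ calculation along the censored family (which only compares candidates \emph{inside} the family) extends to a global statement.

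The paper takes a different route and thereby sidesteps that obstacle entirely: rather than proving global optimality directly, it establishes a problem equivalence with \cite{zhang2017auctions}. Concretely, it applies \cref{move_action} to shift $V(q)$ from principal to agent, \cref{invariance} to show that under the additive separability of \cref{separablestructure} the optimum is unchanged when the action is made non-contractible, and \cref{to_Zhang} to re-parametrize the procurement problem as Zhang's optimal-selling problem with types $\mu=1-\theta$. The translated invariant becomes $\alpha\bigl(\mu-\tfrac{1-F^{\mu}(\mu)}{f^{\mu}(\mu)}\bigr)-g'^{-1}(F^{\mu}(\mu))$, whose strict quasi-convexity forces it to be decreasing, increasing, or decreasing-then-increasing, and Zhang's Theorem~1 / Corollary~1 (suitably generalized from quadratic to general $g$) then hand over the symmetric / fully asymmetric / favored-interval trichotomy, which translates back into the right-censored family. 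The analogous argument covers the quasi-concave case. In short, the paper outsources the global variational argument you identify as the crux, whereas your plan would require re-deriving it from scratch; as written, that derivation is missing.
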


The result of \cref{main result} is especially clear when the type distribution is uniform.\footnote{Note that \cref{g'''} implies that quadratic (total) investment costs, which are often assumed, are a knife-edge case with $g'''=0$.} 

\begin{coro}\label{g'''}
Suppose that in the setting of \cref{main result} it additionally holds that $\theta_i$ is distributed uniformly on [0,1]. Then:
\begin{enumerate}
    \item If $g'''(q)>0$, the optimal allocation and quality functions are right-censored.
    \item If $g'''(q)<0$, the optimal allocation and quality functions are left-censored.
\end{enumerate}
\end{coro}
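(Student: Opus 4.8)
The plan is to derive \Cref{g'''} directly from \Cref{main result} by specializing the convexity condition to the uniform distribution and then translating the sign of $g'''$ into strict quasi-convexity or quasi-concavity of the relevant one-dimensional map. Recall that with $\theta_i \sim U[0,1]$ we have $f \equiv 1$, $F(\theta)=\theta$, so $J(\theta)=2\theta$, $F^{-1}(1-z)=1-z$, and hence $\xi(z)=1-J(1-z)=1-2(1-z)=2z-1$. Thus the function whose convexity governs the shape of the optimum is
\begin{equation*}
\varphi(q) := \alpha\,\xi(g'(q)) - q = 2\alpha\, g'(q) - q - 1.
\end{equation*}
So the first step is just this substitution, exactly as already noted in the text after the definition of ``sufficiently convex'' marginal costs.

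The second step is to check that $\varphi$ is strictly quasi-convex when $g'''>0$ and strictly quasi-concave when $g'''<0$. Since $\varphi''(q)=2\alpha\, g'''(q)$, the sign of $g'''$ is exactly the sign of $\varphi''$. If $g'''(q)>0$ everywhere, then $\varphi$ is strictly convex, hence strictly quasi-convex; if $g'''(q)<0$ everywhere, $\varphi$ is strictly concave, hence strictly quasi-concave. (Strict convexity implies strict quasi-convexity: any strictly convex function has at most one local minimum and no interior local maximum, which is precisely the strict quasi-convexity condition; similarly on the concave side.) Then \Cref{main result} applies verbatim: sufficiently convex marginal costs give the right-censored solution, sufficiently concave marginal costs give the left-censored solution, which is the claimed dichotomy.

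There is really only one thing to be slightly careful about, and it is the main (minor) obstacle: one should make sure $\varphi$ is twice differentiable and that the convexity/quasi-convexity notions used in \Cref{main result} are the pointwise ones on the relevant domain (the range of quality values that actually arise, i.e.\ $q \in [0, q^*_{sym}(0)]$ or wherever $g'$ lands in the image of $\xi^{-1}$). Under \Cref{separablestructure} we have $g'>0$, $g''>0$, and the statement of the corollary adds a global sign assumption on $g'''$, so $\varphi$ is $C^2$ on the relevant interval and $\varphi''$ has constant sign there; no boundary subtleties arise. Hence the proof is essentially a two-line computation of $\xi$ under uniformity followed by the observation $\varphi'' = 2\alpha g'''$ and an appeal to \Cref{main result}. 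I would also remark, as the paper does in the footnote, that $g'''\equiv 0$ (quadratic total costs) is the knife-edge case where $\varphi$ is affine and neither strict condition holds, so the corollary deliberately excludes it.
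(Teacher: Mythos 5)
Your proof is correct and follows essentially the same route the paper uses: specialize $\xi$ to the uniform case to get $\alpha\xi(g'(q))-q=2\alpha g'(q)-q-1$, note its second derivative is $2\alpha g'''(q)$ so its sign is that of $g'''$, conclude strict convexity/concavity (hence strict quasi-convexity/quasi-concavity), and invoke \cref{main result}. This matches the paper's remark immediately preceding the corollary, which gives the same substitution and observation without further elaboration.
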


Finally, recall that, without the symmetry restriction, the buyer's payoff is:
\begin{equation*}
U_{asy}=\mathbb{E}\sum_{i=1}^n \left(x(q_i, \theta_i) z_i(\theta)-\tilde{C}^I(q_i,\theta_i)\right).
\end{equation*}
Substituting $(q^*_1(\theta),q^*_2(\theta))$ into $U_{asy}$ yields an upper bound $\overline U_{asy}$ on the buyer's utility in the full problem \hyperlink{P}{(P)}. It remains to show that it can also be attained.

\subsection{Implementation}\label{sec:implementation}
Now, we describe the implementation of the optimal quality functions. Similarly to \cref{sec:symmimplement}, we will show the implementability of all left- and right-censored outcome functions, defined by formulas \eqref{q_floor} and \eqref{q_ceiling} for some threshold $\theta_0 \in [0,1]$, and thus the optimal ones.%

\subsubsection{Score floor auction}
In this section, we aim to implement the optimal outcome and quality functions when they are right-censored.

Recall from \cref{symmimplement} that the optimal symmetric quality $q^*_{sym}(\theta)$ can be implemented by a first-score auction with a quasi-linear score function $S(q,p)=s(q)-p$ where $s(q)$ is defined by \eqref{scoredef}. Furthermore, under \cref{separablestructure}, the score function is truthful in the optimal symmetric mechanism, and the equilibrium score strategy takes the form
\begin{align}
S^*_{sym}(\theta) & = V(q^*_{sym}(\theta))-\alpha\theta-\frac{g(q^*_{sym}(\theta))+\textit{IR}(\theta,1)}{\textit{PW}_{sym}(\theta)}.
\end{align}

We introduce a few new details into the traditional design of the first-score auction. First, there is a level $\underline{S}$ below which the score does not count, i.e., the \textbf{score floor}. Effectively, this means that each firm faces a reserve price of $s(q_i) - \underline{S}$, where $q_i$ is her quality. Submitting below the score floor is equivalent to exiting the auction. Second, the favored firm gets a \textbf{bonus} $B$ if her score surpasses $\underline S$ \emph{regardless of winning}. Third, in the case of a tie, the favored firm wins. Apart from that, the auction proceeds as usual. 

Formally, the mechanism is such that messages are the prices offered by the firms.
\begin{itemize}
    \item For the favored bidder, the allocation and transfers are: 
    \begin{align*}z_1(m_1,m_2,q_1,q_2) & =\mathbb{I}(s(q_1)-m_1\ge \max\{s(q_2)-m_2,\underline{S}\})\\
    t_1(m_1,m_2,q_1,q_2) &= m_1\cdot z_1(m_1,m_2,q_1,q_2)+B\cdot \mathbb{I}(s(q_1)-m_1\ge \underline{S}).
    \end{align*}
    \item For the unfavored bidder, the allocation and transfers are:
    \begin{align*}z_2(m_1,m_2,q_1,q_2) &= \mathbb{I}(s(q_2)-m_2>\max\{s(q_1)-m_1,\underline{S}\})\\
    t_2(m_1,m_2,q_1,q_2) &= m_2\cdot z_2(m_1,m_2,q_1,q_2).\end{align*}
\end{itemize}

We will refer to this mechanism as the \textbf{score floor auction} for brevity.

\begin{prop}\label{implement-1}
Suppose $n=2$ and \cref{separablestructure} holds. For any $\theta_0\in[0,1)$, the right-censored outcome and quality functions can be implemented by a score floor auction with a quasi-linear score \eqref{scoredef}, a score floor and bonus equal to $$ \underline S=S^*_{sym}(\theta_0) + \frac{\textit{IR}(\theta_0,1)}{1-F(\theta_0)}, \quad B=\alpha(1-F(\theta_0))(1-\theta_0),$$ and the equilibrium score strategies $S^*_i(\theta) = \max \{S^*_{sym}(\theta)+\frac{\textit{IR}(\theta_0,1)}{1-F(\theta)},\underline{S}\}$ for both firms.
\end{prop}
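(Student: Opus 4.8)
The plan is to verify that the proposed score floor auction has a BNE in which each firm of type $\theta$ submits the equilibrium score $S^*_i(\theta)=\max\{S^*_{sym}(\theta)+\textit{IR}(\theta_0,1)/(1-F(\theta)),\underline S\}$ via an appropriate price, produces the right-censored quality $q^*_i(\theta)$, and that the resulting allocation and transfers coincide with the right-censored outcome functions. First I would observe that, since $S^*_{sym}$ is decreasing (\cref{cost_properties,regularity}) and the shift $\textit{IR}(\theta_0,1)/(1-F(\theta))$ is increasing in $\theta$, the ``raw'' score $S^*_{sym}(\theta)+\textit{IR}(\theta_0,1)/(1-F(\theta))$ is still decreasing, so the truncation at $\underline S$ happens exactly at $\theta=\theta_0$ — this is where the definition of $\underline S$ is pinned down. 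Consequently, for $\theta<\theta_0$ both firms score strictly above $\underline S$ and compete as in the symmetric first-score auction (shifted by a constant), so the efficient firm wins; for $\theta>\theta_0$ a firm's score equals $\underline S$, and by the tie-breaking rule the favored firm 1 still wins against an unfavored firm 2 also at the floor, while firm 2 only wins when firm 1's type is below $\theta_0$. Matching these cases against \eqref{right_allocation} and \eqref{q_floor} gives the allocation; the bonus $B$ is then reverse-engineered so that firm 1's interim payoff is continuous (and in fact constant in the relevant range) across $\theta_0$, which is exactly the binding-IR normalization — I would check that $B=\alpha(1-F(\theta_0))(1-\theta_0)$ is the unique value making type $\theta_0$ of firm 1 indifferent to marginally dropping below the floor.

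The substantive work is ruling out all deviations, and here I would lean on \cref{symmimplement} rather than redo it from scratch. The key reduction: conditional on submitting a score strictly above $\underline S$ and winning with positive probability, the firm faces exactly the incentives of the symmetric first-score auction with quasi-linear rule \eqref{scoredef} (the score floor only truncates the opponent's effective distribution and adds a constant to transfers for firm 1), so the double-deviation argument of \cref{symmimplement} — decompose any $(q,S)$ deviation into a move along the best-response curve $\hat q(S\mid\theta)$ and a vertical move in $q$ — applies verbatim to show that $q^*_{sym}$ is optimal along the chosen score line and the chosen score is optimal along $\hat q(S\mid\theta)$. What remains are the genuinely new deviations: (a) a firm at the floor ($\theta>\theta_0$) considering jumping strictly above $\underline S$ to win more often; (b) the unfavored firm considering matching the floor to exploit… it cannot, since ties go to firm 1; (c) the favored firm considering submitting below $\underline S$ to forgo the contract but — crucially — it would also forgo the bonus $B$, which is why $B$ must be tied to the floor; (d) joint quality–price deviations that move a firm across the $\underline S$ threshold, where the quality choice changes the reserve $s(q_i)-\underline S$ itself.

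For (a), I would show that for $\theta>\theta_0$ the marginal gain from a slightly higher score (winning against firm-1 types in a small interval below $\theta_0$, or breaking more ties) is outweighed by the price concession, using that $S^*_{sym}(\theta)$ was already the first-score best response in the untruncated problem and $\theta>\theta_0$ makes the firm even less eager to win; monotonicity of $q^*_{sym}$ and the single-crossing structure from \cref{cost_properties} do the rest. For (c) and (d), the computation reduces to comparing the payoff of ``participate at quality $q^*_i(\theta)$, score $S^*_i(\theta)$, collect $B$ if favored'' against ``exit'': since below the floor the firm wins nothing and (if unfavored) gets zero, and (if favored) gets only $C^I$-type sunk costs back through no channel, exiting is dominated precisely when $B$ is large enough for firm 1 — and $B=\alpha(1-F(\theta_0))(1-\theta_0)$ is exactly the threshold, by construction. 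I expect the main obstacle to be (d): a joint deviation where the firm lowers quality (shrinking its own reserve-price handicap $s(q_i)-\underline S$, since $s$ is increasing) while repricing, because this couples the quality and price choices in a way that the clean $\hat q(S\mid\theta)$-decomposition of \cref{symmimplement} does not immediately cover near the kink at $\underline S$; handling it requires checking that the best response along the truncated frontier is still single-peaked at $(q^*_i(\theta),S^*_i(\theta))$, which I would do by verifying that crossing the floor from above strictly lowers the continuation payoff (the firm loses the winner-pay margin and, for firm 1, the bonus) so the optimum never lies on the exit side of the kink.
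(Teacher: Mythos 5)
Your overall route is the paper's route: conjecture the censored strategies, observe that for scores strictly above $\underline S$ the (quality-maximized) profit differs from the symmetric-auction profit $\Pi_{\max}^{sym}(\tau|\theta)$ only by a constant ($-\textit{IR}(\theta_0,1)+B$ for the favored firm, $-\textit{IR}(\theta_0,1)$ for the unfavored one), so the double-deviation analysis of \cref{symmimplement} (and its single-peakedness corollary, \cref{single-peaked}) disposes of all deviations above the floor, leaving only floor/exit comparisons. However, one concrete step in your plan would fail: your proposed pinning condition for the bonus. You say you would verify that $B=\alpha(1-F(\theta_0))(1-\theta_0)$ is the unique value making type $\theta_0$ of firm 1 indifferent to marginally dropping below the floor. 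It is not: type $\theta_0$'s floor payoff is $\Pi_{\max}^{sym}(\theta_0|\theta_0)-\textit{IR}(\theta_0,1)+B=B$, while dropping below the floor yields $0$, so ``indifference of type $\theta_0$'' would force $B=0$. The binding constraint is instead the participation of the \emph{worst} favored type $\theta=1$: since $\Pi_{\max}^{sym}(\theta_0|\theta)$ is decreasing in $\theta$ and, under \cref{separablestructure}, $\Pi_{\max}^{sym}(\theta_0|1)-\Pi_{\max}^{sym}(\theta_0|\theta_0)=-\alpha(1-F(\theta_0))(1-\theta_0)$, the stated $B$ is exactly the amount that lifts type $\theta=1$'s floor payoff to zero (and hence all $\theta>\theta_0$ to nonnegative payoffs). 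Your ``continuity of the interim payoff at $\theta_0$'' criterion also cannot pin $B$, since the equilibrium payoff is continuous at $\theta_0$ for any $B$.

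Two smaller points. First, your monotonicity claim for the raw score is not valid as stated: a decreasing function plus an increasing shift need not be decreasing. The correct argument identifies $S^{*}_{sym}(\theta)+\textit{IR}(\theta_0,1)/(1-F(\theta))$ with $S^{*}(\theta|\theta_0)$ of \eqref{Sdef} and invokes \cref{Sdecreasing}, which uses $\textit{IR}(\theta,\theta_0)\geq 0$ on $\theta\leq\theta_0$. Second, you never verify the pooled \emph{quality} choices at the floor, which is where \cref{separablestructure} actually bites: a favored type $\theta>\theta_0$ bidding $\underline S$ wins with probability $1-F(\theta_0)$ independent of its type and, because type and quality do not interact in costs, optimally chooses $q^{*}_{sym}(\theta_0)$, while an unfavored type at the floor loses for sure and chooses $q^{*}_{sym}(1)=0$; this is needed to match \eqref{q_floor}, and it also resolves your worry (d), since at the floor the quality problem decouples from the price in exactly this way.
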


Similarly to \cref{symmimplement}, to prove \cref{implement-1}, we take care of (double) deviations in score and quality, paying special attention to the partial asymmetry of the allocation and quality functions.  

What is the purpose of the bonus $B$? The score floor alone would not suffice, as it would cut off all relatively inefficient $(\theta>\theta_0)$ types of \emph{both} firms~--- they would decline to participate due to their maximal profit being negative. The bonus restores the incentives to participate for the inefficient types of the favored firm. 

\begin{thm}\label{th2}
Suppose $n=2$ and \cref{separablestructure} holds. If marginal investment costs are sufficiently convex, there exists $\theta_0\in[0,1]$ such that the score floor auction is an optimal mechanism, i.e., it achieves the bound $\overline{U}_{asy}$ and solves the full problem \hyperlink{(P)}{(P)}.
\end{thm}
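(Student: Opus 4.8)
\textbf{Proof plan for \cref{th2}.}

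The plan is to combine the two building blocks already in place: \cref{main result}, which tells us that under sufficient convexity of marginal costs the relaxed problem \hyperlink{P2}{(P2)} is solved by the right-censored allocation and quality functions for \emph{some} threshold $\theta_0\in[0,1]$; and \cref{implement-1}, which tells us that for \emph{every} threshold $\theta_0\in[0,1)$ the right-censored outcome and quality functions are implemented as a BNE of a score floor auction with the stated score floor $\underline S$ and bonus $B$. The logical skeleton is then: (i) pick the optimal $\theta_0^\ast$ from \cref{main result}; (ii) if $\theta_0^\ast<1$, feed it into \cref{implement-1} to get a concrete score floor auction; (iii) verify that the buyer's realized payoff in that auction's equilibrium equals $\overline U_{asy}$, so that the upper bound from the relaxation is attained and hence, by the general argument in \cref{sec:setup} (the relaxed problem's optimum is an upper bound on \hyperlink{P}{(P)}), the score floor auction solves \hyperlink{P}{(P)}; (iv) handle the boundary case $\theta_0^\ast=1$ separately.

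First I would record that the relaxations leading from \hyperlink{P}{(P)} to \hyperlink{P2}{(P2)} only \emph{enlarge} the feasible set / weaken the constraints, so $\sup\,\hyperlink{P}{(P)}\le \overline U_{asy}$ where $\overline U_{asy}$ is the value of \hyperlink{P2}{(P2)} evaluated at the optimal (here, right-censored) outcome functions — this is exactly the ``upper bound'' logic spelled out in the setup. Then the substantive content is that the score floor auction of \cref{implement-1}, run at $\theta_0=\theta_0^\ast$, has a BNE whose \emph{outcome functions} $\bm z(\bm\theta), \bm t(\bm\theta), q_i(\theta_i)$ coincide with the right-censored ones; \cref{implement-1} gives precisely this, including that the equilibrium score strategies are $S^\ast_i(\theta)=\max\{S^\ast_{sym}(\theta)+\textit{IR}(\theta_0,1)/(1-F(\theta)),\underline S\}$ and that all IC and IR constraints of \hyperlink{P}{(P)} hold along the way (this is the point of proving an auction-style implementation rather than a menu). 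Since the buyer's payoff $U$ depends on a mechanism only through its equilibrium outcome functions, the realized value is the same $U_{asy}$ that \cref{main result} certified equals $\overline U_{asy}$. Combining with step (i)'s inequality pins the value of \hyperlink{P}{(P)} at $\overline U_{asy}$ and exhibits the score floor auction as a maximizer.

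The one genuinely delicate point — and the step I expect to be the main obstacle — is the treatment of the transfers and the participation/individual-rationality accounting, in two respects. First, the upper-bound argument via \hyperlink{P2}{(P2)} used the envelope normalization ``profit of the worst-off type is zero'', but in the right-censored design the ``worst-off'' type differs across the two firms (the favored firm's inefficient types are kept in by the bonus $B$, the unfavored firm's inefficient types self-select out), so one must check that the transfer formulas underlying \cref{implement-1} are exactly those that make the relaxed-problem objective integrate to $\overline U_{asy}$ — i.e.\ that the bonus $B=\alpha(1-F(\theta_0))(1-\theta_0)$ and floor $\underline S$ are chosen so that no extra rent is handed out beyond what the relaxation already priced in. Second, there is the boundary case $\theta_0^\ast=1$, which \cref{implement-1} explicitly excludes ($\theta_0\in[0,1)$): here the right-censored mechanism degenerates to the optimal symmetric mechanism, which is already implemented by the first-score auction of \cref{th1}, so the theorem's conclusion holds with the score floor being vacuous ($\underline S$ at the bottom of the score range, $B=0$). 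I would dispatch this by remarking that $\theta_0^\ast=1$ reduces to \cref{th1} and otherwise invoke \cref{implement-1} at $\theta_0^\ast<1$; once those bookkeeping issues are settled the rest is the routine ``relaxation value attained $\Rightarrow$ optimal'' closing argument.
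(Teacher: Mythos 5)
Your proposal is correct and matches the paper's intended route: the paper does not give an explicit standalone proof of \cref{th2} but clearly obtains it by composing \cref{main result} (the right-censored profiles solve the relaxed problem \hyperlink{P2}{(P2)} at some $\theta_0^\ast$) with \cref{implement-1} (the score floor auction implements exactly those profiles as a BNE with zero profit at $\theta=1$ for both firms), plus the standard ``relaxation value attained $\Rightarrow$ optimal'' closure. Your handling of the boundary case $\theta_0^\ast=1$ via \cref{th1} and your flagging of the rent-accounting point are both apt; the latter is in fact already resolved inside the proof of \cref{implement-1}, which verifies that the least-efficient type $\theta=1$ of \emph{each} firm earns exactly zero, so the bonus $B$ and floor $\underline S$ hand out no rent beyond what the envelope normalization in \hyperlink{P2}{(P2)} already built in.
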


\subsubsection{Score ceiling auction}

Similar to how the score floor auction implemented the right-censored outcome and quality functions in the previous section, we aim to implement the left-censored outcome and quality functions with a modified scoring auction.

This new auction design features a \textbf{score ceiling} $\bar{S}$ --- the maximally possible level of the score and a \textbf{kickback} $K$, paid by the favored firm only if it won the contract by bidding a score equal to the score ceiling. Similarly to the score floor auction, the favored firm wins in case of a tie. Thus, bidding a score equal to the score ceiling $\bar{S}$ guarantees a victory for the favored firm; it pays the kickback only in this case of sure victory.

Formally, a score ceiling auction is a mechanism such that messages are the prices offered by the firms.    
\begin{itemize}
\item For the favored bidder, the allocation and transfers are: 
    \begin{align*}z_1(m_1,m_2,q_1,q_2) &= \mathbb{I}(\overline{S}\ge s(q_1)-m_1\ge s(q_2)-m_2)\\
    t_1(m_1,m_2,q_1,q_2) &= m_1\cdot z_1(m_1,m_2,q_1,q_2)-K\cdot \mathbb{I}(\overline{S}=s(q_1)-m_1\ge s(q_2)-m_2).\end{align*}
    \item For the unfavored bidder, the allocation and transfers are:
    \begin{align*}z_2(m_1,m_2,q_1,q_2) &= \mathbb{I}(\overline{S}\ge s(q_2)-m_2>s(q_1)-m_1)\\
    t_2(m_1,m_2,q_1,q_2) &= m_2\cdot z_2(m_1,m_2,q_1,q_2).\end{align*}
\end{itemize}

We will refer to this mechanism as the \textbf{score ceiling auction}.

\begin{prop}\label{implement-2}
Suppose $n=2$ and \cref{separablestructure} holds. For any $\theta_0\in[0,1)$, the left-censored outcome and quality functions can be implemented by a score ceiling auction with a quasi-linear score \eqref{scoredef}, a score ceiling $\bar{S} = S_{sym}^*(\theta_0)$ and kickback 
\begin{eqnarray}
\label{kickback1}
K = \max_q[V(q)-g(q)]-\max_q\left[V(q)-\frac{g(q)}{1-F(\theta_0)}\right]+\alpha\frac{F(\theta_0)}{1-F(\theta_0)}\int_{\theta_0}^1(1-F(u))du.
\end{eqnarray}
and the equilibrium strategies $S^*_i(\theta)=\min\{S^*_{sym}(\theta),\bar{S}\}$ for both firms.
\end{prop}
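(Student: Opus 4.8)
The plan is to follow the template of the proofs of \cref{symmimplement} and \cref{implement-1}: exhibit the candidate equilibrium strategies, check that on path they generate the left-censored allocation and quality functions \eqref{q_ceiling}, and then rule out all (possibly double) deviations, which is where the kickback gets pinned down. First I would record what \cref{separablestructure} buys us: by \eqref{score_gap} the optimal score is truthful, $s(q)=V(q)=q$, so a firm that has already committed to quality $q'$ and then chooses its price $m'$ is effectively choosing a score $S'=q'-m'\le q'$; and with $n=2$ we have $\textit{PW}_{sym}(\theta)=1-F(\theta)$, $q^*_{sym}(\theta)$ is the maximizer of $q\mapsto V(q)-g(q)/(1-F(\theta))$, and $q^*_{sym}(0)$ is the monopoly quality solving $V'=g'$. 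The candidate is: both firms play $S^*_i(\theta)=\min\{S^*_{sym}(\theta),\bar S\}$ with $\bar S=S^*_{sym}(\theta_0)$; firm~1 produces $q^*_{sym}(\theta_1)$ if $\theta_1>\theta_0$ and $q^*_{sym}(0)$ if $\theta_1<\theta_0$; firm~2 produces $q^*_{sym}(\max\{\theta_0,\theta_2\})$. Using that $S^*_{sym}$ is strictly decreasing (more efficient types submit higher scores) and that ties break toward firm~1, a direct check shows these strategies reproduce \eqref{q_ceiling}: below $\theta_0$ firm~1 bids the ceiling and wins for sure (becoming the sole supplier), while an efficient firm~2 also bids the ceiling but, losing ties, wins only against an inefficient firm~1.

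Next I would verify firm~2's incentives. A score above $\bar S$ earns nothing, so firm~2 optimizes over $S'\le\bar S$, and on this range its winning probability as a function of the submitted score equals the symmetric one, $1-F\big((S^*_{sym})^{-1}(S')\big)$. Hence firm~2 faces exactly the symmetric problem of \cref{symmimplement} subject to $S'\le\bar S$; invoking the double-deviation decomposition from that proof, for $\theta_2>\theta_0$ the unconstrained optimum $S^*_{sym}(\theta_2)$ is feasible and pulls quality to $q^*_{sym}(\theta_2)$, whereas for $\theta_2<\theta_0$ the desired score $S^*_{sym}(\theta_2)>\bar S$ is infeasible, the constraint binds, and the best response is to bid $\bar S$ with quality solving $g'(q)=1-F(\theta_0)$, i.e.\ $q^*_{sym}(\theta_0)$ --- precisely the left-censored quality.

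The delicate step is firm~1's incentives, and this is what determines $K$. For a deviation $(q',S')$ with $S'<\bar S$, firm~1 faces the symmetric winning-probability function on $[\,\cdot,\bar S)$, so the best attainable value is the symmetric interim profit of mimicking type $\max\{\theta_0,(S^*_{sym})^{-1}(S')\}$; this is attained at $\theta_1$ (value $\textit{IR}(\theta_1,1)$) when $\theta_1\ge\theta_0$, and as $S'\uparrow\bar S$ (with $q'\to q^*_{sym}(\theta_0)$) its supremum is $\textit{IR}(\theta_0,1)+\alpha(\theta_0-\theta_1)(1-F(\theta_0))$ when $\theta_1<\theta_0$. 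At $S'=\bar S$ firm~1 wins with certainty, pays $K$, and optimally produces the monopoly quality, giving $\Pi^{\mathrm{ceil}}(\theta_1)=\max_q[V(q)-g(q)]-\bar S-K-\alpha\theta_1$; for $S'>\bar S$ it wins nothing. I would then choose $K$ so that at $\theta_1=\theta_0$ the interior value $\textit{IR}(\theta_0,1)$ equals $\Pi^{\mathrm{ceil}}(\theta_0)$; unwinding $\bar S=S^*_{sym}(\theta_0)$ and using that $q^*_{sym}(\theta_0)$ maximizes $V(q)-g(q)/(1-F(\theta_0))$ yields exactly \eqref{kickback1}. Comparing the two value functions, set $D(\theta)=\Pi^{\mathrm{ceil}}(\theta)-(\text{interior value at }\theta)$; then $D(\theta_0)=0$ and $D'(\theta)=-\alpha+\alpha(1-F(\theta))=-\alpha F(\theta)<0$ for $\theta>\theta_0$ while $D'(\theta)=-\alpha F(\theta_0)<0$ for $\theta<\theta_0$. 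Hence $D<0$ above $\theta_0$ and $D>0$ below it, so the inefficient types $\theta_1>\theta_0$ strictly prefer the interior strategy (bid $S^*_{sym}(\theta_1)$, quality $q^*_{sym}(\theta_1)$) and the efficient types $\theta_1<\theta_0$ strictly prefer the ceiling (bid $\bar S$, monopoly quality $q^*_{sym}(0)$) --- matching \eqref{q_ceiling}. I would finish with the individual-rationality checks: $g(0)=0$ by \cref{cost_properties}, so the worst type of each firm earns zero, and by the monotonicity of the interim profit every other type earns a non-negative rent; together with the best-response verification this shows no constraint of \hyperlink{P}{(P)} is violated and the left-censored outcome and quality functions are implemented.

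The main obstacle is the firm-1 argument: one must handle a genuinely two-dimensional deviation $(q',m')$ across a region where the winning probability is discontinuous --- the discrete jump to sure victory at the ceiling --- and where the optimal quality switches from ``mimic your symmetric type'' to ``produce the monopoly quality $q^*_{sym}(0)$.'' The single-crossing computation above makes this clean, but it relies on having correctly identified, via the \cref{symmimplement} decomposition, that off the ceiling the continuation game is an exact copy of the symmetric auction; verifying the boundary behavior (the interior supremum is not attained for $\theta_1<\theta_0$, so the comparison must be with the attained ceiling value) and the incidental feasibility constraint $S'\le V(q')$ requires care.
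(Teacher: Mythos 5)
Your proposal is correct and takes essentially the same route as the paper's proof: decompose into score-choice-given-optimal-quality and quality-choice-given-score, observe that off the ceiling the continuation game is an exact copy of the symmetric auction (so \cref{single-peaked} applies), and for firm 1 pin down $K$ by indifference at $\theta_0$ and extend by a single-crossing/monotonicity argument. The only cosmetic difference is that you track the sign of $D'(\theta)=-\alpha F(\cdot)$, whereas the paper verifies the same fact by integrating it out, $\Pi_{\max}^{sym}(\theta|\theta)-\Pi_{\max}^{ceil,1}(\theta_0|\theta,\theta_0)=\alpha\int_{\theta_0}^{\theta}F(u)\,du$ (their \eqref{ineqq}), and in Case 1b compares the ceiling value against the unconstrained interior maximum $\Pi_{\max}^{sym}(\theta|\theta)$ rather than your (slightly tighter, equally valid) constrained supremum $\Pi_{\max}^{sym}(\theta_0|\theta)$.
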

The name ``kickback'' is warranted since $K>0$. To see this, note that $\max_q[V(q)-g(q)]\geq \max_q\left[V(q)-\frac{g(q)}{1-F(\theta_0)}\right]$ as $1/(1-F(\theta))\geq 1$.

We again stress that the role of the side payment $K$ is to provide correct incentives; it is in no way evidence of corruption, which is absent from our model as the buyer and the auctioneer are one and the same. The kickback is needed to ensure that the types $\theta_1>\theta_0$ of the favored firm do not rush to win for sure with the score $\bar{S}$ and thus, sufficient competition with the types $\theta_2>\theta_0$ of firm 2 in the low-score range is maintained.

\begin{thm}\label{th3}
Suppose $n=2$ and \cref{separablestructure} holds. If marginal investment costs are sufficiently concave, there exists $\theta_0\in[0,1]$ such that the score ceiling auction is an optimal mechanism, i.e., it achieves the bound $\overline{U}_{asy}$ and solves the full problem \hyperlink{(P)}{(P)}.
\end{thm}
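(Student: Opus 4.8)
The plan is to combine \cref{main result} with \cref{implement-2}, exactly paralleling the proof of \cref{th2} for the score floor auction. Consider the one-parameter family of left-censored allocation and quality functions indexed by $\theta_0\in[0,1]$. The buyer's payoff $U_{asy}$ evaluated along this family is continuous in $\theta_0$ and $[0,1]$ is compact, so a maximizer $\theta_0^*$ exists. By \cref{main result}, since the marginal investment costs are sufficiently concave, the left-censored outcomes for this $\theta_0^*$ are a global solution of the relaxed problem \hyperlink{P2}{(P2)}; hence the value of \hyperlink{P2}{(P2)} equals $U_{asy}$ evaluated at these functions, which is by definition the bound $\overline{U}_{asy}$. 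Because \hyperlink{P2}{(P2)} is a relaxation of \hyperlink{P}{(P)}, $\overline{U}_{asy}$ is an upper bound on the buyer's payoff across all mechanisms.

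Next I would promote the relaxed-problem optimum to an actual mechanism. Suppose first $\theta_0^*\in[0,1)$. By \cref{implement-2}, the score ceiling auction with quasi-linear score satisfying \eqref{scoredef}, ceiling $\bar S=S^*_{sym}(\theta_0^*)$, and kickback $K$ given by \eqref{kickback1} admits an explicit BNE whose induced allocation, transfer, and quality functions coincide with the left-censored ones for $\theta_0^*$. The proof of \cref{implement-2} checks \emph{all} the constraints of \hyperlink{P}{(P)} --- individual rationality and the double (price-and-quality) deviations included --- so this mechanism, paired with this equilibrium, is feasible in \hyperlink{P}{(P)}, and its buyer payoff equals $U_{asy}$ at the left-censored functions, i.e. $\overline{U}_{asy}$. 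As no feasible mechanism can exceed $\overline{U}_{asy}$, the score ceiling auction attains the bound and is therefore optimal, which is exactly the claim.

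The only loose end is the boundary value $\theta_0^*=1$, which \cref{implement-2} does not cover directly (the kickback \eqref{kickback1} contains $F(\theta_0)/(1-F(\theta_0))$, which diverges there). At $\theta_0^*=1$ the left-censored outcomes degenerate: firm $2$ never wins and supplies zero quality, so the prescribed outcome is sole-sourcing from firm $1$, with the optimal monopoly quality being, under \cref{separablestructure}, the constant level $q^*_{sym}(0)$ solving $g'(q)=1$. This is trivially implementable --- it is the single-supplier contract already encountered in \cref{sec:vanish}, and it is the degenerate $\theta_0\uparrow 1$ endpoint of the score ceiling family --- so $\overline{U}_{asy}$ is still attained by a mechanism of the claimed type. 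I expect the substantive difficulty to lie upstream, in the variational characterization behind \cref{main result} and the double-deviation verification behind \cref{implement-2}; within \cref{th3} itself the only delicate points are this boundary bookkeeping and the (routine) existence of the maximizing $\theta_0^*$.
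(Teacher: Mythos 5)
Your argument is correct and matches the paper's (implicit) proof exactly: the paper proves \cref{th3} simply by combining \cref{main result} (left-censored outcomes solve the relaxed problem \hyperlink{P2}{(P2)} under sufficiently concave marginal costs, hence give $\overline{U}_{asy}$) with \cref{implement-2} (the score ceiling auction implements these outcomes, all double deviations and IR checked), just as you do. Your explicit treatment of the boundary $\theta_0=1$, where the left-censored outcomes degenerate to sole sourcing with quality $g'(q)=1$, is in fact more careful than the paper, which only handles that endpoint informally as the degenerate limit of the score ceiling family.
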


\subsection{Example: constant-elasticity investment costs}\label{sec:example}
In the previous section, we have described some of the optimal mechanisms that can emerge for two players. In this section, we will characterize all optimal mechanisms for two players and a 2-dimensional class of settings such that the type distribution $F$ is uniform, $V(q) = q$, $C^P(q,\theta)=\alpha\theta$ and $C^I(q, \theta)=q^{\gamma}/\gamma$ where $\gamma>1$ is the elasticity of investment costs and $\alpha>0$ is the importance of private information.

\begin{figure}[t!]
\centering
\caption{Optimal mechanism with $n=2$, $F(\theta) = \theta$, $C^P=\alpha\theta$, $C^I=q^{\gamma}/\gamma$.}
\includegraphics[scale=0.9,width = .9\linewidth]{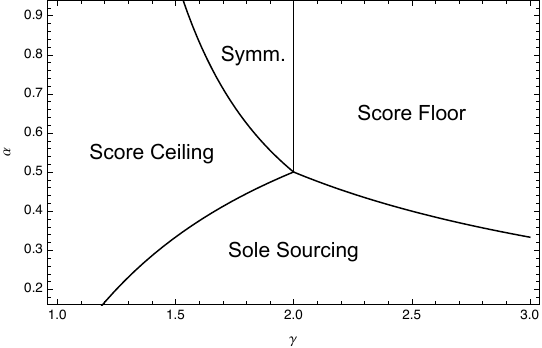}
\label{fig:Ex1}
\end{figure}

The optimal mechanism takes one of four shapes: 
\begin{itemize}
    \item sole sourcing  (optimal 1-bidder mechanism) if $\alpha<\min\left\{1-\frac{1}{\gamma},\frac{1}{\gamma}\right\}$;
    \item a classic scoring auction (optimal symmetric mechanism) if $\gamma< 2$ and $\frac{1}{2(\gamma-1)}<\alpha$; 
    \item a score ceiling auction if $\gamma\leq 2$ and $1-\frac{1}{\gamma}<\alpha<\frac{1}{2(\gamma-1)}$;
    \item a score floor auction if $\gamma>2$ and $\frac{1}{\gamma}<\alpha$.
\end{itemize}
which we illustrate in \cref{fig:Ex1}.\footnote{The locus of parameters for which the optimal mechanism is symmetric can be related to Theorem 2 in \cite{gershkov2021theory}.}

\subsection{Discussion of optimal asymmetric mechanisms}
\subsubsection*{Pooling} The nature of equilibrium supporting our mechanism is that of semi-symmetric and semi-separating. To be precise, the spectrum of types is split by the threshold $\theta_0$. To the one side from the threshold, the equilibrium behaves as a symmetric separating one. On the other side of the threshold, it behaves as an asymmetric (our tie-breaking rule) pooling one. Pooling happens in both score and quality dimensions, but the latter is due to the specific shapes of cost functions in \cref{separablestructure}, and so is not generic. 

We can speculate that for the environments that are neither sufficiently convex nor sufficiently concave, the optimal mechanism would feature ranges of symmetric and asymmetric behavior separated by multiple thresholds and multiple side payments.

\subsubsection*{Side payments} Given the pooling of types at a score floor or ceiling, it is only natural that the transfer experiences a discontinuity (our side payment) at the threshold type to align the IC constraints around the threshold. Without the discontinuity in transfers, the equilibrium utility would be discontinuous due to a sudden change in the probability of winning, so the types next to the threshold would surely deviate. In particular, suppose there was no required kickback in the score ceiling mechanism. In this case, the types of the favored bidder just to the right of the threshold would opt to win without competition, making the outcome more asymmetric than desired. The kickback ensures that the types to the right of the threshold instead choose to compete with the other bidder fairly, thus maintaining exactly the desired degree of outcome symmetry.

It is worth mentioning that score floor and score ceiling mechanisms are not just mirror images of each other. Regardless of the mechanism, every type's equilibrium profit (informational rent) depends on the allocation to all worse types but not better types. This means that the modifications needed to create an asymmetry among the best and among the worst types will be different. To create an asymmetry among the best types without affecting the worst (rightmost) types, one only has to adjust the classical auction design ``for the strong bidders'', so the side payment in the score floor is only paid by the firm that won. A naive attempt to similarly create asymmetry among the worst (rightmost) types would inadvertently interfere with the IC constraints for the best types. Thus, the proper adjustment should affect ``both weak and strong bidders''. As a result, the side payment in the score floor auction, unlike the one in the score ceiling auction, has to be paid regardless of winning. 

\subsubsection*{Ex ante symmetry}

It is also worth mentioning that while the score floor and score ceiling mechanisms feature asymmetric choices of quality and bid, they paradoxically lead to a symmetric distribution of scores. Indeed, the strategies are the same for both favored and unfavored firms for the equilibria suggested in \cref{implement-1} and \cref{implement-2}. Even sole sourcing, the extreme asymmetric mechanism, can be thought of as a special case of either score floor or score ceiling, where two firms arrive with the same (known, constant) score, and one of them wins due to the tie-breaking rule. 

Furthermore, in the range of types where the equilibrium is separating, the profits for the two firms coincide up to a translation constant. Thus, the underlying structure of ex-ante symmetric agents is not entirely lost.

How should the favored bidder be chosen? Of course, the buyer can choose any firm at random. Crucially, the firms should know if they are favored before deciding on their qualities at the announcement of the mechanism. We also recommend not delegating the choice of the favored firm to an independent party, as the latter may want to extort the extra surplus granted by the favored status. To summarize, the choice of the favored firm should be made publicly and credibly.

\subsubsection*{Second-score implementation}

Finally, one may wonder if a dominant-strategy implementation is possible here and whether it would be, in some sense, simpler. Indeed, such an implementation exists when there are no investment costs, see \cite{che1993design,asker2008properties,asker2010procurement}, and is called a second-score auction. Moreover, it has ``truthful bidding'' in the sense that the firms bid so to maximize  the apparent social surplus -- the difference between the buyer's $s(q)$ and firm's $C^P(q)$. Unfortunately, in the presence of investment costs, truthful bidding is not the dominant strategy here, which makes it far less appealing than, for example, truthful bidding in a second-price auction.
\section{Additional results}
In this section, we provide a number of additional results about the asymmetric mechanisms, see \cref{app:additional} for formal proofs.
\label{sec:additional}
\subsection{Symmetry and private information}

We are interested in how the shape of the optimal mechanism changes with the importance of private information $\alpha$.
  
\begin{prop}\label{compstat1}
Suppose $n=2$ and \cref{separablestructure} holds. Then, for the optimal score floor, the optimal threshold $\theta_0^*$ is weakly increasing, while for the optimal score ceiling, it is weakly decreasing in the importance of private information $\alpha$.
Thus, both the optimal score floor mechanism and optimal ceiling mechanism becomes more symmetric as $\alpha$ grows.
\end{prop}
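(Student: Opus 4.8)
The plan is to reduce, under \cref{separablestructure}, the buyer's payoff along the one-parameter family of right-censored (resp. left-censored) outcome functions to an expression that is \emph{affine} in $\alpha$ with an $\alpha$-independent slope in the threshold $\theta_0$, and then run a monotone-comparative-statics argument in $\theta_0$.

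First, under \cref{separablestructure} one has $\tilde C^P(q,\theta)=\alpha J(\theta)$ and $\tilde C^I(q,\theta)=g(q)$, so the virtual production surplus is $x(q,\theta)=q-\alpha J(\theta)$, and since $\sum_i z_i=1$ the reduced objective \eqref{utility_x} becomes
\[
U=\mathbb{E}[q_{w}]-\mathbb{E}\Big[\sum_{i=1}^{2}g(q_i)\Big]-\alpha\,\mathbb{E}[J(\theta_{w})],
\]
where $w$ indexes the winner. Moreover, since $\tilde C^P_q=0$, the first-order condition \eqref{virtual_surplus} reduces to $g'(q)=\textit{PW}_{sym}(\theta)$, so the optimal symmetric quality $q^*_{sym}(\theta)$ does not depend on $\alpha$. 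Plugging the right-censored functions \eqref{right_allocation}--\eqref{q_floor} (and likewise the left-censored ones \eqref{q_ceiling}) indexed by $\theta_0\in[0,1]$ into $U$, the first two terms produce an $\alpha$-free function $A(\theta_0)$, and
\[
U_{\mathrm{fl}}(\theta_0,\alpha)=A(\theta_0)-\alpha\,G_{\mathrm{fl}}(\theta_0),\qquad U_{\mathrm{ce}}(\theta_0,\alpha)=A(\theta_0)-\alpha\,G_{\mathrm{ce}}(\theta_0),
\]
with $G_{\bullet}(\theta_0):=\mathbb{E}[J(\theta_{w})]$ under the corresponding allocation. By \cref{implement-1,implement-2}, $U_{\mathrm{fl}}$ and $U_{\mathrm{ce}}$ are precisely the buyer's payoffs from the score-floor and score-ceiling auctions with threshold $\theta_0$, with $\theta_0=1$ (floor) and $\theta_0=0$ (ceiling) corresponding to the symmetric mechanism.

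The crux is to sign the monotonicity of $G_{\mathrm{fl}}$ and $G_{\mathrm{ce}}$ in $\theta_0$. By \eqref{right_allocation}, the right-censored allocation coincides with the ex-post efficient one except on $\{\theta_1>\theta_0,\ \theta_2>\theta_0\}$, where the favored firm always wins; comparing type by type with the efficient allocation and using that $J$ is increasing,
\[
G_{\mathrm{fl}}(\theta_0)=\mathbb{E}[J(\min(\theta_1,\theta_2))]+\mathbb{E}\big[(J(\theta_1)-J(\theta_2))\,\mathbb{I}(\theta_1>\theta_2>\theta_0)\big].
\]
The correction term is nonnegative and weakly decreasing in $\theta_0$ (its integration region shrinks), so $G_{\mathrm{fl}}$ is weakly decreasing. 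Symmetrically, the left-censored allocation is efficient except on $\{\theta_1<\theta_0,\ \theta_2<\theta_0\}$, which yields
\[
G_{\mathrm{ce}}(\theta_0)=\mathbb{E}[J(\min(\theta_1,\theta_2))]+\mathbb{E}\big[(J(\theta_1)-J(\theta_2))\,\mathbb{I}(\theta_2<\theta_1<\theta_0)\big],
\]
whose correction term is weakly \emph{increasing} in $\theta_0$, so $G_{\mathrm{ce}}$ is weakly increasing.

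Finally, for $\theta_0>\theta_0'$ we have $U_{\mathrm{fl}}(\theta_0,\alpha)-U_{\mathrm{fl}}(\theta_0',\alpha)=[A(\theta_0)-A(\theta_0')]-\alpha[G_{\mathrm{fl}}(\theta_0)-G_{\mathrm{fl}}(\theta_0')]$ with $G_{\mathrm{fl}}(\theta_0)-G_{\mathrm{fl}}(\theta_0')\le0$, so this difference is weakly increasing in $\alpha$; thus $U_{\mathrm{fl}}$ has increasing differences in $(\theta_0,\alpha)$ on the compact lattice $[0,1]$, and (being continuous in $\theta_0$) Topkis's monotone selection theorem gives that the optimal floor threshold $\theta_0^*(\alpha)$ is weakly increasing in $\alpha$; alternatively one argues this directly by comparing the optimality inequalities at two values of $\alpha$. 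The same computation with $G_{\mathrm{ce}}$ increasing shows $U_{\mathrm{ce}}$ has \emph{decreasing} differences, so the optimal ceiling threshold is weakly decreasing in $\alpha$. Since the symmetric mechanism is the $\theta_0=1$ endpoint of the floor family and the $\theta_0=0$ endpoint of the ceiling family, and moving the threshold toward that endpoint shrinks the set of types treated asymmetrically, both optimal mechanisms become more symmetric as $\alpha$ grows. The one genuinely nontrivial step is the region identification and the sign check in the $G_{\bullet}$ formulas; the affine-in-$\alpha$ reduction and the Topkis step are routine, the only care point being the ``weakly'' qualifier and ties, handled by taking extremal selections from the argmax correspondence.
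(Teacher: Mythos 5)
Your proof is correct, and it computes the same quantity as the paper's proof while arriving at it by a slightly different, more structural route. The paper writes the buyer's payoff in terms of $\varphi(z,\theta)=\max_q(qz-g(q))-\alpha J(\theta)z$, differentiates to obtain $U'_{\theta_0}$, and then directly evaluates the cross-partial $U''_{\theta_0,\alpha}=f(\theta_0)(1-F(\theta_0))\bigl(\tfrac{\int_{\theta_0}^1 Jf\,d\theta}{1-F(\theta_0)}-J(\theta_0)\bigr)>0$ for the floor (and the analogous negative expression for the ceiling), concluding by monotone comparative statics. You instead factor the objective as $A(\theta_0)-\alpha\,\mathbb{E}[J(\theta_w)]$ up front, noting that under Assumption~\ref{separablestructure} the quality schedule $q^*_{sym}$ is $\alpha$-free, and then sign the monotonicity of $G(\theta_0)=\mathbb{E}[J(\theta_w)]$ in $\theta_0$ by a misallocation/set-inclusion argument: the censored allocation coincides with the ex-post efficient one except on $\{\theta_1>\theta_2>\theta_0\}$ (floor) or $\{\theta_2<\theta_1<\theta_0\}$ (ceiling), and shrinking (resp. growing) that region yields the sign. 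Differentiating your $G_{\mathrm{fl}}$ in $\theta_0$ reproduces exactly the paper's cross-partial, so the two arguments are algebraically the same; what your version buys is transparency—it makes the affine dependence on $\alpha$ and the role of $J$ being increasing explicit without writing down first-order conditions in $\theta_0$, and it dovetails with the intuition (lower $G$ means the winner's virtual cost is lower, which is what the planner buys by tolerating duplication). The appeal to Topkis also handles corner solutions and non-uniqueness cleanly, where the paper's second-cross-partial argument implicitly treats the interior case. One cosmetic note: in the paper's proof ``$o(\theta)$'' is a typo for $J(\theta)$; your write-up has no such slip.
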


The intuition is related to the main economic trade-off faced by the principal when deciding on the level of mechanism asymmetry. As outlined in the introduction, this is the trade-off between ex-post efficiency and avoidance of duplication of investment costs. On the one hand, the principal would like to award the contract to the most efficient producer to obtain the best possible price-quality combination. However, to find out which one is the most efficient best, one has to treat the firms symmetrically~--- and this will inevitably lead to both firms incurring investment costs. By firms' participation constraints, these costs must ultimately be paid by the principal for both firms. On the other hand, if the principal constrains herself to a sole supplier \emph{a priori}, she will have to compensate only one firm for the investment costs; however, it might not turn out to be the most efficient supplier. In general the principal's dilemma is resolved at some intermediate level of mechanism symmetry. When the importance of private information $\alpha$ grows, the ex-post efficiency motive becomes stronger and the optimal solution moves towards the optimal symmetric mechanism.

Because $\alpha$ is the degree of informational asymmetry between the buyer and the suppliers (the higher $\alpha$, the higher is the degree of the principal's uncertainty about a supplier's production costs), the message of \cref{compstat1} can be formulated as follows: \emph{Informational asymmetry leads to the symmetry of the optimal mechanism}. 
   
\subsection{Symmetry and efficiency}
Even though in this paper we analyze mostly buyer-optimal mechanisms, an interesting question is how the degree of asymmetry compares between the buyer-optimal and the society-optimal (efficient) mechanism. We can easily answer this question when $\theta$ is uniform. %
Somewhat unexpectedly, we find that the efficient mechanism exhibits \emph{more} favoritism than the buyer-optimal mechanism, not less.

\begin{prop}\label{effvsopt}
Suppose $n=2$, $F(\theta) = \theta$ and \cref{separablestructure} holds. Then, for the optimal score floor mechanism, the optimal threshold $\theta_0^*$ is weakly greater, while for the optimal score ceiling, it is weakly lower in the optimal mechanism than in the efficient mechanism.
\end{prop}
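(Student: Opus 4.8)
\textbf{Proof proposal for \cref{effvsopt}.}

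The plan is to set up both the buyer-optimal and the efficient problems as one-dimensional optimization problems over the threshold $\theta_0$ and compare the resulting first-order conditions. Using the reduction to indirect investment costs from \cref{sec:suboptimal}, the buyer's problem with $n=2$ reduces to choosing $x_i(\theta_i)$ to maximize $\mathbb{E}(\max_i x_i(\theta_i)-\sum_i C(x_i(\theta_i),\theta_i))$, and, by \cref{main result}, among right-censored (resp.\ left-censored) outcomes this is a scalar problem in $\theta_0$. For the uniform case, \cref{main result} plus the explicit formulas \eqref{x-example}-type expressions give a closed form for the buyer's utility $U_b(\theta_0)$; the efficient mechanism solves the same kind of problem but with the \emph{true} costs $C^I$ and true surplus $x$ in place of their virtual counterparts $\tilde C^I$, $\tilde x$, i.e.\ it is the $\alpha=0$-rent version where $F(\theta)/f(\theta)$ is dropped from all virtual objects. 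Concretely, I would write the efficient objective $W(\theta_0)$ as the same integral as $U_b$ but with $q^*_{sym}$ replaced by the first-best quality $q^{FB}$ solving $V'(q)=C^P_q+C^I_q/\textit{PW}_{sym}$ and with $\tilde C^I$ replaced by $C^I$.

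Next I would compute $U_b'(\theta_0)$ and $W'(\theta_0)$ and show that at any common candidate point the efficient objective ``wants'' a more asymmetric allocation. The key economic observation, already flagged in the text, is that informational rents inflate the effective weight on private information: the virtual cost $\tilde C^I = C^I + C^I_\theta F/f$ is steeper in $\theta$ than $C^I$, and the virtual surplus spread $x(q,\theta)$ across types is likewise amplified relative to the true surplus. Since \cref{compstat1} establishes that the optimal score-floor threshold $\theta_0^*$ is increasing in the importance of private information $\alpha$ (and the score-ceiling threshold decreasing), and since passing from the true to the virtual problem acts exactly like increasing the weight on private information, the comparison $\theta_{0,b}^* \ge \theta_{0,eff}^*$ for the score floor (resp.\ $\le$ for the score ceiling) should follow by the same monotone-comparative-statics argument used for \cref{compstat1}. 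The cleanest route is therefore: (i) show the efficient problem is a special case of the buyer's problem family with a smaller ``effective $\alpha$''; (ii) invoke (the proof technique of) \cref{compstat1} to get the monotone comparison.

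The step I expect to be the main obstacle is making precise the claim that ``efficiency = buyer-optimality with rents switched off = smaller effective private-information weight,'' in a way that legitimately feeds into the single-crossing / supermodularity argument behind \cref{compstat1}. The subtlety is that dropping $F/f$ from $\tilde C^P$ changes the optimal quality function $q^*_{sym}$ itself (not just a multiplicative constant), so the efficient objective is not literally $U_b$ at a different parameter value — one has to check that the relevant one-dimensional objective in $\theta_0$ still has the single-crossing property in $(\theta_0, \text{rent weight})$ after this substitution. In the uniform case this is tractable because all the objects are explicit (as in \cref{sec:example}), so I would verify single-crossing by direct differentiation there; I would then present the general-$F$ statement as following from the structural argument but anchor the proof in the uniform computation, consistent with the proposition's hypothesis $F(\theta)=\theta$. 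Finally I would record that the two boundary cases ($\theta_0^* \in\{0,1\}$) are handled trivially since then ``more asymmetric'' or ``more symmetric'' is immediate.
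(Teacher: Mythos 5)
Your proposed route is the paper's route: recognize that efficiency amounts to buyer-optimality with the informational-rent term switched off, re-express this as a change in the effective private-information weight $\alpha$, and then apply \cref{compstat1}. So the blueprint is correct. However, the ``main obstacle'' you flag — that dropping $F/f$ from $\tilde C^P$ could change the optimal quality function $q^*_{sym}$ and thereby break the single-crossing argument — is not actually an obstacle under the hypotheses of the proposition, and seeing why collapses your step (ii) into one line.

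Under \cref{separablestructure}, $C^P(q,\theta)=\alpha\theta$ is independent of $q$ and $C^I(q,\theta)=g(q)$ is independent of $\theta$. Consequently $\tilde C^I = C^I$ (no virtualization effect on investment costs at all), and the first-order condition \eqref{virtual_surplus} for the symmetric quality reduces to $1 = g'(q)/\textit{PW}_{sym}(\theta)$, which contains no $\alpha$, no $J(\theta)$, and no $F/f$; the quality function is literally the \emph{same} in the buyer-optimal and efficient problems. The only place virtualization bites is the additive term $\alpha\theta$, and with $F(\theta)=\theta$ one has $\tilde C^P = \alpha\theta + \alpha\theta = 2\alpha\theta$. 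So the efficient problem at parameter $\alpha$ is, term for term, identical to the buyer's problem at parameter $\alpha/2$, and \cref{compstat1} applied to the halved parameter gives exactly the stated inequalities. No separate verification of single-crossing is needed; the paper's proof is precisely this observation. Your plan of differentiating a closed-form $U_b(\theta_0)$ and $W(\theta_0)$ and comparing FOCs would also work but is a longer path to the same place.
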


The intuition behind \cref{effvsopt} is that while the buyer takes into account virtual costs, the social planner takes into account just costs when determining the optimal degree of mechanism asymmetry. Since virtual costs are more responsive to $\theta$ than just costs (because they also include type-dependent information rents) the buyer is more concerned about ex-post efficiency than the social planner. Hence, by the logic of \cref{compstat1} the buyer should choose a more symmetric mechanism than the social planner. %

\subsection{Restricted entry}
\label{sec:dropk}
One interesting family of asymmetric mechanisms is mechanisms where the principal allows only $k\leq n$ firms to enter, and employs the optimal symmetric mechanism for these $k$ firms. Such mechanisms may be more practical than arbitrary asymmetric mechanisms since this particular kind of asymmetry may be less salient, and thus on the surface such mechanisms may look more ``fair''.

Given that the principal only chooses $k\leq n$, which $k$ would be optimal for him? One reasonable guess is that it may be often optimal to set $k=2$: this choice saves a lot of investment costs while still preserving some competition. We show that in a large class of settings, including the example in \cref{sec:example}, this guess is wrong. Namely, we show that often the optimal solution is ``one-or-all'': depending on the importance of private information, it is always optimal for the principal to either allow only 1 firm or \emph{all} of the firms to enter. In this sizable set of situations, no intermediate $k\in\{2,3,\ldots,n-1\}$ can ever be optimal.

\begin{prop}\label{1orall}
Suppose \cref{separablestructure} holds and $g(0)=0$, $g'(q)$ is strictly increasing and 
$\frac{g(q)}{\sqrt{g'(q)}}$
is also strictly increasing. Suppose also that $F(\theta)=\theta$. Then, the buyer's utility $U(k)$ is a  quasi-convex function of the number of firms $k$ allowed to enter.
Thus, it is optimal for the principal to allow either one or all firms  to enter, that is, $k^*\in\{1,n\}$.
\end{prop}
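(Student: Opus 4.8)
The plan is to obtain an explicit formula for the buyer's utility $U(k)$ as a function of the (continuous relaxation of the) number of entrants $k$, and then show the relevant one-dimensional function is quasi-convex, so that its maximum over the integer interval $\{1,\dots,n\}$ is attained at an endpoint. First I would specialize the objective \eqref{int} to the setting of \cref{separablestructure} with $F(\theta)=\theta$: plugging in $C^P=\alpha\theta$, $V(q)=q$, $C^I=g(q)$ and $\textit{PW}_{sym}(\theta)=(1-\theta)^{k-1}$, the optimal symmetric quality $q^*_{sym}(\theta\,;k)$ solves $1 = \tilde C^I_q(q,\theta)/(1-\theta)^{k-1}$, i.e. $g'(q) = (1-\theta)^{k-1}$ after using $\tilde C^I_q = g'(q)$ (no $\theta$-dependence in the virtual marginal investment cost here), hence $q^*_{sym}(\theta;k) = (g')^{-1}\big((1-\theta)^{k-1}\big)$. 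Substituting back into $U_{sym}$ for $k$ entrants gives
\begin{equation*}
U(k) = \mathbb E\big[x(q^*_{sym}(\theta_{(1)});k),\theta_{(1)}\big] - k\,\mathbb E\big[\tilde C^I(q^*_{sym}(\theta);k),\theta\big] + \text{const},
\end{equation*}
which after the change of variable $z = (1-\theta)^{k-1}$ (so that $q$ ranges over a fixed interval independent of $k$) should reduce to an expression of the form $U(k) = A - \int_0^1 \phi\big(q(z)\big)\,\psi_k(z)\,dz$ for explicit functions, with all the $k$-dependence entering through simple powers. This is exactly where the hypotheses $g(0)=0$, $g'$ strictly increasing, and $g/\sqrt{g'}$ strictly increasing will be used: they are tailored to make the resulting integrand have a sign pattern (in its second derivative with respect to $k$, or rather with respect to a monotone reparametrization of $k$) that forces quasi-convexity.

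The key steps, in order, are: (i) derive the closed form for $q^*_{sym}(\cdot;k)$ and for $U(k)$ as above; (ii) reparametrize so that $U$ becomes an integral over a $k$-independent domain, isolating the $k$-dependence into kernels like $(1-\theta)^{k-1}$ and $k(1-\theta)^{k-1}$; (iii) differentiate $U$ (treating $k$ as a continuous variable) and show $U''(k)$, or more robustly $\frac{d}{dk}\log$ of something, does not change sign from $+$ to $-$ — equivalently show any interior critical point of $U$ is a local minimum; (iv) conclude quasi-convexity on $[1,n]$ and therefore $k^*\in\{1,n\}$ among integers. For step (iii) the cleanest route is probably to write $U(k) = E(k) - k\,I(k)$ where $E(k) = \mathbb E\,x(\cdot)$ is the ex-post-surplus term and $I(k)$ the per-firm virtual investment cost, note that $E$ is increasing and concave in $k$ while $k I(k)$ is increasing, and show that the condition "$g/\sqrt{g'}$ increasing" makes $k\mapsto k I(k)$ sufficiently convex (or $E - kI$ sufficiently "anti-concave") that no interior maximum survives; the square root is the fingerprint of the $(1-\theta)^{k-1}$ weight interacting with its own derivative in $k$, which brings in a factor $\log(1-\theta)$ and, after integration, the stated ratio.

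The main obstacle I anticipate is step (iii): establishing the sign of the second derivative of $U$ in $k$. Unlike the pointwise-maximization arguments used for the symmetric mechanism, here $k$ enters the quality function, the winning probability, and the multiplicity of the investment-cost term simultaneously, so $U''(k)$ is a sum of several integrals with competing signs, and one must show the "bad" (concave, efficiency-driven) part is dominated by the "good" (convex, cost-duplication-driven) part precisely under the $g/\sqrt{g'}$ hypothesis. I would handle this by the reparametrization that converts $\int_0^1 (\cdot)(1-\theta)^{k-1}d\theta$ into a Laplace-type transform in the variable $s=k-1$ (substituting $u=-\log(1-\theta)$), since log-convexity properties of Laplace transforms are standard and the ratio $g/\sqrt{g'}$ should translate into exactly the inequality needed for the transform of the relevant measure to be log-convex, yielding quasi-convexity of $U$. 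If a fully clean Laplace argument is unavailable I would fall back on showing directly that $U'(k)=0$ implies $U''(k)>0$ by substituting the first-order condition into the second-derivative expression and simplifying, which is the usual way "quasi-convex, hence endpoint optimum" results are proved.
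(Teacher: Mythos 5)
Your proposal is a reasonable high-level sketch, and steps (i) and (ii) agree with the paper's opening moves (deriving $q^*_{sym}(\theta;k)=(g')^{-1}((1-\theta)^{k-1})$, rewriting $U(k)$ through the conjugate $\varphi(z,\theta)=\max_x(zx-C(x,\theta))=H(z)-2\alpha\theta z$, and substituting $y=(1-\theta)^{k-1}$). But the proposal stops exactly at the step you yourself flag as the main obstacle, and neither of the two routes you sketch for closing it is actually carried out, so there is a genuine gap.

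Concretely, two things are missing. First, the paper does not stop at the substitution $y=(1-\theta)^{k-1}$: it makes a second, non-obvious change of variable $\delta=1/(k-1)$, after which $U$ takes the form $(1+\delta)\int_0^1(h(y)-2\alpha(1-y^\delta))y^\delta\,dy$ with $h(z):=H(z)/z$, and \emph{this} parametrization is what makes the derivative factor cleanly. You never perform this step, and without it the ``Laplace transform in $s=k-1$'' picture does not produce the clean sign structure. Second, your closing argument is left as a choice between ``log-convexity of Laplace transforms'' and ``show $U'(k)=0\Rightarrow U''(k)>0$,'' neither of which is verified. The Laplace log-convexity heuristic does not obviously apply because $U$ is a \emph{difference} of transform-like terms, and log-convexity of each piece does not imply quasi-convexity of the difference. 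The paper's actual mechanism is different: after the $\delta$-substitution it shows
\[
U'_\delta=\left(\frac{\int_0^1\bigl[h'(y)y^{3/2}\bigr]\,y^{\delta-1/2}\ln(1/y)\,dy}{\int_0^1 y^{\delta-1/2}\ln(1/y)\,dy}-\frac{2\alpha}{4}\right)\int_0^1 y^{\delta-1/2}\ln(1/y)\,dy,
\]
and then invokes a first-order-stochastic-dominance/monotone-ratio argument: the family of densities $f(y\mid\delta)\propto y^{\delta-1/2}\ln(1/y)$ is FOSD-increasing in $\delta$, so the bracketed expectation of the increasing function $h'(y)y^{3/2}$ is increasing in $\delta$, hence $U'_\delta$ changes sign at most once. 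The hypothesis ``$g/\sqrt{g'}$ increasing'' enters exactly to show $h'(y)y^{3/2}$ is increasing (via the explicit computation $h'(z)z^{3/2}=g(g'^{-1}(z))/\sqrt z$ for $z\ge g'(0)$). Your proposal correctly smells the $\sqrt{\cdot}$ fingerprint but never derives this identity or the resulting monotonicity, so the decisive sign argument — the heart of the proof — is not established.
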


The condition that $\frac{g(q)}{\sqrt{g'(q)}}$ is strictly increasing means that, in a certain sense, investment costs are not too convex. However, this condition is admittedly mild, as it is satisfied for $g(q)=q^{\gamma}$ and all $\gamma>1$ and even for exponential costs $g(q)=\exp(q)-1$.

\textbf{Example:} Suppose $g(q)=q^2/2.$ Then, the buyer's utility from the optimal symmetric mechanism with $k$ firm is
\[U(k)=\frac{1}{2}\frac{k}{2k-1}-\frac{\alpha}{k+1},\]
which is a quasi-convex function. Thus, there exists an $\alpha_0$ such that for \[k^*(\alpha)=
\begin{cases}
1, & \alpha<\alpha_0;\\
n, & \alpha>\alpha_0.
\end{cases}\]

\cref{1orall} also explains the quasi-convex behavior of utility depicted in Figure 2 in \cite{gershkov2021theory} for a related additively separable setting.

\section{Conclusion}
\label{sec:conclude}
In this paper, we considered the problem of optimal procurement in the presence of both winner-pay (production) and all-pay (investment) costs, and contractible quality. For this, we built a special mechanism design framework in which outcomes can be conditioned on both messages and qualities. In this framework, we first characterized optimal symmetric mechanisms. We show that  an optimal symmetric mechanism is a scoring auction with a scoring rule that is either flatter or steeper than the well-known rule obtained under assumption of winner-pay costs only \citep{che1993design}. Crucially, even with ex ante symmetric firms, an optimal mechanism may be asymmetric, i.e., exhibit favoritism. This happens when the importance of private information is relatively low or the elasticity of investment costs with respect to quality is sufficiently high.

We characterized optimal mechanisms without the symmetry restriction which can be either sole sourcing, the symmetric scoring auction, or, depending on the curvature of marginal investment costs, a certain combination of the two. We identify two combinations that may be optimal. One involves a discriminating treatment of relatively inefficient types of two firms (a score floor auction) while the other discriminates between relatively efficient types of two firms (a score ceiling auction). Both involve side-payments that are needed to ensure the correct incentives of the favored bidder. 

It is hard to ignore that our mechanisms resemble corruption. Indeed, in the score ceiling auction, the side payment is paid to the auctioneer whenever the favored firm is guaranteed to win. Moreover, even our environment resembles those that are commonly associated with corruption: large investment costs and vanishing private information. But they are, in fact, quite different. Favoritism is efficient here, because it reduces the duplication of investment costs, and the auctioneer can not be bribed, as he is also the buyer. Thus, favoritism is not necessarily a sign of corruption. 

To conclude, it is well known that, in a procurement auction, buyer sometimes tries to give an advantage to his preferred bidder, not because he is corrupt, but because he believes that this bidder would do the best job. One can speculate that introducing asymmetric mechanisms into the procurement code may be a way to legitimize such favoritism.

\bibliography{scoring}
 
\newpage  
\part*{Appendix}
\appendix
 
\setcounter{page}{1} 

\section{Proofs for \cref{sect:symmetric}}
\label{app:symmetric}

Denote the indirect investment costs function $C(x,\theta)$ as $$C(x,\theta):=\min_q\{\tilde{C}^I(q,\theta):V(q)-\tilde{C}^P(q,\theta)=x\},$$ 
for all $x$ in the range of $V(q)-\tilde{C}^P(q, \theta)$.

\begin{lemm}\label{C-properties}
Under \cref{cost_properties,regularity}, the indirect investment costs function $C(x,\theta)$ is strictly increasing in both arguments, convex in $x$ and is supermodular in $(x,\theta)$.
\end{lemm}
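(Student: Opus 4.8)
The plan is to work directly from the definition $C(x,\theta) = \tilde C^I(q(x,\theta),\theta)$, where $q(x,\theta)$ is the \emph{lowest} root of the equation $V(q) - \tilde C^P(q,\theta) = x$, and then differentiate. First I would establish that $q(x,\theta)$ is well-defined and characterize it: by \cref{regularity} part 2, $V'(q) - \tilde C^P_q(q,\theta) > 0$ at $q=0$, so the virtual production surplus $x(q,\theta) := V(q) - \tilde C^P(q,\theta)$ is initially increasing in $q$; hence its lowest root for a given level $x$ lies in the region where $x(q,\theta)$ is still increasing in $q$, i.e. where $\partial_q[V(q)-\tilde C^P(q,\theta)] = V'(q) - \tilde C^P_q(q,\theta) > 0$. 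This sign fact is the workhorse for everything that follows. By the implicit function theorem, $q_x = 1/(V' - \tilde C^P_q) > 0$ and $q_\theta = \tilde C^P_\theta/(V' - \tilde C^P_q) > 0$ (using $\tilde C^P_\theta > 0$ from \cref{regularity} part 1).

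Next I would compute the first derivatives of $C$. Monotonicity in $x$: $C_x = \tilde C^I_q \cdot q_x = \tilde C^I_q/(V' - \tilde C^P_q) > 0$ since $\tilde C^I_q > 0$ (this follows from $C^I_q>0$, $C^I_{q\theta}>0$, which give $\tilde C^I_q = C^I_q + C^I_{q\theta}F/f > 0$). Monotonicity in $\theta$: $C_\theta = \tilde C^I_q \cdot q_\theta + \tilde C^I_\theta = \tilde C^I_q \tilde C^P_\theta/(V'-\tilde C^P_q) + \tilde C^I_\theta > 0$, again termwise positive. So both monotonicity claims are immediate once the sign of the denominator is pinned down.

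For convexity in $x$ and supermodularity in $(x,\theta)$, I would differentiate $C_x = \tilde C^I_q(q(x,\theta),\theta)/(V'(q(x,\theta)) - \tilde C^P_q(q(x,\theta),\theta))$ once more. Writing $N = \tilde C^I_q$ and $D = V' - \tilde C^P_q > 0$ as functions of $q$ (and $\theta$), we have $C_x = N/D$ and $C_{xx} = \partial_q(N/D)\cdot q_x = \frac{N_q D - N D_q}{D^2}\cdot\frac{1}{D}$. Here $N_q = \tilde C^I_{qq} > 0$ and $D_q = V'' - \tilde C^P_{qq} < 0$ (by $V''\le 0$ and $\tilde C^P_{qq}>0$ from \cref{regularity}), while $N, D > 0$; hence $N_q D - N D_q > 0$ and $C_{xx} > 0$, giving strict convexity in $x$. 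For supermodularity I would show $C_{x\theta} \ge 0$: differentiate $C_x = N(q,\theta)/D(q,\theta)$ in $\theta$, getting $C_{x\theta} = \partial_q(N/D)\cdot q_\theta + \partial_\theta(N/D)$. The first term is $\ge 0$ as just shown ($q_\theta>0$). The second is $\frac{N_\theta D - N D_\theta}{D^2}$ with $N_\theta = \tilde C^I_{q\theta}\ge 0$ (from \cref{regularity}) and $D_\theta = -\tilde C^P_{q\theta}\le 0$ (from $\tilde C^P_{q\theta}\ge 0$), so $N_\theta D - N D_\theta \ge 0$ and the second term is also $\ge 0$; hence $C_{x\theta}\ge 0$.

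The main obstacle is not any single computation but being careful about \emph{which} root $q(x,\theta)$ is selected and ensuring the selected branch stays inside the region $\{V'-\tilde C^P_q > 0\}$ so that all the sign arguments (especially $D>0$) are valid uniformly; I would handle this by the argument in the first paragraph, noting that the lowest root is attained before $x(q,\theta)$ can turn around in $q$. A secondary subtlety is differentiability of $q(x,\theta)$ at that branch, which the implicit function theorem delivers precisely because $D = \partial_q x(q,\theta) \ne 0$ there. With those two points settled, the rest is the termwise sign-chasing above.
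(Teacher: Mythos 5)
Your proof is correct and follows essentially the same route as the paper's: identify $C_x = \tilde C^I_q/(V'-\tilde C^P_q)$ (the Lagrange multiplier in the paper's phrasing), then differentiate once more in $q$ and $\theta$ and sign-chase using \cref{cost_properties,regularity}. Your first paragraph is somewhat more explicit than the paper about why the lower branch satisfies $V'-\tilde C^P_q>0$ (via the Inada condition in \cref{regularity}), which the paper leaves implicit, but the substance is identical.
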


\begin{proof}
Recall that $V$ is concave and $\tilde C^P$ is strictly convex in $q$, thus the equation $V(q)-\tilde{C}^P(q,\theta)=x$ has exactly two roots for all levels of $x$ except the highest feasible one. 

Since $\tilde C^I$ is increasing in $q$, it is optimal to pick the smaller root, which we denote as $\hat{q}(x,\theta)$. Clearly, $\hat{q}(x,\theta)$ is increasing in $x$, and so is the associated level of $\tilde C^I$. Thus, $C(x,\theta)$ is increasing in $x$.

Furthermore, since both $\tilde C^P, \tilde C^I$ are  increasing in $\theta$, $\hat{q}(x,\theta)$ is  increasing in $\theta$ and so is the associated level of $\tilde C^I$. Thus, $C(x,\theta)$ is increasing in $\theta$.

To obtain $C''_{xx}$ and $C''_{x\theta}$ we first compute $$\left.C'_x(x, \theta) = \frac{\tilde C^I_q(q,\theta)}{V'(q)-\tilde C^P_q(q,\theta)}\right|_{q = \hat q(x, \theta)}$$
which is the value of the Lagrange multiplier in the constrained optimization. Further differentiating with respect to $x$ or $\theta$ we obtain
\begin{gather*}\left.C''_{xx}(x, \theta) = \frac{\partial}{\partial q}\left(\frac{\tilde C^I_{q}(q,\theta)}{V'(q)-\tilde C^P_{q}(q,\theta)}\right)\right|_{q = \hat q(x, \theta)} \cdot \hat q'_x(x,\theta) \\
\left.C''_{x\theta}(x, \theta) = \frac{\partial}{\partial q}\left(\frac{\tilde C^I_{q}(q,\theta)}{V'(q)-\tilde C^P_{q}(q,\theta)}\right)\right|_{q = \hat q(x, \theta)} \cdot \hat q'_{\theta}(x,\theta) + \frac{\partial}{\partial \theta}\left.\left(\frac{\tilde C^I_{q}(q,\theta)}{V'(q)-\tilde C^P_{q}(q,\theta)}\right)\right|_{q = \hat q(x, \theta)}\end{gather*} which are both positive, since $\hat q$ is increasing in $x$ and $\theta$; $\tilde C^I_q, \tilde C^P_q$ are both increasing in $q$ and $\theta$; and $V'$ is decreasing in $q$. Thus, $C(x,\theta)$ is convex in $x$ and supermodular.
\end{proof}

The principal's utility may be written simply as
\begin{equation}\label{utility_x_app}
U=\mathbb{E}(\max_i x(\theta_i)-\sum_{i=1}^n C(x(\theta_i),\theta_i)). 
\end{equation}

\begin{lemm}\label{x-decreasing}
In an optimal symmetric mechanism, the virtual production surplus $x(q(\theta),\theta)$ is decreasing in $\theta$.    
\end{lemm}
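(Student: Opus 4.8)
The plan is to argue by a rearrangement (contradiction) argument, exploiting the structure of the objective \eqref{utility_x_app}. Suppose toward a contradiction that in some optimal symmetric mechanism the common virtual production surplus function $\theta \mapsto x(\theta)$ is \emph{not} decreasing on a positive-measure set. The key observation is that the objective $U = \mathbb{E}(\max_i x(\theta_i)) - n\,\mathbb{E}\,C(x(\theta_i),\theta_i)$ depends on $x(\cdot)$ in two separate ways: the first term depends only on the \emph{distribution} of the random variable $x(\theta_i)$ (induced by $F$), hence is invariant under any measure-preserving rearrangement of $x(\cdot)$; the second term is an additive, pointwise penalty. So I would replace $x(\cdot)$ by its equimeasurable non-increasing rearrangement $x^{\downarrow}(\cdot)$ with respect to $F$: the term $\mathbb{E}(\max_i x(\theta_i))$ is unchanged, and I claim $\mathbb{E}\,C(x(\theta_i),\theta_i)$ cannot increase — indeed it strictly decreases unless $x$ was already non-increasing. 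Since the new profile $x^{\downarrow}$ is still feasible (it still lies in the range of $V(q)-\tilde C^P(q,\theta)$ pointwise, as that range is an interval containing all the relevant values, and one can recover a quality function via the cost-minimization step), this contradicts optimality; hence $x(\cdot)$ must be non-increasing at the optimum.

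The substantive step is the monotonicity of $\theta \mapsto \mathbb{E}\,C(x(\theta),\theta)$ under rearrangement, and this is exactly where supermodularity of $C$ (from \cref{C-properties}) enters. The relevant fact is a Hardy–Littlewood / rearrangement inequality for supermodular integrands: if $C(x,\theta)$ is supermodular and $\theta$ itself is (trivially) increasing in $\theta$, then among all equimeasurable rearrangements of $x(\cdot)$, the integral $\int C(x(\theta),\theta)\,dF(\theta)$ is \emph{minimized} by the rearrangement that is anti-comonotone with $\theta$, i.e. the non-increasing one $x^{\downarrow}$. Concretely, for any two types $\theta' < \theta''$ at which the current $x$ is ``misordered'' ($x(\theta') < x(\theta'')$), swapping the two values changes the cost contribution by
\[
\big[C(x(\theta''),\theta') + C(x(\theta'),\theta'')\big] - \big[C(x(\theta'),\theta') + C(x(\theta''),\theta'')\big] \le 0
\]
by supermodularity (with strict inequality when $C$ is strictly supermodular and $x(\theta')\ne x(\theta'')$). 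Iterating/integrating this swap argument — or invoking the standard rearrangement inequality directly — gives the claim.

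The main obstacle I anticipate is purely technical bookkeeping rather than conceptual: making the rearrangement argument rigorous for a general cdf $F$ (not just atomless uniform) and verifying that the rearranged $x^{\downarrow}$ is genuinely feasible, i.e. that it still arises from some admissible quality function $q(\cdot)$ with $q(\theta)\ge 0$ and lies in the domain of $C(\cdot,\theta)$ for each $\theta$. For feasibility one uses that the set of attainable values of $x = V(q)-\tilde C^P(q,\theta)$ is an interval whose lower end is weakly increasing in $\theta$ (since $\tilde C^P_\theta > 0$), so pushing large values of $x$ toward small $\theta$ and small values toward large $\theta$ — which is what $x^{\downarrow}$ does — never leaves the feasible range. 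An alternative, perhaps cleaner, route that avoids rearrangement altogether: argue directly via a local variational/exchange argument that at any optimum, if $x$ were increasing on an interval one could do a measure-preserving local swap and strictly improve; but the global rearrangement statement is the most transparent and is what I would write up. One should also note the edge case where the rearrangement changes $\max_i x(\theta_i)$ on a null set only, which does not affect the expectation.
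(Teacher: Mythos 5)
Your proposal is correct and follows essentially the same route as the paper's proof: both replace $x(\cdot)$ by its equimeasurable decreasing rearrangement, note that $\mathbb{E}\max_i x(\theta_i)$ is distribution-invariant, and use supermodularity of the indirect cost $C(x,\theta)$ (via a rearrangement inequality — the paper cites Theorem~3 of \cite{crowe1986rearrangements}, you invoke the equivalent Hardy--Littlewood/swap argument) to show the cost term weakly decreases. Your extra remarks on feasibility of the rearranged profile are a sensible addition but not a departure from the paper's argument.
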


\begin{proof}
Consider the reduced ($x$-only) problem of maximizing \eqref{utility_x_app} over the schedule $x(\theta_i):=x(q(\theta),\theta)$, which is the same for all bidders. That is, consider the problem
\begin{equation}\label{reduced_max}
\max\limits_{x(\theta)}\mbox{ }\mathbb{E}\max_i x(\theta_i)-n\mathbb{E} C(x(\theta),\theta),  
\end{equation}
where $C(x,\theta)$ is the indirect investment cost function and where we used symmetry to write the costs term. After $x(\theta)$ is chosen, the quality $q(\theta)$ is chosen to solve $\min_q \tilde{C}^I(q,\theta)$ $\mbox{s.t. }v(q)-\tilde{C}^P(q,\theta)=x(\theta)$. 

We show that for an arbitrary function $x(\theta)$ there exists a decreasing function $\tilde{x}(\theta)$ such that the objective \eqref{reduced_max} is larger under $\tilde{x}(\theta)$ than under $x(\theta)$. For that, we leverage that fact that by \cref{C-properties}, the indirect investment cost function $C(x,\theta)$ is supermodular. 

Namely, we take $\tilde{x}(\theta)$ to be the equimeasurable decreasing rearrangement of $x(\theta)$, that is, the decreasing function  $\tilde{x}(\theta)$ such that the probability distribution of $\tilde{x}(\theta)$ is the same as that of $x(\theta)$. If $G(x)$ is the cdf of $x(\theta)$, it is not hard to see that $\tilde{x}(\theta)$ satisfies $G(\tilde{x}(\theta))\equiv 1-F(\theta)$.

By construction, $\mathbb{E}\max_i x(\theta_i)$ is unchanged when $x(\theta)$ is replaced with $\tilde{x}(\theta)$. The second term, the negative costs, however, weakly increases by Theorem~3 from \cite{crowe1986rearrangements}, a general result about how an integral of a supermodular function of two inner functions changes when the inner functions are replaced by their rearrangements. That is, Theorem~3 from \cite{crowe1986rearrangements} implies that 
$\mathbb{E} C(\tilde{x}(\theta),\theta))\leq \mathbb{E} C(x(\theta),\theta))$ if $C(x,\theta)$ is supermodular, which it indeed is. Summing up, we get that the objective \label{reduced_max} weakly increases when $x(\theta)$ is replaced with $\tilde{x}(\theta)$; restricting attention to decreasing functions $x(\theta)$ is without loss. 
\end{proof}

\textbf{Remark:} Why did we need to employ rearrangements and not just standard monotone comparative statics results? The answer is that the monotonicity analysis is complicated by the symmetry constraint. Without it, we would have the problem 
\begin{equation*}
\max\limits_{x_i(\theta_i)}\mbox{ }\mathbb{E}\max_i x_i(\theta_i)-\sum_i C(x_i(\theta_i),\theta_i).
\end{equation*}
Then, fixing $x_j(\theta_j)$, the problem of optimization over $x_i(\theta_i)$ could be reduced to pointwise maximization to which we could apply standard monotone comparative statics. With symmetry and no a priori knowledge that $x(\theta)$ is decreasing, no reduction to pointwise optimization can be done.

\begin{lemm}\label{qstarproperties}
Let $q^*(z,\theta)$ be the solution to the problem
\[\max\limits_q\mbox{ }[s(q)-C^P(q,\theta)]z-C^I(q,\theta).\]
Then, $q^*(z,\theta)$ is increasing in $z$ and decreasing in $\theta$.    
\end{lemm}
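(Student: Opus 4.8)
The plan is to read this off as monotone comparative statics for the family of maximization problems $\max_{q\ge 0}\Phi(q,z,\theta)$ with $\Phi(q,z,\theta):=\bigl[s(q)-C^P(q,\theta)\bigr]z-C^I(q,\theta)$, handling the two parameters by different means, since only one of the two relevant cross-partials is sign-definite. For the $\theta$-direction I would note that $\Phi$ has decreasing differences in $(q,\theta)$, because $\Phi_{q\theta}=-C^P_{q\theta}(q,\theta)\,z-C^I_{q\theta}(q,\theta)<0$ using $z\ge 0$ and $C^P_{q\theta},C^I_{q\theta}>0$ (\cref{cost_properties}). Topkis' monotonicity theorem then gives that the set of maximizers is decreasing in $\theta$ in the strong set order, so in particular the selection $q^*(z,\theta)$ is decreasing in $\theta$.

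For the $z$-direction, supermodularity in $(q,z)$ is \emph{not} available: $\Phi_{qz}=s'(q)-C^P_q(q,\theta)$, and for $\theta$ well above $\theta(q)$ the term $C^P_q(q,\theta)$ may exceed $s'(q)=C^P_q(q,\theta(q))+C^I_q(q,\theta(q))/\textit{PW}_{sym}(\theta(q))$, making $\Phi_{qz}<0$. Instead I would argue by revealed preference, exploiting that $\Phi$ is affine in $z$. Fix $\theta$, take $z_2>z_1\ge 0$ with $q_i:=q^*(z_i,\theta)$, and abbreviate $A_i:=s(q_i)-C^P(q_i,\theta)$, $B_i:=C^I(q_i,\theta)$, so $\Phi(q_i,z,\theta)=A_iz-B_i$. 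Optimality of $q_1$ at $z_1$ and of $q_2$ at $z_2$ gives $A_1z_1-B_1\ge A_2z_1-B_2$ and $A_2z_2-B_2\ge A_1z_2-B_1$; adding these yields $(A_2-A_1)(z_2-z_1)\ge 0$, hence $A_2\ge A_1$. Substituting $A_2\ge A_1$ back into the first inequality gives $B_2-B_1\ge (A_2-A_1)z_1\ge 0$, i.e. $C^I(q_2,\theta)\ge C^I(q_1,\theta)$; since $C^I(\cdot,\theta)$ is strictly increasing (\cref{cost_properties}), this forces $q_2\ge q_1$.

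The one genuinely non-mechanical step is this last one: the naive ``the objective is supermodular in $(q,z)$'' argument fails, so the proof must use the linearity of $\Phi$ in $z$ together with monotonicity of the all-pay cost $C^I$ in $q$ to convert a revealed-preference comparison of $s(q)-C^P$ into a comparison of $q$ itself. The boundary case $z=0$ is immediate: $\Phi(\cdot,0,\theta)=-C^I(\cdot,\theta)$ is strictly decreasing, so $q^*(0,\theta)=0$, which is consistent with both monotonicities. Existence and uniqueness of $q^*$ may be taken from the hypothesis of the lemma; the comparative-statics inequalities above hold for any measurable selection from the argmax correspondence and do not rely on it.
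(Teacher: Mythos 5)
Your proof is correct. The $\theta$-part coincides with the paper's argument: submodularity of the objective in $(q,\theta)$ (from $C^P_{q\theta}>0$, $C^I_{q\theta}>0$ and $z\ge 0$, \cref{cost_properties}) plus Topkis. For the $z$-part the paper takes a different route: it invokes the first-order condition $[s'(q)-C^P_q(q,\theta)]z=C^I_q(q,\theta)>0$ to conclude that $s'(q)-C^P_q(q,\theta)>0$ in a neighborhood of the maximizer, hence the objective is supermodular in $(q,z)$ \emph{locally}, and reads off monotonicity in $z$ from that. Your diagnosis that global supermodularity in $(q,z)$ can fail is accurate, and your revealed-preference argument---adding the two optimality inequalities to get $A_2\ge A_1$, then feeding this back into the first inequality to obtain $C^I(q_2,\theta)\ge C^I(q_1,\theta)$ and hence $q_2\ge q_1$ by strict monotonicity of $C^I$ in $q$---is a valid global substitute: it needs no smoothness, interiority, or uniqueness, holds for any selection from the argmax, and sidesteps the local-to-global step that the paper's neighborhood argument leaves implicit. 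What the paper's FOC route buys instead is strictness: at an interior maximizer with $z>0$ the FOC forces $s'(q^*)-C^P_q(q^*,\theta)>0$, so the maximizer cannot stay put as $z$ rises, and the paper later cites \cref{qstarproperties} as ``strictly increasing in $z$'' inside the proof of \cref{symmimplement}. Your argument as written delivers only the weak monotonicity that the lemma literally asserts; if the strict version is wanted, append the FOC observation (it rules out $q_2=q_1$ at interior optima with $z_2>z_1$), noting also that the application in \cref{symmimplement} in fact survives with weak monotonicity because the strict sign of the derivative there comes from the $C^P_{\theta}$ term.
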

\begin{proof}
By the first-order condition, $q^*(z,\theta)$ satisfies $[s'(q)-C^P_q(q,\theta)]z=C^I_q(q,\theta)>0$, so $s'(q)-C^P_q(q,\theta)>0$ in a neighborhood of $q^*(z,\theta)$. Thus, the objective function is supermodular in $(q,z)$ in a neighborhood of $q^*(z,\theta)$, which implies that $q^*(z,\theta)$ is increasing in $z$.

The fact that $q^*(z,\theta)$ is decreasing in $\theta$ follows simply from the fact that the objective is submodular in $(q,\theta)$ by our assumptions on costs (assumption~\ref{cost_properties}).
\end{proof}

\begin{lemm}\label{Sdecreasing}
For any $\theta_0\leq 1$, the score bidding strategy
\begin{equation}\label{Sdef}
S^*(\theta|\theta_0)=s(q(\theta))-C^P(q(\theta),\theta)-\frac{C^I(q(\theta),\theta)+\textit{IR}(\theta,\theta_0)}{(1-F(\theta))^{n-1}},  
\end{equation}
where $s(q)$ is given by \eqref{scoredef}, is decreasing in $\theta$ for $\theta\leq \theta_0$.
\end{lemm}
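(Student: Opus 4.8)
The plan is to show that $S^*(\theta|\theta_0)$ is decreasing on $[0,\theta_0]$ by differentiating \eqref{Sdef} and exploiting the structure of the equilibrium. First I would observe that the score strategy can be rewritten, using the definition \eqref{IR1} of the informational rents $\textit{IR}(\theta,\theta_0)$ and the fact that $q(\theta)$ is differentiable, so that $S^*(\theta|\theta_0)$ equals the interim payoff of a hypothetical type, divided by the probability of winning. More precisely, the standard envelope logic behind \eqref{eqstrat2} tells us that $S^*(\theta|\theta_0)$ is exactly the score that a type-$\theta$ firm must bid in order to break even (up to the rent $\textit{IR}(\theta,\theta_0)$ accumulated relative to the cutoff type $\theta_0$) when it produces quality $q(\theta)$ and wins with probability $\textit{PW}_{sym}(\theta) = (1-F(\theta))^{n-1}$. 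Monotonicity of $S^*$ is then the ``increasing differences'' property one expects: a more efficient type should be willing to post a higher score.

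The key computational step is to differentiate \eqref{Sdef} with respect to $\theta$. Using $s'(q)$ from \eqref{scoredef}, the terms involving $q'(\theta)$ cancel exactly — this is the familiar first-order-condition cancellation that makes $s(q)$ the ``right'' score function — leaving an expression in which the only surviving derivatives are the direct (partial) $\theta$-derivatives of $C^P$, $C^I$, the rent integrand, and the factor $(1-F(\theta))^{-(n-1)}$. The derivative of the rent term $\textit{IR}(\theta,\theta_0) = \int_\theta^{\theta_0}[C^P_\theta(q(u),u)\textit{PW}_{sym}(u)+C^I_\theta(q(u),u)]\,du$ contributes a $-[C^P_\theta(q(\theta),\theta)\textit{PW}_{sym}(\theta)+C^I_\theta(q(\theta),\theta)]$ from the lower limit, which I expect to cancel against the direct $\theta$-derivatives of $-C^P(q(\theta),\theta)$ and $-C^I(q(\theta),\theta)/\textit{PW}_{sym}(\theta)$ coming through their explicit $\theta$-arguments, modulo the $(1-F)$ weighting. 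After these cancellations the sign of $\frac{d}{d\theta}S^*(\theta|\theta_0)$ should reduce to the sign of a single manifestly-signed term: the derivative of $(1-F(\theta))^{-(n-1)}$ (which is positive since $F$ is increasing) multiplied by the non-negative quantity $C^I(q(\theta),\theta)+\textit{IR}(\theta,\theta_0)\ge 0$ — non-negative precisely because $\theta\le\theta_0$ makes the rent integral non-negative and $C^I\ge 0$ by \cref{cost_properties}. Hence $\frac{d}{d\theta}S^*(\theta|\theta_0)\le 0$ on $[0,\theta_0]$.

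The main obstacle is bookkeeping: making sure that all the $q'(\theta)$-proportional terms really do cancel (this uses \eqref{scoredef} evaluated at $q=q(\theta)$, i.e. that $s(q)$ was constructed from the very quality function being implemented) and that the remaining direct-derivative terms collapse cleanly rather than leaving a residual whose sign must be argued separately. A secondary point is that \eqref{Sdef} is only asserted to be decreasing for $\theta\le\theta_0$, not globally; the restriction $\theta\le\theta_0$ is exactly what guarantees $\textit{IR}(\theta,\theta_0)\ge 0$, so I would be careful to invoke it only at the step where the sign of $C^I(q(\theta),\theta)+\textit{IR}(\theta,\theta_0)$ is needed, and I would note that for $\theta>\theta_0$ the rent term flips sign and monotonicity can genuinely fail — which is consistent with the pooling behavior above the threshold described in the implementation sections. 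Finally, I would remark that no assumption beyond \cref{cost_properties,regularity} (and $q(\theta)$ decreasing) is used, which matches the generality claimed in \cref{symmimplement}.
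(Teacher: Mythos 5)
Your plan is exactly the paper's proof: differentiate \eqref{Sdef}, use \eqref{scoredef} evaluated at $q=q(\theta)$ to cancel the $q'(\theta)$-proportional terms, observe that the lower-limit derivative of $\textit{IR}(\theta,\theta_0)$ cancels the direct $\theta$-derivatives of $-C^P$ and $-C^I/\textit{PW}_{sym}$, leaving $\frac{\partial S^*}{\partial\theta}=-\frac{(n-1)f(\theta)}{(1-F(\theta))^n}\bigl(C^I(q(\theta),\theta)+\textit{IR}(\theta,\theta_0)\bigr)$, which is nonpositive precisely because $\theta\le\theta_0$ makes $\textit{IR}\ge 0$. The bookkeeping you flag as the "main obstacle" does collapse cleanly as you predict, so the proposal is correct.
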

\begin{proof}
By straightforward differentiation, after simplifications using \eqref{scoredef}, we get 
\begin{equation}\label{Sprime}
\frac{\partial S^*(\theta|\theta_0)}{\partial\theta}=
-\frac{(n-1)f(\theta)}{(1-F(\theta))^n}\left(C^I(q(\theta),\theta)+\textit{IR}(\theta,\theta_0)\right)<0,    
\end{equation}
as $C^I>0$, $\textit{IR}(\theta,\theta_0)\geq 0$ for $\theta\leq \theta_0$.
\end{proof}
?
\noindent
\textbf{Proof of \cref{symmimplement}}

\begin{proof}
Choosing quality $q$ and price $p$ is equivalent to choosing quality $q$ and score $S=s(q)-p$. Given score strategy $S(\theta)$, price strategy is recovered by $p(\theta)=s(q(\theta))-S(\theta)$. So it is sufficient to establish the existence of equilibrium in quality-score pairs.

We shall show that such a scoring auction has a BNE $(q,S)=(q(\theta),S^*(\theta))$ where $S^*(\theta)$ is given by \eqref{eqstrat1} with $\theta_0=1$. That is, with a slight abuse of notation we write $S^*(\theta)$ for $S^*(\theta|1)$.

This amounts to showing the following \emph{triple} continuum of inequalities:
\begin{align}
\nonumber
\forall (\theta,q',s') \mbox{   }&
(s(q(\theta))-C^P(q(\theta),\theta)-S^*(\theta))\mathbb{P}(S^*(\theta)>\max_{j\neq i}S^*(\theta_j))-C^I(q(\theta),\theta)\geq\\ 
\label{equil_ineq}
& (s(q')-C^P(q',\theta)-s')\mathbb{P}(s'>\max_{j\neq i}S^*(\theta_j))-C^I(q',\theta)
\end{align}
The difficulty in showing \eqref{equil_ineq} is that every double deviation $(q',s')$ is feasible. To deal with this problem, we employ a sequential optimization approach. Namely, we imagine that a firm first chooses a score to bid and then chooses a quality with the score bid fixed. At the first (score-choice) stage a firm anticipates its optimal quality choice at the second stage. 

Denote by $\Pi(q,S|\theta_i)$ the profit of firm $i$ if every other firm plays according to the alleged BNE and firm $i$ chooses quality $q$ and score bid $S$. Define $\Pi_{\max}(S|\theta_i):=\max_q \Pi(q,S)|\theta_i)$. That is, $\Pi_{\max}(S|\theta_i)$ is the maximum profit firm $i$ can get by bidding score $S$ when the quality is adjusted optimally to the score bid. 

It is sufficient to show that (i) for every $\theta_i$, $\Pi_{\max}(S|\theta_i)$ is maximized by the score bid \eqref{eqstrat1} and that (ii) after choosing the score bid \eqref{eqstrat1} in the first stage, a firm will choose the given quality $q(\theta)$ at the second stage.  

We will start from the end, that is, with part (ii). 
By lemma~\ref{Sdecreasing}, $S^*(\theta)$ is strictly decreasing. Thus, when bidding the score $S^*(\theta)$ a firm wins with prob. $\mathbb{P}(S^*(\theta)>\max_{j\neq i}S^*(\theta_j))=(1-F(\theta))^{n-1}$. 

Having chosen the score bid $S=S^*(\theta)$, the firm solves
\[\max\limits_q \Pi(q,S^*(\theta)=\max_q\mbox{ }[s(q)-C^P(q,\theta)-S^*(\theta)](1-F(\theta))^{n-1}-C^I(q,\theta).\]
We shall show that $q=q(\theta)$ solves this problem, or, equivalently, that $q(\theta)=q^*((1-F(\theta))^{n-1},\theta)$ where recall that $q^*(z,\theta)$ is the solution to the problem
\[\max\limits_q\mbox{ }[s(q)-C^P(q,\theta)]z-C^I(q,\theta).\]

The first derivative of the objective is
\[\Pi_q(q,S^*(\theta))=[s'(q)-C^P_q(q,\theta)](1-F(\theta))^{n-1}-C^I_q(q,\theta).\]
After plugging $s'(q)$ from \eqref{scoredef}, it becomes
\[\Pi_q=[C_q^P(q,\theta(q))-C_q^P(q,\theta)](1-F(\theta))^{n-1}+\left(\frac{1-F(\theta)}{1-F(\theta(q))}\right)^{n-1}\cdot C^I_q(q,\theta(q))-C^I_q(q,\theta).\]
Now set $q=q(t)$, where $q(\cdot)$ is the quality function to implement. We get
\[\Pi_q=[C_q^P(q(t),t)-C_q^P(q(t),\theta)](1-F(\theta))^{n-1}+\left(\frac{1-F(\theta)}{1-F(t)}\right)^{n-1}\cdot C^I_q(q(t),t)-C^I_q(q(t),\theta).\]
By supermodularity of $C^P(q,\theta)$, $C^I(q,\theta)$, and the fact that $\left(\frac{1-F(\theta)}{1-F(t)}\right)^{n-1}<1$ iff $t<\theta$, we get that $\Pi_q<0$ if $t<\theta$, $\Pi_q>0$ if $t>\theta$ and $\Pi_q=0$ if $t=\theta$. As $q(\theta)$ is strictly decreasing, this implies that $\Pi_q>0$ if $q(t)<q(\theta)$, $\Pi_q<0$ if $q(t)>q(\theta)$, $\Pi_q=0$ if $q(t)=q(\theta)$. Thus, $\Pi(q,S^*(\theta))$ is strictly increasing for $q<q(\theta)$, and strictly decreasing for $q>q(\theta)$, which means that $q=q(\theta)$ is actually the optimal quality for the firm, as desired.

Now we turn to part (i). Clearly, it is suboptimal for a firm to bid any score bid outside of the range of $S^*(\theta)$ (given that the competitors are bidding according to $S^*(\theta)$). So we set $S=S^*(\tau)$ where $\tau$ is the type that the firm $i$ mimics when bidding $S=S^*(\tau)$. Abusing notation, we write $\Pi_{\max}(\tau|\theta)$ for $\Pi_{\max}(S^*(\tau)|\theta)$. 

So, 
\begin{equation}\label{profit_max}
  \Pi_{\max}(\tau|\theta)\equiv \max_q\mbox{ }(s(q)-C^P(q,\theta)-S^*(\tau))(1-F(\tau))^{n-1}-C^I(q,\theta). 
\end{equation}
Our task is to show that $\Pi_{\max}(\tau|\theta)$ is maximized at $\tau=\theta$. To this end, we will use Envelope Theorem to compute $\frac{\partial}{\partial \tau}\Pi_{\max}(\tau|\theta)$ and then show that $\frac{\partial}{\partial \tau}\Pi_{\max}(\tau|\theta)$ is positive for $\tau<\theta$ and negative for $\tau>\theta$ which implies global optimality of $\tau=\theta$.

First, we compute $\frac{\partial}{\partial \tau}[S^*(\tau)(1-F(\tau))^{n-1}]$. 
\begin{align}\nonumber
 & \frac{\partial}{\partial \tau}[S^*(\tau)(1-F(\tau))^{n-1}]= \\
 \nonumber
& \frac{\partial}{\partial \tau}[S^*(\tau)] (1-F(\tau))^{n-1}-(n-1)f(\tau)(1-F(\tau))^{n-2}S^*(\tau)=\\
\nonumber
 & -\frac{(n-1)f(\tau)}{(1-F(\tau))^n}\left(C^I(q(\tau),\tau)+\textit{IR}(\tau,1)\right)(1-F(\tau))^{n-1}-(n-1)f(\tau)(1-F(\tau))^{n-2}S^*(\tau))=\\
 \label{deriv}
 &-(n-1)(1-F(\tau))^{n-2}f(\tau)[s(q(\tau))-C^P(q(\tau),\tau)], 
\end{align}
where we used \eqref{Sprime} for $\frac{\partial}{\partial \tau}[S^*(\tau)]$ and \eqref{Sdef} for $S^*(\tau)$.

The optimal quality in \eqref{profit_max} is $q^*((1-F(\tau))^{n-1},\theta)$ where $q^*(z,\theta)$ is notation from the above.
Differentiating \eqref{profit_max}, using Envelope Theorem and then using \eqref{deriv}, we get
\begin{multline*}
\frac{\partial}{\partial \tau}\Pi_{\max}(\tau|\theta)=\\-(n-1)(1-F(\tau))^{n-2}f(\tau)(s(q^*((1-F(\tau))^{n-1},\theta))-C^P(q^*((1-F(\tau))^{n-1},\theta),\theta))-\frac{\partial}{\partial \tau}[S^*(\tau)(1-F(\tau))^{n-1}]=\\
(n-1)(1-F(\tau))^{n-2}f(\tau)\left[s(q(\tau))-s(q^*((1-F(\tau))^{n-1},\theta))+C^P(q^*((1-F(\tau))^{n-1},\theta),\theta)-C^P(q(\tau),\tau)\right].
\end{multline*}
In the proof of part (ii) above we showed that $q(\tau)\equiv q^*((1-F(\tau))^{n-1},\tau)$. For brevity, denote $\tilde{q}_{\tau}(u):=q^*((1-F(\tau))^{n-1},u)$. We showed that $\tilde{q}_{\tau}(\tau)\equiv q(\tau)$.
Also, because $q^*(z,\theta)$ is strictly increasing in $z$ by \cref{qstarproperties}, $\tilde{q}_{\tau}(u)< \tilde{q}_{u}(u)\equiv q(u)$ for $u< \tau$. 
It follows from the above that $\frac{\partial}{\partial \tau}\Pi_{\max}(\tau|\theta)$ has the same sign as
\begin{multline*}
s(\tilde{q}_{\tau}(\tau))-s(\tilde{q}_{\tau},\theta)+
C^P(\tilde{q}_{\tau}(\theta),\theta)-C^P(\tilde{q}_{\tau}(\tau),\tau)=\\
\int_{\theta}^{\tau} s'(\tilde{q}_{\tau}(u))d\tilde{q}_{\tau}(u)-\int_{\theta}^{\tau}C^P_q(\tilde{q}_{\tau}(u),u)d\tilde{q}_{\tau}(u)-
\int_{\theta}^{\tau}C^P_{\theta}(\tilde{q}_{\tau}(u),u)du.
\end{multline*}
Plugging $s'(q)$ from \eqref{scoredef} and rearranging, we finally get that
$\frac{\partial}{\partial \tau}\Pi_{\max}(\tau|\theta)$ has the same sign~as
\begin{align} \label{1}
 & \int_{\theta}^{\tau}[C^P_q\tilde{q}_{\tau}(u),\theta(\tilde{q}_{\tau}(u)))-C^P_q(\tilde{q}_{\tau}(u),u)]d\tilde{q}_{\tau}(u)+\\ \label{2} &\int_{\theta}^{\tau}\frac{C^I_q(\tilde{q}_{\tau}(u),\theta(\tilde{q}_{\tau}(u)))}{(1-F(\theta(\tilde{q}_{\tau}(u))))^{n-1}}d\tilde{q}_{\tau}(u)-\\ \label{3}
& \int_{\theta}^{\tau} C^P_{\theta}(\tilde{q}_{\tau}(u),u)du.
\end{align}
Suppose $\theta<\tau$. We discussed above that for $u<\tau$ $\tilde{q}_{\tau}(u)< q(u)$. Applying the decreasing function $\theta(q)$ to both sides, we get $\theta(\tilde{q}_{\tau}(u))>\theta(q(u))\equiv u$. 
Hence, due to supermodularity of $C^P(q,\theta)$, $C^P_q\tilde{q}_{\tau}(u),\theta(\tilde{q}_{\tau}(u)))-C^P_q(\tilde{q}_{\tau}(u),u)>0$. Adding the fact that by lemma~\ref{qstarproperties}
$d\tilde{q}_{\tau}(u)<0$, we conclude that for $\theta<\tau$ the term \eqref{1} is negative. Likewise, for $\theta<\tau$ the term \eqref{2} is negative, as $C_q^I>0$ and $d\tilde{q}_{\tau}(u)<0$. The term \eqref{3} is positive for $\theta<\tau$ as $C^P_{\theta}>0$. To sum up, we get that for $\tau>\theta$ $\frac{\partial}{\partial \tau}\Pi_{\max}(\tau|\theta)<0$. Analogously, for $\tau<\theta$ $\frac{\partial}{\partial \tau}\Pi_{\max}(\tau|\theta)>0$; obviously, for 
$\tau=\theta$ $\frac{\partial}{\partial \tau}\Pi_{\max}(\tau|\theta)=0.$

We conclude that $\Pi_{\max}(\tau|\theta)$ is strictly increasing for $\tau<\theta$ and strictly decreasing for $\tau>\theta$; $\tau=\theta$ is the global maximum of $\Pi_{\max}(\tau|\theta)$, as desired. Together with part (2) shown above this implies that $(q(\theta), S^*(\theta))$ is indeed every firm's strategy in a symmetric BNE. 
\end{proof}

In proof above we showed not only that it is \emph{optimal} to bid $S^*(\tau)=S^*(\theta)$ but that the objective function $\Pi_{\max}(\tau|\theta)$ is \emph{single-peaked} (strictly quasi-concave) in $\tau$. This a stronger property that will be extremely useful in proving the correctness of implementation in the asymmetric case. For future reference, we state in the following lemma:
\begin{lemm}\label{single-peaked}
The score-bid-only objective function $\Pi_{\max}(\tau|\theta)\equiv \max_q\Pi(q,S^*(\tau)|\theta)$ is single-peaked in $\tau$ with the peak at $\tau=\theta$.
\end{lemm}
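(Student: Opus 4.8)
The assertion is, in fact, an immediate by-product of part (i) of the proof of \cref{symmimplement}, and the plan is simply to isolate that part of the argument and record the stronger (quasi-concavity) conclusion it already yields. First I would note that, given that the rivals bid according to $S^*(\cdot)$, bidding any score outside the range of $S^*$ is dominated, so without loss every score bid can be written as $S^*(\tau)$ for some type $\tau$, and $\Pi_{\max}(\tau|\theta)$ is exactly the maximized-over-quality profit \eqref{profit_max}. The inner maximization over $q$ has a unique interior solution $q^*((1-F(\tau))^{n-1},\theta)$ by the convexity and Inada-style conditions in \cref{regularity}, so $\Pi_{\max}(\tau|\theta)$ is differentiable in $\tau$ and the Envelope Theorem is applicable.

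Second, I would differentiate $\Pi_{\max}(\tau|\theta)$ in $\tau$ by the Envelope Theorem, using the already-computed derivative \eqref{deriv} of $S^*(\tau)(1-F(\tau))^{n-1}$; after plugging $s'(q)$ from \eqref{scoredef} and integrating by parts exactly as in the proof of \cref{symmimplement}, this reduces the problem to showing that $\frac{\partial}{\partial\tau}\Pi_{\max}(\tau|\theta)$ carries the same sign as the three-term expression \eqref{1}--\eqref{3}. Then I would sign those terms for $\tau\neq\theta$: since $q^*(z,\theta)$ is increasing in $z$ (\cref{qstarproperties}), $\tilde q_{\tau}(u)<q(u)$ for $u<\tau$ and $\tilde q_{\tau}(u)>q(u)$ for $u>\tau$, hence $\theta(\tilde q_{\tau}(u))>u$ in the former case and $<u$ in the latter; combined with the supermodularity of $C^P$, the positivity of $C^I_q$ (\cref{cost_properties}), and the sign of $d\tilde q_{\tau}(u)$, each of \eqref{1}, \eqref{2}, \eqref{3} acquires a definite sign, giving $\frac{\partial}{\partial\tau}\Pi_{\max}(\tau|\theta)<0$ for $\tau>\theta$ and $>0$ for $\tau<\theta$, with equality at $\tau=\theta$. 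A function whose derivative is strictly positive to the left of a point and strictly negative to its right is strictly quasi-concave with a unique peak there, which is precisely the claim.

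Since every computation needed was already carried out in the proof of \cref{symmimplement}, I do not expect a genuinely new obstacle: the only points requiring care are the book-keeping ones, namely confirming that the reduction to $\tau$-parametrized score bids is legitimate (bids outside the range of $S^*$ being strictly dominated) and that the Envelope Theorem applies because the quality choice is interior, both of which are guaranteed by the standing assumptions \cref{cost_properties,regularity}.
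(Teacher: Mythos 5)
Your proposal is correct and follows exactly the paper's route: the paper proves \cref{single-peaked} by simply citing the last part of the proof of \cref{symmimplement}, where the sign pattern of $\frac{\partial}{\partial\tau}\Pi_{\max}(\tau|\theta)$ (positive for $\tau<\theta$, zero at $\tau=\theta$, negative for $\tau>\theta$) was already established via the Envelope Theorem and the signing of the terms \eqref{1}--\eqref{3}. Your additional remarks on the legitimacy of restricting to bids in the range of $S^*$ and on the interiority of the quality choice are sound book-keeping but introduce nothing beyond what the paper's argument already contains.
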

\begin{proof}
Follows from the last part of proof of \cref{symmimplement}.
\end{proof}

\noindent
\textbf{Proof of \cref{elmatters}}
\begin{proof}
From \eqref{scoredef}, we have
\begin{equation}\label{optscore_ce}
s'^{*}(q)\equiv \theta^{E_P }(q)g'_1(q)+\frac{\beta\theta^{E_I }(q)g'_2(q)}{(1-F(\theta(q)))^{n-1}},   
\end{equation}
where $\theta(q)$ is the inverse of the optimal $q^*(\theta,n,\beta)$. We suppress the dependence of $\theta(q)$ on $\beta$ and $n$ for brevity.

At the same time, from the optimality of $\theta(q)$,
\begin{equation}\label{optqual_ce}
(1-F(\theta(q)))^{n-1}(v'(q)-(1+E_P \delta)\theta^{E_P }(q)g_1'(q))-\beta(1+E_I \delta)\theta^{E_I }(q)g_2'(q)\equiv 0.   
\end{equation}
Solving \eqref{optqual_ce} for $\beta\theta^{E_I }(q)g_2'(q)$ and plugging this in \eqref{optscore_ce}, we get
\[s'^*(q)\equiv \theta^{E_P }(q)g_1'(q)+\frac{v'(q)-(1+E_P \delta)\theta^{E_P }(q)g_1'(q)}{1+E_I \delta}=\frac{v'(q)+\delta(E_I -E_P )\theta^{E_P }(q)g'_1(q)}{1+E_I \delta}.\]
It follows from \eqref{symm_obj} that $\theta(q)$ is decreasing in both $\beta$ and $n$ for a fixed $q$. From this, the result is immediate.
\end{proof}

\section{Proofs for \cref{sec:suboptimal}}
\label{app:suboptimal}

\noindent
\textbf{Proof of \cref{smallalpharesult}}

\begin{proof}
We will show the suboptimality of the optimal symmetric mechanism for $\alpha=0$. Namely, we will show that for $\alpha=0$ the completely asymmetric mechanism in which only one bidder is left dominates the optimal symmetric mechanism. Then the result follows from continuity (the objective at the completely asymmetric mechanism and the objective at the optimal symmetric mechanism are continuous in $\alpha$).

When $\alpha=0$, the indirect investment costs $C(x,\theta)$ is just a function of $x$. Abusing notation, we write $C(x)$.
Denote by $H_i(x)$ the cdf of $x_i(\theta_i)$. We can write the objective as a functional of $H_i(x)$ and optimize over $H_i(x)$. 
That is,
\begin{align}\label{H-objective}
\nonumber
U & = \mathbb{E}\max_i x_i(\theta_i)-\mathbb{E}\sum_{i=1}^n C(x_i(\theta_i))\\
\nonumber
& =\int_0^{+\infty}\left(1-\prod_{i=1}^nH_i(x)\right)dx-\int_0^{+\infty}\sum_{i=1}^nC'(x)(1-H_i(x))dx\\
& =\int_0^{+\infty}\left(1-\prod_{i=1}^nH_i(x)-\sum_{i=1}^nC'(x)(1-H_i(x))\right)dx.
\end{align}
Ignoring the monotonicity constraint for $H_i(x)$, let's optimize \eqref{H-objective} pointwise. The integrand is linear in each of $H_i$, so without loss of generality at every $x$, $H_i(x)\in\{0,1\}$. It is easy to see that at optimum at every $x$ at most one of $H_i(x)$ is 0; otherwise the objective can be increased since $C'(x)\geq 0$ for $x>0$. It remains to compare two cases: all of $H_i$ are 1 and all but one of $H_i$ is 1, and the remaining one is zero. In the first case, the integrand is 0; in the second case, it is $1-C'(x)$. Thus, when $0\geq 1-C'(x)$, it is optimal to set all $H_i(x)$ to 1; if  $0<1-C'(x)$, it is optimal to set all but one $H_i(x)$ to 1. 
Suppose, without loss of generality, that bidder 1 is the unique bidder with $H_i(x)=0$ for $C'(x)<1$. Then, the optimal cds of $x_i(\theta_i)$ are
\begin{equation}\label{optmech}
H^*_1(x)=
\begin{cases}
0, & C'(x)<1;\\
1, & C'(x)\geq 1;
\end{cases}\quad\mbox{and for }i>1,  
H^*_i(x)=1\mbox{ for all }x\geq 0.
\end{equation}
These cdfs of $x_i(\theta_i)$ are readily implementable. 

The functions $H^*_i$ are nondecreasing, hence this is indeed the optimal mechanism.
This is the completely asymmetric mechanism in which only one bidder is allowed to participate and this bidder produces the monopoly surplus $x_m$ that satisfies $C'(x_m)=1$.

The optimal symmetric mechanism $M^*_{sym}$ maximizes \eqref{H-objective} pointwise under the constraint that $H_i(x)=H_j(x)$ for all $i,j$. Taking FOC, one gets $H^*_{sym}(x)=\sqrt[n-1]{C'(x)}$. After changing $H_1(x)$ from $H^*_{sym}(x)$ to $H_1^*(x)$ given by \eqref{optmech} the objective does not change but if after that $H_2(x)$ is changed from $H^*_{sym}(x)$ to $H_1^*(x)$ given by \eqref{optmech}, the objective strictly increases; hence $M^*_{sym}$ is not optimal and is strictly dominated by $M^*$.

Write $M^*(\alpha)$ and $M^*_{sym}(\alpha)$ for the optimal mechanism and the optimal symmetric mechanism as functions of $\alpha$. Write $\mathbb{E}U(M,\alpha)$ as the principal's expected utility as the function of mechanism and $\alpha$. We have shown above that 
\[\mathbb{E}U(M^*(0),0)>\mathbb{E}U(M_{sym}^*(0),0).\]
By continuity, this implies that
\[\mathbb{E}U(M^*(0),\alpha)>\mathbb{E}U(M_{sym}^*(\alpha),\alpha)\]
for all sufficiently small $\alpha$.\eop
\end{proof}

\noindent
\textbf{Proof of \cref{symmetricbad2}}
\begin{proof}
We will dominate the optimal symmetric mechanism with a ``score floor auction'' which is discussed in detail in section~\ref{sect:asym}. Denote by $x^*_{sym}(\theta)$ the optimal symmetric  $x$-outcome function. The score floor mechanism induces the $x$ outcomes as in \eqref{x-example} and similar to its quality outcomes \eqref{q_floor}:
\begin{gather}
x^*_1(\theta_1) = x^*_{sym}(\min\{\theta_0,\theta_1\}), \quad 
x^*_2(\theta_2) = \begin{cases}
x^*_{sym}(\theta_2), & \theta_2 < \theta_0\\
x^*_{sym}(1), & \theta_2 > \theta_0,
\end{cases}
\end{gather}
where $\theta_0$ is the threshold creating asymmetry. Note that if $\theta_0$ these outcomes correspond to optimal symmetric outcomes.
We will show that these $x$-outcome functions yield a higher utility for the principal if the threshold $\theta_0$ is slightly lower than 1 (so the mechanism is slightly asymmetric).

Set $n=2$. The principal's utility as a function of threshold $\theta_0$ can be written as
\[U(\theta_0)=n\int_0^{\theta_0}\max\limits_x\left[(1-F(\theta))^{n-1}x-C(x,\theta)\right]d\theta+\int_{\theta_0}^1\max\limits_x\left[(1-F(\theta_0))^{n-1}x-C(x,\theta)\right]d\theta.\]
Denote by $x^*(\theta,\theta_0)$ the solution to the problem $\max\limits_x\left[(1-F(\theta_0))^{n-1}x-C(x,\theta)\right]$. Note that $x^*_{sym}(\theta)\equiv x^*(\theta,\theta)$.
Differentiating $U(\theta_0)$, we get 
\[
U'(\theta_0)/f(\theta_0)=(n-1)\max\limits_x\left[(1-F(\theta_0))^{n-1}x-C(x,\theta_0)\right]-(n-1)(1-\theta_0)^{n-2}\int_{\theta_0}^1 x^*(\theta,\theta_0)d\theta,   
\]
where we used Envelope theorem to evaluate the derivative of the second addend in $U(\theta_0)$. This may be further rewritten as
\begin{equation}\label{U'}
U'(\theta_0)/(n-1)f(\theta_0)=(1-F(\theta_0))^{n-2}\left((1-F(\theta_0))x^*(\theta_0,\theta_0)-\int_{\theta_0}^1 x^*(\theta,\theta_0)d\theta\right)-C(x^*(\theta_0,\theta_0),\theta_0).   
\end{equation}
In what follows, we prove that $\lim\limits_{\theta_0\to 1}\frac{U'(\theta_0)}{(1-F(\theta_0))^{\frac{\gamma(n-1)}{\gamma-1}}}<0$ which implies the result. It follows from \eqref{U'} that
\[\lim\limits_{\theta_0\to 1}\frac{U'(\theta_0)}{(1-F(\theta_0))^{\frac{\gamma(n-1)}{\gamma-1}}}=(n-1)(L_1-L_2),\]
where 
\[L_1:=\lim\limits_{\theta_0\to 1}\frac{(1-F(\theta_0))^{n-2}\left((1-F(\theta_0))x^*(\theta_0,\theta_0)-\int_{\theta_0}^1 x^*(\theta,\theta_0)d\theta\right)}{(1-F(\theta_0))^{\frac{\gamma(n-1)}{\gamma-1}}},\]
and
\[L_2:=\lim\limits_{\theta_0\to 1}\frac{C(x^*(\theta_0,\theta_0),\theta_0)}{(1-F(\theta_0))^{\frac{\gamma(n-1)}{\gamma-1}}}.\]
To evaluate the limit $L_1$, we
use the first-order Taylor expansion of the function $t\to x^*(t,\theta_0)$ at $t=1$ and the 
second-order Taylor expansion of the function $t\to \int_{t}^1 x^*(\theta,\theta_0)d\theta$ at $t=1$. (The expansions are valid because $x^*(\theta,\theta_0)$ is differentiable in the first argument by Implicit Function theorem.) After simplifications, we get
\begin{align*}
 L_1=& \lim\limits_{\theta_0\to 1}\frac{(1-F(\theta_0))^{n-2}\left((1-F(\theta_0))x^*_{\theta}(1,\theta_0)(\theta_0-1)+x^*_{\theta}(1,\theta_0)(1-F(\theta_0))^2/2+o((1-F(\theta_0))^2)\right)}{(1-F(\theta_0))^{\frac{\gamma(n-1)}{\gamma-1}}}=\\
 & \lim\limits_{\theta_0\to 1}\frac{-x^*_{\theta}(1,\theta_0)(1-F(\theta_0))^n/2+o((1-F(\theta_0))^n)}{(1-F(\theta_0))^{\frac{\gamma(n-1)}{\gamma-1}}}=0,
\end{align*}
because $x^*_{\theta}(1,1)$ is finite and $n>\frac{\gamma(n-1)}{\gamma-1}$ since $\gamma>n=2$.

Now consider the limit $L_2$. Obviously, $L_2\geq 0$. We need a stronger statement that $L_2>0$ which we prove below.

First, denoting $\lim\limits_{x\to\underline{x}^+}\frac{C(x,1)}{(x-\underline{x})^{\gamma}}:=D>0$, by l'Hospital's rule we get that $\lim\limits_{x\to\underline{x}^+}\frac{C_x(x,1)}{(x-\underline{x})^{\gamma-1}}=\gamma D>0$. Note that this implies that $C_x(\underline{x},1)=0$.
Thus, 
\[\lim\limits_{x\to\underline{x}^+}\frac{C(x,1)}{[C_x(x,1)]^{\frac{\gamma}{\gamma-1}}}=\left[\lim\limits_{x\to\underline{x}^+}\frac{[C(x,1)]^{\gamma-1}}{[C_x(x,1)]^{\gamma}}\right]^{\frac{1}{\gamma-1}}=\left[\lim\limits_{x\to\underline{x}^+}\frac{D^{\gamma-1}(x-\underline{x})^{\gamma(\gamma-1)}}{(\gamma D)^{\gamma}(x-\underline{x})^{(\gamma-1)\gamma}}\right]^{\frac{1}{\gamma-1}}=\frac{D}{(\gamma D)^{\frac{\gamma}{\gamma-1}}}>0.\]

Now substitute $x=x^*(1,\theta_0)$ with $\theta_0\to 1$ to the above limit (this is valid since $x^*(1,1)=\underline{x}$ which follows from $C_x(\underline{x},1)=0$.). Note that by FOC $ C_x(x^*(1,\theta_0),1)\equiv (1-F(\theta_0))^{n-1}$. Thus,
\[0<\frac{D}{(\gamma D)^{\frac{\gamma}{\gamma-1}}}=\lim\limits_{x\to\underline{x}^+}\frac{C(x,1)}{[C_x(x,1)]^{\frac{\gamma}{\gamma-1}}}=\lim\limits_{\theta_0\to 1}\frac{C(x^*(1,\theta_0),1)}{[C_x(x^*(1,\theta_0),1)]^{\frac{\gamma}{\gamma-1}}}=\lim\limits_{\theta_0\to 1}\frac{C(x^*(1,\theta_0),1)}{(1-F(\theta_0))^{\frac{\gamma(n-1)}{\gamma-1}}}.\]
Finally, note that \[0<\lim\limits_{\theta_0\to 1}\frac{C(x^*(1,\theta_0),1)}{(1-F(\theta_0))^{\frac{\gamma(n-1)}{\gamma-1}}}=\lim\limits_{\theta_0\to 1}\lim\limits_{\theta\to 1}\frac{C(x^*(\theta,\theta_0),\theta)}{(1-F(\theta_0))^{\frac{\gamma(n-1)}{\gamma-1}}}=\lim\limits_{\substack{\theta\to 1\\\theta_0\to 1}}\frac{C(x^*(\theta,\theta_0),\theta)}{(1-F(\theta_0))^{\frac{\gamma(n-1)}{\gamma-1}}}=L_2.\]

To sum up, 
\[\lim\limits_{\theta_0\to 1}\frac{U'(\theta_0)}{(1-F(\theta_0))^{\frac{\gamma(n-1)}{\gamma-1}}}=(n-1)(L_1-L_2)=(n-1)(0-L_2)<0.\]
\end{proof}

\section{Proofs for \cref{sect:asym}}
\label{app:asym}

\noindent
\textbf{Proof of \cref{main result}}:

\begin{proof}
Suppose marginal investment costs are sufficiently convex. By definition, if marginal costs are sufficiently convex the function $q\to \alpha \xi(g'(q)))-q=\alpha(1-J(F^{-1}(1-g'(q))))-q$ is strictly quasi-convex. Quasi-convexity is preserved under any monotone transformation of the argument, so plugging $q=g'^{-1}(1-F(\theta))$, we get that the function $\alpha(1- J(\theta))-g'^{-1}(1-F(\theta))$ is strictly quasi-convex as well.

In \cref{zhangzhang}, we discuss the relation to \cite{zhang2017auctions} at length and show the equivalence  between our models under \cref{separablestructure}.

By \cref{to_Zhang} in \cref{zhangzhang}, 
the optimal allocation $z^*_i(\bm\theta)$ and action schedules $a^*_i(\theta_i)$ in our problem under \cref{separablestructure} are the same as the optimal allocation $z^*_i(\bm\theta)$ and equilibrium action schedules $a^*_i(\theta_i)$ in the problem in \cite{zhang2017auctions} in which an agent's utility is $V(a)+\alpha\mu-\alpha$, investment costs are $g(a)$ and the distribution of types is $F^{\mu}(\mu)=1-F(1-\mu)$. (We use the notation $\mu$ for type in \cite{zhang2017auctions} to distinguish it from the type in our paper; the two are different because utility decreases in type in our paper while increases in type in \cite{zhang2017auctions}; to make the connection, we set $\theta=1-\mu$.)

It remains to employ the analysis in \cite{zhang2017auctions} to determine the structure of optimal allocation under sufficiently convex and sufficiently concave marginal costs. To this end, let's translate the function $\alpha(1- J(\theta))-g'^{-1}(1-F(\theta))$ into the language of \cite{zhang2017auctions}.
\begin{align*}
 \alpha(1- J(\theta))-g'^{-1}(1-F(\theta)) & = \alpha\left(1-\theta-\frac{F(\theta)}{f(\theta)}\right)-g'^{-1}(1-F(\theta)) \\
 & = \alpha\left(\mu-\frac{F(1-\mu)}{f(1-\mu)}\right)-g'^{-1}(1-F(1-\mu))\\
 & = \alpha\left(\mu-\frac{1-F^{\mu}(\mu)}{f^{\mu}(\mu)}\right)-g'^{-1}(F^{\mu}(\mu)),
\end{align*}
where we used the definitions of $\mu$ and $F^{\mu}(\cdot)$.

Because the function $\theta\to \alpha(1- J(\theta))-g'^{-1}(1-F(\theta))$ was strictly quasi-convex and the substitution $\mu=1-\theta$ is monotone, the function 
\begin{equation}\label{our_function}
\alpha\left(\mu-\frac{1-F^{\mu}(\mu)}{f^{\mu}(\mu)}\right)-g'^{-1}(F^{\mu}(\mu))    
\end{equation}
is strictly quasi-convex as well.

\cite{zhang2017auctions} considers only quadratic investment costs, $g(a)=a^2/2K$, and the agent's utility function $a+\mu$. She finds that the structure of the optimal allocation is determined by the monotonicity properties of the function 
\begin{equation}\label{zhang_function}
\mu-\frac{1-F^{\mu}(\mu)}{f^{\mu}(\mu)}-KF^{\mu}(\mu).     
\end{equation}

By following the same steps as in \cite{zhang2017auctions} but with more general investment costs $g(a)$ and the utility $a+\alpha\mu$, we reach the conclusion that the appropriate generalization of \eqref{zhang_function}
is exactly \eqref{our_function}.
(This can also be derived using the approach in \cite{gershkov2021theory}.) 
We will invoke the appropriate generalizations of the results in \cite{zhang2017auctions}, replacing \eqref{zhang_function} with \eqref{our_function}. 
As \eqref{our_function} is strictly quasi-convex, it either (i) strictly decreasing, (ii) strictly increasing, or (iii) first strictly decreasing and then strictly increasing. In the case (i) by Corollary 1, part I in \cite{zhang2017auctions}, the optimal mechanism is symmetric so the quality schedules \eqref{q_floor} with  $\theta_0=1$ ($\mu_0=0$) are optimal. In the case (ii) by Theorem 1, part I in \cite{zhang2017auctions}, the optimal mechanism is completely asymmetric so that one bidder gets the contract with probability 1. This corresponds to quality schedules \eqref{q_floor} with $\theta_0=0$ ($\mu_0=1$). Finally, in the case (iii) by Theorem 1 in \cite{zhang2017auctions} there exists 
$\mu_0\in[0,1]$ such that an optimal mechanism is symmetric for 
$\mu>\mu_0$ while having $[0,\mu_0]$ is a ``favored bidder interval''. That is, for $\mu<\mu_0$ one bidder is ``favored'' and gets the contract with probability $F^{\mu}(\mu_0)$ and the other bidder is ``unfavored'' and gets the contract with probability 0. Translating this back into world of $\theta$, we get the allocation functions \eqref{right_allocation} with $\theta_0=1-\mu_0$
The quality functions \eqref{q_floor} then follow either from principal's optimization in our model or from agent's optimization in \cite{zhang2017auctions}, as in lemma~\ref{to_Zhang}.

The case of sufficiently concave marginal investment costs is considered analogously.
\end{proof}

\noindent
\textbf{Proof of \cref{implement-1}:}
\begin{proof}
We shall show that the score and quality strategies specified in the proposition constitute a BNE in the score floor auction.

{We again split the analysis of the optimality of the conjectured strategy into (i) the choice of the score when quality is chosen optimally and (ii) the choice of quality when the score is chosen at the equilibrium level. 

Part (ii), when the score is above the floor, follows from the analysis of the symmetric case in \cref{symmimplement}. When the unfavored firm is playing the score floor, she is guaranteed to lose. Thus, she produces the lowest possible quality $q = 0$, effectively exiting. When the favored firm is playing the score floor, she faces the same probability of winning as the threshold type $\textit{PW}_{sym}(\theta_0)$ and receives the bonus, independently of her own type. Moreover, type and quality do not interact under \cref{separablestructure} when evaluating the costs. Thus, she produces the same quality as the threshold type, $q_{sym}^*(\theta_0)$.}

To prove part (i), suppose a firm $i$ bids the score $S^*(\tau|\theta_0)$ while the alleged BNE strategy is $S_i^*(\theta|\theta_0)$. We shall show that it is optimal for the firm to pick $\tau=\theta$.

Let $\Pi_{\max}^{sym}(\tau|\theta)$ by the profit function in the symmetric mechanism considered in the proof of \cref{symmimplement} (the maximal-over-quality profit if the score bid is $S^*(\tau)$).  Let $\Pi_{\max}^{floor,1}(\tau|\theta,\theta_0)$ be the analogous profit function of the favored bidder in the score-floor asymmetric mechanism with threshold $\theta_0$. Let $\Pi_{\max}^{floor,2}(\tau|\theta,\theta_0)$ be the profit function of the unfavored bidder in this mechanism.

Recall that $\textit{IR}(a,b):=\int_{a}^b[(1-F(u))C^P_{\theta}(q^*(1-F(u),u),u)+C^I_{\theta}(q^*(1-F(u),u),u)]du$. It is not hard to see that, given the equilibrium score strategies, $\Pi_{\max}^{sym}(\tau|\theta)$ and $\Pi_{\max}^{floor,i}(\tau|\theta,\theta_0)$ are related for $\tau\leq \theta_0$ by
\begin{gather}\label{profits-relation_fav} 
\Pi_{\max}^{floor,1}(\tau|\theta,\theta_0)= \Pi_{\max}^{sym}(\tau|\theta)-\textit{IR}(\theta_0,1)+B\\
\label{profits-relation} 
\Pi_{\max}^{floor,2}(\tau|\theta,\theta_0)= \Pi_{\max}^{sym}(\tau|\theta)-\textit{IR}(\theta_0,1),
\end{gather}
where $B$ is the bonus for the favored bidder.

We consider two cases, with two sub-cases each.

\textbf{Case 1a. Favored bidder, true type $\theta<\theta_0$.}
Bidding a score $S\geq S_r$ is the same as bidding a score $S^*(\tau)$ for some $\tau\leq \theta_0$. By \eqref{profits-relation_fav}, $\Pi_{\max}^{floor,1}(\tau|\theta,\theta_0)$ differs from $\Pi_{\max}^{sym}(\tau|\theta)$ by a constant, hence it is maximal at the same value of $\tau$ which, as shown in the proof of \cref{symmimplement}, is $\tau=\theta$. Thus, if the bidder enters, it is optimal to bid $S^*(\theta)$, as desired. And the bidder will enter, as her profit when choosing $\tau=\theta$ is
\begin{align*}
 \Pi_{\max}^{sym}(\theta|\theta)-\textit{IR}(\theta_0,1) +B =
\textit{IR}(\theta,1) -\textit{IR}(\theta_0,1) +B =
\textit{IR}(\theta,\theta_0) +B >0,   
\end{align*}
where the first equality is from the proof of \cref{symmimplement}, the second is by the definition of $\textit{IR}(a,b)$, and the inequality is by the definition of $\textit{IR}(a,b)$ and $B$.

\textbf{Case 1b. Favored bidder, true type $\theta>\theta_0$.}
We need to show that it is optimal for the favored bidder to bid the reserve score $S_r=S^*(\theta_0)$, that is, to choose $\tau=\theta_0$. $\Pi_{\max}^{floor,1}(\tau|\theta,\theta_0)$ differs from $\Pi_{\max}^{sym}(\tau|\theta)$ by a constant, but by \cref{single-peaked},  $\Pi_{\max}^{sym}(\tau|\theta)$ is for every $\theta$ single-peaked in $\tau$ with the peak at $\tau=\theta$. It follows that $\Pi_{\max}^{floor,1}(\tau|\theta,\theta_0)$ has the same property.
Thus, for $\tau<\theta_0<\theta$ we have 
\[\Pi_{\max}^{floor,1}(\tau|\theta,\theta_0)<\Pi_{\max}^{floor,1}(\theta_0|\theta,\theta_0)<\Pi_{\max}^{floor,1}(\theta|\theta,\theta_0).\]
The first inequality says that for the types $\theta>\theta_0$ it is optimal to bid the reserve score $S_r=S^*(\theta_0|\theta_0)$. It remains to check that the mechanism is individually rational for the favored bidder with a type $\theta>\theta_0$, with the type $\theta=1$ having zero profit. This is ensured by the correct value of bonus $B$. Indeed,
\begin{align*}
\Pi_{\max}^{floor,1}(\theta_0|\theta,\theta_0)= & \Pi_{\max}^{sym}(\theta_0|\theta)-\textit{IR}(\theta_0,1) +B\geq\\
& \Pi_{\max}^{sym}(\theta_0|1)-\textit{IR}(\theta_0,1) +B=\\
& \Pi_{\max}^{sym}(\theta_0|1)-\Pi_{\max}^{sym}(\theta_0|\theta_0)+B=\\
& -B+B=0,
\end{align*}
where the inequality follows from the fact that $\Pi_{\max}^{sym}(\tau|\theta)$ is decreasing in $\theta$, the second equality follows from \cref{symmimplement} and the third equality follows from the fact that 
$\Pi_{\max}^{sym}(\theta_0|1)-\Pi_{\max}^{sym}(\theta_0|\theta_0)=-\int_{\theta_0}^1[(1-F(\theta_0))C^P_{\theta}(q^*(1-F(\theta_0),u),u)+C^I_{\theta}(...)]du=-B$ for $B=\alpha(1-\theta_0)(1-F(\theta_0)$, $C^P_{\theta}=\alpha$ and $C^I_{\theta}=0$. For $\theta=1$, the inequality holds as equality, so the type $\theta=1$ gets a profit of 0.

\textbf{Case 2a. Unfavored bidder, true type $\theta<\theta_0$.}
By the proof of \cref{symmimplement}, $\tau=\theta$ maximizes $\Pi_{\max}^{sym}(\tau|\theta)$.
By \eqref{profits-relation}, for $\tau\leq \theta_0$ $\Pi_{\max}^{floor}(\tau|\theta,\theta_0)$ differs from  $\Pi_{\max}^{sym}(\tau|\theta)$ by a constant, thus $\tau=\theta$ maximizes $\Pi_{\max}^{floor}(\tau|\theta,\theta_0)$ over all $\tau\leq\theta_0$. The profit from bidding $\tau=\theta$ is, by \eqref{profits-relation},
\begin{align*}
\Pi_{\max}^{floor,2}(\tau|\theta,\theta_0)= \Pi_{\max}^{sym}(\theta|\theta)-\textit{IR}(\theta_0,1) =
& \textit{IR}(\theta,1) -\textit{IR}(\theta_0,1) =\textit{IR}(\theta,\theta_0)>0,   
\end{align*}
On the other hand, bidding a score $S<S_r=S^*(\theta_0|\theta_0)$ would yield the bidder a maximal profit of 0 and so is no better than bidding $S^*(\theta|\theta_0)$. Thus, choosing $\tau=\theta$ is optimal and individually rational.

\textbf{Case 2b. Unfavored bidder, true type $\theta>\theta_0$.}
Choosing any $\tau\leq \theta_0$ would yield such a bidder a profit of
\begin{equation*}
\Pi_{\max}^{sym}(\tau|\theta)-\textit{IR}(\theta_0,1) \leq 
\Pi_{\max}^{sym}(\theta|\theta)-\textit{IR}(\theta_0,1) =
\textit{IR}(\theta,1)-\textit{IR}(\theta_0,1)=-\textit{IR}(\theta_0,\theta) < 0,
\end{equation*}
where the inequality follows from the fact that $\tau=\theta$ maximizes $\Pi_{\max}^{sym}(\tau|\theta)$ by \cref{symmimplement}.
{Thus, it is better to choose the score floor, effectively exiting the auction.} The unfavored bidder with type $\theta=1$ gets a profit of 0.
\end{proof}

\noindent
\textbf{Proof of \cref{implement-2}:}
\begin{proof}
We shall show that the score and quality strategies specified in the proposition constitute a BNE in score ceiling auction.

{We again split the analysis of the optimality of the conjectured strategy into (i) the choice of the score when quality is chosen optimally and (ii) the choice of quality when the score is chosen at the equilibrium level. 

Part (ii), when the score is below the ceiling, follows from the analysis of the symmetric case in \cref{symmimplement}. When the favored firm is playing the score ceiling, she is guaranteed to win. Thus, she produces the optimal monopolistic quality $q_{sym}^*(0)$. When the unfavored firm is playing the score ceiling, she faces the same probability of winning as the threshold type $\textit{PW}_{sym}(\theta_0)$ and pays a kickback, independently of her own type. Moreover, type and quality do not interact under \cref{separablestructure} when evaluating the costs. Thus, she produces the same quality as the threshold type, $q_{sym}^*(\theta_0)$.}

To prove part (i), suppose a firm $i$ bids the score $S^*(\tau)$ while the alleged BNE strategy is $S_i^*(\theta)$. We shall show that it is optimal for the firm to pick $\tau=\theta$.

Recall that  $\Pi_{\max}^{sym}(\tau|\theta)$ is the profit function in the symmetric mechanism considered in the proof of \cref{symmimplement} (the maximal-over-quality profit if the score bid is $S^*(\theta)$).  Let $\Pi_{\max}^{ceil,1}(\tau|\theta,\theta_0)$ be the analogous profit function of the favored bidder in the score-ceiling asymmetric mechanism with threshold $\theta_0$. Let $\Pi_{\max}^{ceil,2}(\tau|\theta,\theta_0)$ be the profit function of the unfavored bidder in this mechanism.

It is clear that it is never optimal for the favored bidder to bid a score $S>\bar{S}$ since bidding a slightly lower score gives her the same winning probability of 1, but ensures a higher price. At the same time, the unfavored bidder would not bid $S>\bar{S}$ since her bid won't count in that case. So both bidders won't bid scores higher than $\bar{S}=S^*(\theta_0)$. This means that we can restrict attention to deviations $\tau\geq \theta_0$.

We again consider two cases, with two sub-cases each.

For $i=1,2$ and all $\tau>\theta_0$
\begin{equation}\label{asym=sym}
\Pi_{\max}^{ceil,i}(\tau|\theta,\theta_0)=\Pi_{\max}^{sym}(\tau|\theta).
\end{equation}

\textbf{Case 1a. Favored bidder, true type $\theta>\theta_0$.}
By \eqref{asym=sym}, for $\tau>\theta_0$ $\Pi_{\max}^{ceil,1}(\tau|\theta,\theta_0)=\Pi_{\max}^{sym}(\tau|\theta)$, so unprofitability of such deviations follows from \cref{symmimplement}. 

Now consider the deviation $\tau=\theta_0$ whereby the favored bidder bids $\bar{S}=S^*(\theta_0)$ and wins for sure.
In this case, she will get a profit of
\[\Pi_{\max}^{ceil,1}(\theta_0|\theta,\theta_0)=\max_q(q-g(q))-\alpha\theta-\bar{S}-K.\]
By definition of the score ceiling, \[\bar{S} = S^*_{sym}(\theta_0) = \max_q\left[q-\frac{g(q)}{1-F(\theta_0)}\right]-\frac{\textit{IR}(\theta_0,1)}{1-F(\theta_0)},\]
while the kickback $K$ is by \eqref{kickback1}:
\begin{align*}
K=& \max_q[q-g(q)]-\max_q\left[q-\frac{g(q)}{1-F(\theta_0)}\right]+\alpha\frac{F(\theta_0)}{1-F(\theta_0)}\int_{\theta_0}^1(1-F(u))du\\
& = \max_q[q-g(q)]-\max_q\left[q-\frac{g(q)}{1-F(\theta_0)}\right]+\frac{F(\theta_0)}{1-F(\theta_0)}\textit{IR}(\theta_0,1).
\end{align*}
Taking stock, 
\[\Pi_{\max}^{ceil,1}(\theta_0|\theta,\theta_0)=\textit{IR}(\theta_0,1)-\alpha(\theta-\theta_0).\]
Thus,
\begin{align}\nonumber
&\Pi_{\max}^{ceil,1}(\theta|\theta,\theta_0)-\Pi_{\max}^{ceil,1}(\theta_0|\theta,\theta_0)=
\Pi_{\max}^{sym}(\theta|\theta)-\Pi_{\max}^{ceil,1}(\theta_0|\theta,\theta_0)=\\
\nonumber
& \textit{IR}(\theta,1)-\textit{IR}(\theta_0,1)+\alpha(\theta-\theta_0)=
\alpha(\theta-\theta_0)-\textit{IR}(\theta_0,\theta)=\\
\label{ineqq}
& \alpha(\theta-\theta_0)-\alpha\int_{\theta_0}^{\theta} (1-F(u))du
=\alpha\int_{\theta_0}^{\theta}F(u)du>0,
\end{align}
where the first equality follows from \eqref{asym=sym} and the
inequality follows from $\theta>\theta_0$. This means that the deviation from $\tau=\theta$ to $\tau=\theta_0$ is unprofitable as well, so bidding $\tau=\theta$ is optimal, as desired.

The above analysis demonstrates the role of the kickback~--- without it the final inequality would not necessarily hold and some types $\theta>\theta_0$ would switch to bidding $\bar{S}$ with guaranteed victory and choose a different (higher) quality as a result; the desired quality schedule would not be implemented.

\textbf{Case 1b. Favored bidder, true type $\theta<\theta_0$.}
According to the proposed equilibrium, such favored bidder types should find it optimal to bid $\bar{S}=S^*(\theta_0)$, i.e., $\tau=\theta_0$.

For $\theta<\theta_0$, the inequality \eqref{ineqq} is reversed, so
\[\Pi_{\max}^{ceil,1}(\theta_0|\theta,\theta_0)>\Pi_{\max}^{sym}(\theta|\theta).\]
But by \cref{symmimplement}, $\Pi_{\max}^{sym}(\theta|\theta)>\Pi_{\max}^{sym}(\tau|\theta)$ for all $\tau\neq \theta$, including $\tau>\theta_0$. Thus, for $\tau>\theta_0$,
\[\Pi_{\max}^{ceil,1}(\theta_0|\theta,\theta_0)>\Pi_{\max}^{sym}(\theta|\theta)>\Pi_{\max}^{sym}(\tau|\theta)=\Pi_{\max}^{ceil,1}(\tau|\theta,\theta_0),\]
where the last equality follows from \eqref{asym=sym}.
This implies the unprofitability of all deviations $\tau>\theta_0$.

\textbf{Case 2a. Unfavored bidder, true type $\theta>\theta_0$.}
By \cref{symmimplement}, $\Pi_{\max}^{sym}(\theta|\theta)\geq \Pi_{\max}^{sym}(\tau|\theta)$ for all $\tau$, including $\tau\geq \theta_0$. Thus, by \eqref{asym=sym}, 
\[\Pi_{\max}^{ceil,2}(\theta|\theta,\theta_0)\geq \Pi_{\max}^{ceil,2}(\tau|\theta,\theta_0)\]
for all $\tau\geq \theta_0$. Thus, it is optimal for the unfavored bidder with type $\theta>\theta_0$ to bid $S^*(\theta)$ among all $S\leq \bar{S}$, as desired.

\textbf{Case 2b. Unfavored bidder, true type $\theta<\theta_0$.}
By \cref{single-peaked} (single-peakedness of $\Pi_{\max}^{sym}(\tau|\theta)$), for $\theta<\theta_0\leq \tau$ we have
$\Pi_{\max}^{sym}(\theta_0|\theta)\geq \Pi_{\max}^{sym}(\tau|\theta)$. Hence, by \eqref{asym=sym}, 
\[\Pi_{\max}^{ceil,2}(\theta_0|\theta,\theta_0)\geq \Pi_{\max}^{ceil,2}(\tau|\theta,\theta_0)\]
for $\tau\geq \theta_0$. Thus,  it is optimal for the unfavored bidder with type $\theta<\theta_0$ to bid $S^*(\theta_0)$ among all $S\leq \bar{S}$, as desired.

Finally, let us note that in all 4 cases the type $\theta=1$ gets a profit of $\Pi_{\max}^{sym}(1|1)=0$, so the mechanism is individually rational, and the least-efficient type gets zero profit.
\end{proof}

\noindent
\textbf{Derivation for the constant-elasticity example:}
\begin{proof}
It follows from \cref{g'''} that if $\gamma>2$ an optimal mechanism is either score floors, sole-sourcing, or symmetric while if $\gamma<2$ an optimal mechanism is either score ceilings, sole-sourcing, or symmetric.

For a score floors mechanism, the FOC for the optimal threshold $\theta_0$ is 
\begin{equation}\label{focfloor}
\varphi(1-\theta_0,\theta_0)=\int_{\theta_0}^1x^*(1-\theta_0,\theta)d\theta.
\end{equation}
Plugging in $\varphi(z,\theta)=\frac{\gamma-1}{\gamma}z^{\frac{\gamma}{\gamma-1}}-{2\alpha}\theta z$ and  $x^*(z,\theta)=\varphi'_z=z^{\frac{1}{\gamma-1}}-{2\alpha}\theta$, we get an equation which is under $\theta_0<1$ simplified to
\[(1-\theta_0)^{\frac{\gamma-2}{\gamma-1}}=\frac{1}{{\alpha}\gamma}.\]

Sole-sourcing corresponds to $\theta_0=0$ and it obtains if the derivative of the objective at  $\theta_0=0$ is non-positive. Thus, it will obtain if $1-\frac{1}{{\alpha}\gamma}\leq 0$, ${\alpha}\leq 1/\gamma$, otherwise there will be a $\theta_0\in(0,1)$ satisfying FOC and giving a score floors mechanism.

For a score ceiling mechanism, the FOC for the optimal threshold $\theta_0$ is
\begin{equation}\label{focceiling}
\varphi(1,\theta_0)-\varphi(1-\theta_0,\theta_0)=\int_0^{\theta_0}x^*(1-\theta_0,\theta)d\theta.
\end{equation}
Plugging in $\varphi$ and $x^*$, we get
\begin{equation}\label{focleft}
\frac{\gamma-1}{\gamma}\left(1-(1-\theta_0)^{\frac{\gamma}{\gamma-1}}\right)-{\alpha}\theta_0^2=\theta_0(1-\theta_0)^{\frac{1}{\gamma-1}}.  
\end{equation}
In general, there is no closed-form solution to this equation. %
Sole-sourcing now corresponds to $\theta_0=1$ while the symmetric solution to $\theta_0=0$. To obtain an ${\alpha}$ threshold separating the sole-sourcing region from the score ceiling region, plug in $\theta_0=1$ to \eqref{focleft}. To obtain an ${\alpha}$ threshold separating the symmetric solution region from the score ceiling region, solve \eqref{focleft} for ${\alpha}$ and then take the limit when $\theta_0\to0$.
\end{proof}

\section{Proofs for \cref{sec:additional}}
\label{app:additional}
\textbf{Proof of \cref{compstat1}:}
\label{proofofcompstat}
\begin{proof}
Recall that $J(\theta)\equiv \theta+F(\theta)/f(\theta)$. By Assumption~\ref{regularity}, $J(\theta)$ is increasing. Under Assumption~\ref{separablestructure}, $\varphi(z,\theta)=\max_q(q\cdot z-g(q))-\alpha J(\theta)z$. Denote $H(z):=\max_q(q\cdot z-g(q))$. 

Consider the determination of the optimal threshold $\theta_0$ for the score floor mechanism first. Rewriting \eqref{focfloor}, we get that
   \begin{gather*}
      U'_{\theta_0}=\varphi(1-F(\theta_0),\theta_0)f(\theta_0)-f(\theta_0)\int_{\theta_0}^1 \varphi_z(1-F(\theta_0),\theta)d\theta=\\
      f(\theta_0)\left(H(1-F(\theta_0))-\alpha J(\theta_0)(1-F(\theta_0))-\int_{\theta_0}^1 (H'(1-F(\theta_0))-\alpha J(\theta))f(\theta)d\theta\right).
   \end{gather*}
 Thus, \[U''_{\theta_0,\alpha}=f(\theta_0)(1-F(\theta_0))\left(\frac{\int_{\theta_0}^1 J(\theta)f(\theta)d\theta}{1-F(\theta_0)}-J(\theta_0)\right)>0,\]
 where the inequality follows from the fact that $o(\theta)$ is strictly increasing. By the standard monotone comparative statics theorem, the optimal $\theta_0^*(\alpha)$ must be a weakly increasing function.

Now consider the score ceiling mechanism. Rewriting ..., we get that
 \begin{gather*}
U'_{\theta_0}=\varphi(1,\theta_0)f(\theta_0)-\varphi(1-F(\theta_0),\theta_0)f(\theta_0)-
f(\theta_0)\int_0^{\theta_0}\varphi_z(1-F(\theta),\theta)f(\theta)d\theta=\\
      f(\theta_0)\left(H(1)-\alpha J(\theta_0)-H(1-F(\theta_0))+\alpha J(\theta_0)(1-F(\theta_0)-\int_{0}^{\theta_0} (H'(1-F(\theta_0))-\alpha J(\theta))f(\theta)d\theta\right).
   \end{gather*}
 Thus, \[U''_{\theta_0,\alpha}=f(\theta_0)F(\theta_0)\left(\frac{\int_0^{\theta_0}J(\theta)f(\theta)d\theta}{F(\theta_0)}-J(\theta_0)\right)<0,\]
  where the inequality follows from the fact that $o(\theta)$ is strictly increasing. By the standard monotone comparative statics theorem, the optimal $\theta_0^*(\alpha)$ must be a weakly decreasing function.
   \end{proof}
  
\noindent
\textbf{Proof of \cref{effvsopt}:}
\begin{proof}
The efficient mechanism is characterized by a statement analogous to \cref{main result} in which in the definition of the function $\xi(\cdot)$ one replaces $J(\theta)$ with just $\theta$. With $F(\theta)=\theta$ $J(\theta)$ is proportional to $\theta$ and thus \cref{g'''} applies fully to the efficient mechanisms as well. Hence, an optimal mechanism is either a ``score floors'' mechanism or a ``score ceilings'' mechanism. Now note that the efficient mechanism in a given setting 1 is the same as a buyer-optimal mechanism in another setting 2 such that \emph{virtual} costs in setting 2 are equal to costs in setting 1. 
If $C^P={\alpha}\theta$, the \emph{virtual} production costs $C^P+C^P_{\theta}F/f$ are exactly ${2\alpha}\theta$. Thus, the efficient mechanism for $C^P={2\alpha}\theta$ is equal to the buyer-optimal mechanism for $C^P={\alpha}\theta$. So we need to compare the buyer-optimal mechanism for $C^P={\alpha}\theta$ with the buyer-optimal mechanism for $C^P={2\alpha}\theta$. Since $\alpha$ halves and thus decreases, by \cref{compstat1} the optimal threshold moves in a way that makes the mechanism less symmetric. Thus, the efficient mechanism is less symmetric than the buyer-optimal mechanism.
\end{proof}

\noindent
\textbf{Proof of \cref{1orall}}
\begin{proof}
Recall that $\varphi(z,\theta)\equiv \max_x\left(zx-C(x,\theta)\right)$.

For $C(x,\theta)=g(x+{2\alpha}\theta)$, $\varphi(z,\theta)=H(z)-{2\alpha}\theta z$ for some function $H(z)$. Denote $h(e):=H(z)/z$. First, we show that the condition that $\frac{g(t)}{\sqrt{g'(t)}}$ is strictly increasing implies that $h'(z)z^{3/2}$ is increasing.

Indeed, given that $g(0)=0$ it is easy to show that
\[h(e)=\begin{cases}
0, & z<g'(0);\\
g'^{-1}(z)-\frac{g(g'^{-1}(z))}{z}, & z\geq g'(0).
\end{cases}\]
and thus after simplifications
\[h'(z)z^{3/2}=\begin{cases}
0, & z<g'(0);\\
\frac{g(g'^{-1}(z))}{\sqrt{z}}, & z\geq g'(0).
\end{cases}\]
As $g'^{-1}(z)$ is increasing and $\frac{g(t)}{\sqrt{g'(t)}}$ is increasing by assumption, $h'(z)z^{3/2}$ is increasing.

Now, the principal's payoff from playing the optimal symmetric mechanism among $k$ remaining bidders is
\begin{equation}\label{U_k}
 U(k)=\int_0^1k\left(H((1-\theta)^{k-1})-{2\alpha}\theta(1-\theta)^{k-1}\right)d\theta.   
\end{equation}

We shall show that $U(k)$ is quasi-convex when $k$ is treated as a continuous variable\footnote{The integrand in \eqref{U_k} is \emph{quasi-concave} in $k$. However, a sum (integral) of quasi-concave functions can be quasi-convex and not quasi-concave.} This implies the result.

Using the substitution $y=(1-\theta)^{k-1}$, $U(k)$ may be rewritten as
\[U(k)=\frac{k}{k-1}\int_0^1\left(H(y)-{2\alpha}(1-y^{\frac{1}{k-1}})y\right)y^{\frac{2-k}{k-1}}dy=\frac{k}{k-1}\int_0^1\left(h(y)-{2\alpha}(1-y^{\frac{1}{k-1}})\right)y^{\frac{1}{k-1}}dy.\]
Now substitute $\delta:=\frac{1}{k-1}$. Because this is a monotone transformation, it is sufficient to show that $U(k(\delta))$ is quasi-convex in $\delta$. When $k=1$, we set $\delta=+\infty$.
\[U=(1+\delta)\int_0^1\left(h(y)-{2\alpha}(1-y^{\delta})\right)y^{\delta} dy=(1+\delta)\int_0^1h(y)y^{\delta}dy-{2\alpha}(1+\delta)\int_0^1(1-y^{\delta})y^{\delta}dy.\]
We now compute $U'_{\delta}$. Integrating the first term by parts we get
\[(1+\delta)\int_0^1h(y)y^{\delta}dy=\int_0^1h(y)dy^{1+\delta}=h(1)-\int_0^1h'(y)y^{1+\delta}dy.\]
Thus, \[\left((1+\delta)\int_0^1h(y)y^{\delta}dy\right)'_{\delta}=\int_0^1h'(y)y^{1+\delta}\ln(1/y)dy.\]
The second integral can be computed explicitly: \[(1+\delta)\int_0^1(1-y^{\delta})y^{\delta}dy=\frac{\delta}{2\delta+1}.\]
Thus,
\[\left((1+\delta)\int_0^1(1-y^{\delta})y^{\delta}dy\right)'_{\delta}=\frac{1}{(2\delta+1)^2}=\frac{1}{4}\frac{1}{(\delta+\frac{1}{2})^2}=\frac{1}{4}\int_0^1y^{\delta-\frac{1}{2}}\ln(1/y)dy,\]
where the last equality may be verified by integration by parts.

Tacking stock,
\begin{multline*}
 U'_{\delta}=\int_0^1h'(y)y^{1+\delta}\ln(1/y)dy-\frac{{2\alpha}}{4}\int_0^1y^{\delta-\frac{1}{2}}\ln(1/y)dy\\= \left(\frac{\int_0^1h'(y)y^{1+\delta}\ln(1/y)dy}{\int_0^1y^{\delta-\frac{1}{2}}\ln(1/y)dy}-\frac{{2\alpha}}{4}\right)\int_0^1y^{\delta-\frac{1}{2}}\ln(1/y)dy.   
\end{multline*}
Now, we shall show that the expression in parentheses above is increasing in $\delta$. This will imply that $U'_{\delta}(\delta)$ is of increasing sign and hence $U(\delta)$ is quasi-convex.

Rewrite
\[\frac{\int_0^1h'(y)y^{1+\delta}\ln(1/y)dy}{\int_0^1y^{\delta-\frac{1}{2}}\ln(1/y)dy}=\frac{\int_0^1\left[h'(y)y^{\frac{3}{2}}\right]y^{\delta-\frac{1}{2}}\ln(1/y)dy}{\int_0^1y^{\delta-\frac{1}{2}}\ln(1/y)dy}\]
and consider the following family of density functions on $(0,1)$ parametrized by $\delta$:
\[f(y|\delta)=\frac{y^{\delta-\frac{1}{2}}\ln(1/y)}{\int_0^1y^{\delta-\frac{1}{2}}\ln(1/y)dy}.\]
It is not hard to show that for any $\delta_1<\delta_2$ $f(y|\delta_2)$ first-order stochastically dominates $f(y|\delta_1)$.\footnote{For this, note that $f(y|\delta_2)/f(y|\delta_1)$ is increasing in $y$ and thus apart from endpoints the two densities cross only once, with $f(y|\delta_2)$ crossing $f(y|\delta_1)$ from below.} Thus, for any increasing function $b(y)$ 
\[\mathbb{E}_{y\sim f(\cdot|\delta)}(b(y))=\frac{\int_0^1\left[b(y)\right]y^{\delta-\frac{1}{2}}\ln(1/y)dy}{\int_0^1y^{\delta-\frac{1}{2}}\ln(1/y)dy}\]
is increasing in $\delta$. However, we have shown above that the function $h'(y)y^{3/2}$ is increasing indeed.
\end{proof}

\section{Relation to Zhang (2017)}
\label{zhangzhang}
\subsection*{Informal discussion}
In this section, we compare and contrast our model with that of \cite{zhang2017auctions} and \cite{gershkov2021theory}. Since our settings differ in not one but two respects, we introduce an intermediate hypothetical setting, to facilitate the comparison, see table below.

\medskip

    \begin{tabular}{|c|c|c|c|}
    \hline
        Setting & 1 (Zhang 2017) & 2 & 3 (Ours)\\ \hline
        Agent's action is: & Non-contractible & Contractible & Contractible\\ \hline
        Benefits of action accrue to: & Agent & Agent & Principal\\
        \hline
    \end{tabular}
    
\medskip

Settings 1 and 2 are in general not equivalent. Technically, the source of non-equivalence is the different order of \emph{virtualization} and \emph{optimization over action} that happen in the course of solving the mechanism design problem. Namely, in setting 1 one first performs (agent's) optimization over action and then virtializes the obtained maximized utility while in setting 2~--- when an agent's action becomes contractible~--- one would first virtualize the agent's utility for a given action and after that optimize the principal's utility over agent's action.
However, it is not hard to show that in \emph{additively separable} environments (when the agent's benefits of action are additively separable in action and type) the order of optimization and virtualization does not matter, so the settings 1 and 2 are equivalent.

Settings 2 and 3, on the other hand, are \emph{always} equivalent. Under a contractible agent's action the optimal action and allocation schedules do not depend on whether benefits of an agent's action accrue to the agent or the principal as these benefits can be freely moved from the agent to the principal or the other way around using action-dependent transfers.

\subsection*{A formal reduction to \cite{zhang2017auctions}}
Take an instance of our problem under  \cref{separablestructure}. To convert it to an instance of the problem in \cite{zhang2017auctions} (with more general, possibly non-quadratic investment costs), we proceed in 3 steps. 

Before step 1, we rename ``quality'' $q$ to ``action'' $a$ to achieve consistency with the terminology in \cite{zhang2017auctions}. Thus, instead of quality functions  $q_i(\theta_i)$ we will be talking about action profiles $a_i(\theta_i)$.%

\medskip
\noindent
\textbf{Step 1: moving benefits of action from principal to agent}
\begin{lemm}\label{move_action}
If an agent's action is contractible, the optimal allocation and action schedules  $a_i(\theta_i)$, $z_i(\bm\theta)$ in our problem with principal's gross utility $V(a)$, production costs $C^P(a,\theta)$, and some investment costs are the same as in our problem with principal's gross utility of 0, production costs $C^P(a,\theta)-V(a)$ and the same investment costs.
\end{lemm}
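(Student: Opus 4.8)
The plan is to exhibit an explicit, payoff-preserving bijection between the mechanisms of the two problems and check that it carries equilibria to equilibria. Write $\mathcal{P}_V$ for our problem with gross utility $V(q)$ and production costs $C^P(q,\theta)$, and $\mathcal{P}_0$ for the problem with gross utility $0$ and production costs $C^P(q,\theta)-V(q)$; the investment costs $C^I(q,\theta)$ are common to both. Given any mechanism $\mathcal{M}=(M_i,z_i(\bm m,\bm q),t_i(\bm m,\bm q))$ for $\mathcal{P}_V$, I would associate to it the mechanism $\mathcal{M}'=(M_i,z_i(\bm m,\bm q),t'_i(\bm m,\bm q))$ for $\mathcal{P}_0$ with the same message sets and allocation rule and with transfers
\[t'_i(\bm m,\bm q):=t_i(\bm m,\bm q)-V(q_i)\,z_i(\bm m,\bm q).\]
This is a legitimate transfer rule precisely because quality is observable and contractible, so that both $q_i$ and $z_i(\bm m,\bm q)$ are available to the designer. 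The map $\mathcal{M}\mapsto\mathcal{M}'$ is a bijection, with inverse $t_i=t'_i+V(q_i)z_i$.

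Next I would check that the two mechanisms induce identical agent payoffs, state by state. In $\mathcal{M}'$, agent $i$'s payoff at any profile of qualities, messages and types is
\[t'_i-\bigl(C^P(q_i,\theta_i)-V(q_i)\bigr)z_i-C^I(q_i,\theta_i)=\bigl(t_i-V(q_i)z_i\bigr)-C^P(q_i,\theta_i)z_i+V(q_i)z_i-C^I(q_i,\theta_i)=t_i-C^P(q_i,\theta_i)z_i-C^I(q_i,\theta_i),\]
which is exactly agent $i$'s payoff in $\mathcal{M}$. Since the agents' payoff functions coincide pointwise as functions of $(\bm m,\bm q,\bm\theta)$, every deviation available in $\mathcal{M}$ — including every double deviation $(m'_i,q''_i)$ — yields the same payoff in $\mathcal{M}'$; hence the two mechanisms have the same set of BNE $\bm\sigma$ (in particular the same feasible action schedules $a_i(\theta_i)$, which in both problems may depend on own type only), and the constraints \eqref{IC}, \eqref{PC} hold in $\mathcal{M}'$ under $\bm\sigma$ exactly when they hold in $\mathcal{M}$ under $\bm\sigma$. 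Similarly, the principal's payoff in $\mathcal{M}'$ equals $\sum_i(0\cdot z_i-t'_i)=\sum_i(V(q_i)z_i-t_i)$, the principal's payoff in $\mathcal{M}$, for every realization of types, hence in expectation.

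Finally I would conclude: because the correspondence $\mathcal{M}\leftrightarrow\mathcal{M}'$ preserves feasibility, the incentive and participation constraints, the set of equilibria, and the principal's objective, it maps solutions of $\mathcal{P}_V$ to solutions of $\mathcal{P}_0$ and back, so the sets of optimal allocation schedules $z_i(\bm\theta)$ and optimal action schedules $a_i(\theta_i)$ coincide. I do not expect a genuine obstacle here — it is essentially a bookkeeping argument — but the one point that must be stated carefully is contractibility of the action: the shift $-V(q_i)z_i$ is admissible to the designer only because $q_i$ is observed and $z_i$ is a function of observables; were the action hidden (Setting 1 in \cref{zhangzhang}), this reallocation of the surplus $V(q)$ from principal to agent would be infeasible, which is exactly why Settings 2 and 3 are equivalent while Setting 1 is not. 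As a sanity check one may also verify the claim directly at the level of the relaxed problem \hyperlink{P2}{(P2)}: since $V$ is type-independent, replacing $C^P$ by $C^P-V$ leaves $C^P_\theta$ unchanged, so the virtual production cost changes by exactly $-V(q)$ and the virtual surplus $x(q,\theta)=V(q)-\tilde C^P(q,\theta)$ is invariant, while $\tilde C^I$ is untouched; hence the objective and constraints of \hyperlink{P2}{(P2)} are literally identical in the two problems.
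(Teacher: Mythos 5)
Your proof is correct, and it takes a genuinely different (and more ambitious) route than the paper's. The paper works entirely at the level of the relaxed problem \hyperlink{P2}{(P2)}: it observes that since $V$ is type-independent, $\partial_\theta\bigl(C^P-V\bigr)=\partial_\theta C^P$, so the virtual production cost shifts by exactly $-V(a)$, and the two objectives in (P2) become literally identical, whence the optimal $(z_i,a_i)$ coincide. That is precisely your closing ``sanity check.'' Your main argument instead constructs a transfer-shift bijection $t_i\mapsto t_i - V(q_i)z_i$ between mechanisms for the two problems and checks that it preserves agent payoffs state by state, hence preserves the entire set of BNE (including all double deviations), the IC and IR constraints, and the principal's objective — establishing the equivalence already at the level of the \emph{full} problem \hyperlink{P}{(P)} rather than only at (P2). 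This buys you a statement that is independent of the validity of the envelope relaxations, and it foregrounds the role of contractibility (the shift is only admissible because $q_i$ is observed and $z_i$ is a function of observables), which is exactly the economically substantive point distinguishing Settings 2/3 from Setting 1 in \cref{zhangzhang}. The paper's version buys brevity: given the machinery already in place, a two-line comparison of (P2)-objectives suffices for how the lemma is actually used in \cref{to_Zhang}. Both are valid; yours is the more self-contained and more informative proof.
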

\begin{proof}
Note that $\frac{\partial}{\partial\theta} (C^P(a,\theta)-V(a))= \frac{\partial}{\partial\theta}(C^P(a,\theta))$. Thus, 
the virtual production costs in the new problem are also shifted by $V(a)$, as the production costs themselves. Thus, the objective function in the new problem is
\[\mathbb{E}\sum_{i=1}^n\left(0-(\tilde{C}^P(a_i(\theta_i),\theta_i)-V(a_i(\theta_i)))z_i(\bm\theta)-\tilde{C}^I(a_i(\theta_i),\theta_i)\right),\]
while the objective function in the original problem is
\[\mathbb{E}\sum_{i=1}^n\left((V(a_i(\theta_i))-\tilde{C}^P(a_i(\theta_i),\theta_i))z_i(\bm\theta)-\tilde{C}^I(a_i(\theta_i),\theta_i)\right),\]
which is the same thing. Thus, the optimal action and allocation profiles coincide.
\end{proof}

\noindent
\textbf{Step 2: making action non-contractible}
\begin{lemm}\label{invariance}
If production costs are additively separable, i.e., $C^P(a,\theta)=l(\theta)-b(a)$, the principal's gross utility is 0 and the investment costs do not depend on agent's type, then the optimal allocation $z^*_i(\bm\theta)$ when agents' actions are contractible is the same as when agents' actions are non-contractible, as in \cite{zhang2017auctions}. The optimal action profiles  $a^*_i(\theta_i)$ when agents' actions are contractible are the same as equilibrium action profiles under $z^*_i(\bm\theta)$ when agents' actions are non-contractible.
\end{lemm}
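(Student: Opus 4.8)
The plan is to show that, under the stated hypotheses, the contractible relaxed problem \hyperlink{P2}{(P2)} (taken after Step~1, so with zero gross utility) and the corresponding problem in which the action is \emph{not} contractible --- so the allocation may depend on reported types but not on realized actions --- reduce to maximizing one and the same functional over one and the same set of feasible allocation rules, and that in both the optimal action schedule is recovered from the allocation in the same way. Throughout, write $Z_i(\theta_i):=\mathbb{E}_{\theta_{-i}}z_i(\bm\theta)$ for the interim win probability induced by $\bm z(\bm\theta)$, put $\tilde l(\theta):=l(\theta)+l'(\theta)F(\theta)/f(\theta)$, and set $B(Z):=\max_a\{b(a)Z-C^I(a)\}$, with $a^*(Z)$ the corresponding maximizer, which is interior by the Inada and curvature conditions in \cref{regularity} and increasing in $Z$. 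Crucially, $B$ and $a^*$ do not depend on $\theta$, precisely because $C^I$ does not.

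For the contractible problem, after Step~1 the objective of \hyperlink{P2}{(P2)} is $\mathbb{E}\sum_i(-\tilde C^P(a_i,\theta_i)z_i-\tilde C^I(a_i,\theta_i))$. Additive separability makes $C^P_\theta=l'(\theta)$ independent of $a$, hence $\tilde C^P(a,\theta)=\tilde l(\theta)-b(a)$; and $C^I_\theta\equiv 0$ makes $\tilde C^I(a,\theta)=C^I(a)$. So the objective equals $\sum_i\mathbb{E}[-\tilde l(\theta_i)z_i(\bm\theta)]+\sum_i\mathbb{E}_{\theta_i}[b(a_i(\theta_i))Z_i(\theta_i)-C^I(a_i(\theta_i))]$, in which the action enters only through the second sum; it is therefore optimal to set $a_i(\theta_i)=a^*(Z_i(\theta_i))$ pointwise, and after substituting the contractible problem becomes
\[\max_{\bm z(\bm\theta)}\ \sum_i\mathbb{E}[-\tilde l(\theta_i)z_i(\bm\theta)+B(Z_i(\theta_i))]\qquad\text{subject to } z_i\ge 0,\ \textstyle\sum_i z_i=1.\]

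For the non-contractible problem, since the allocation cannot react to actions, for any report $\hat\theta_i$ the agent's best action is $a^*(Z_i(\hat\theta_i))$ --- the same first-order condition as above --- so the joint choice of action and report collapses to the choice of report, and the indirect utility is $u_i(\theta_i)=\max_{\hat\theta_i}\{T_i(\hat\theta_i)-l(\theta_i)Z_i(\hat\theta_i)+B(Z_i(\hat\theta_i))\}$ for interim transfers $T_i$. The envelope theorem then gives $u_i'(\theta_i)=-l'(\theta_i)Z_i(\theta_i)$, because the true type $\theta_i$ enters this objective only through the single term $-l(\theta_i)Z_i(\hat\theta_i)$. Integrating with $u_i(1)=0$ (participation binds at the worst type, since $u_i$ is decreasing), solving for $T_i$, and taking expectations yields the principal's revenue $\sum_i\mathbb{E}[-\tilde l(\theta_i)z_i(\bm\theta)+B(Z_i(\theta_i))]$, which is the contractible objective over the same constraint set. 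Hence the two relaxed problems coincide; their common solution $z^*_i(\bm\theta)$ is optimal in both, and in both the optimal (resp.\ equilibrium) action is $a^*_i(\theta_i)=a^*(Z^*_i(\theta_i))$ with the \emph{same} $Z^*_i$. This non-contractible problem is of the type studied in \cite{zhang2017auctions} (its precise parametrization is supplied in the next step), which gives the statement.

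The only real obstacle is the envelope step: one must verify that allowing the agent to re-optimize the action as the type varies adds nothing beyond the direct effect $-l'(\theta_i)Z_i(\theta_i)$, and that this is exactly what additive separability of $C^P$ together with the absence of $\theta$ from $C^I$ deliver --- the formal content of the informal claim that virtualization and optimization over the action commute. A secondary point is monotonicity of $Z_i^*$, needed to pass from the relaxed program to a genuinely incentive compatible mechanism and to make the equilibrium action well defined in the non-contractible problem; this is inherited from the common solution exactly as in \cite{zhang2017auctions}, and I would cite it rather than re-derive it.
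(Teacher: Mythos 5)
Your proof is correct and follows essentially the same route as the paper: the decisive observation in both is that additive separability of $C^P$ together with $\theta$-independence of $C^I$ makes the principal's pointwise-optimal action $a^P(\theta_i)[\bm z]$ coincide with the agent's best response $a^A(\theta_i)[\bm z]$ for every allocation, so the reduced objective over $\bm z$ is the same in the contractible and non-contractible problems. You spell out the envelope-theorem reduction of the non-contractible problem to the principal's virtual-surplus objective, which the paper states more compactly, but the substance is identical.
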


\begin{proof}
With $C^P(a,\theta)=l(\theta)-b(a)$, the virtual production costs are given by $\tilde{C}^P(a,\theta)=l(\theta)-b(a)+l'(\theta)\frac{F(\theta)}{f(\theta)}$. As $C^I_{\theta}=0$, the virtual investment costs equal investment costs themselves, i.e., $\tilde{C}^I(a,\theta)=C^I(a,\theta)$. We can write just $C^I(a)$ for the both virtual and non-virtual investment costs. Thus, when agents' actions are contractible, the principal's objective function is
\begin{equation}\label{P_obj1}
\mathbb{E}\sum_{i=1}^n\left(\left(0-l(\theta_i)-l'(\theta_i)\frac{F(\theta_i)}{f(\theta_i)}+b(a_i(\theta_i))\right)z_i(\bm\theta)-C^I(a_i(\theta_i))\right).  
\end{equation}
The principal maximizes this with respect to $z_i(\bm\theta)$ and $a_i(\theta_i)$. Denote by $\bar{z}_i(\theta_i)$ the interim allocation of agent $i$, that is, $\bar{z}_i(\theta_i)\equiv \mathbb{E}z_i(\bm\theta)$. Suppose the principal first chooses $z_i(\bm\theta)$ and then chooses $a_i(\theta_i)$ given $z_i(\bm\theta)$. Denote by $a_i^P(\theta_i)[z_i(\bm\theta)]$ the  agent's $i$ action which is optimal for the principal given $z_i(\bm\theta)$. From \eqref{P_obj1}, it is clear that given $z_i(\bm\theta)$, the principal will choose
\[a_i^P(\theta_i)[z_i(\bm\theta)]=\arg\max_a\left(b(a)\bar{z}_i(\theta_i)-C^I(a)\right).\]
Now suppose the agents' actions are not contractible. In this case, the principal only chooses $z_i(\bm\theta)$ in \eqref{P_obj1} while $a_i(\theta_i)$ is determined from agent's optimization given $z_i(\bm\theta)$. Denote by $\bar{t}_i(\theta_i)$ the interim payment to the agent in a direct mechanism. Note that it is not conditioned on $a_i$ since the action is not contractible. Given $\bar{z}_i(\theta_i)$ and $\bar{t}_i(\theta_i)$, the agent $i$ of type $\theta_i$ reports the type truthfully and chooses $a_i$ to maximize
\[
\Pi_i= \bar{t}_i(\theta_i) -C^P(a_i,\theta_i)\bar{z}_i(\theta_i)-C^I(a_i)=
\bar{t}_i(\theta_i)-(l(\theta_i)-b(a_i))\bar{z}_i(\theta_i)-C^I(a_i).\]
The agent will clearly choose 
\[a_i^A(\theta_i)[z_i(\bm\theta)]=\arg\max_a\left(b(a)\bar{z}_i(\theta_i)-C^I(a)\right).\]
Thus, $a_i^A(\theta_i)[z_i(\bm\theta)]=a_i^P(\theta_i)[z_i(\bm\theta)]$ for any allocation $z_i(\bm\theta)$. This implies that the principal's objective as a function of $z_i(\bm\theta)$ \emph{only} is the same regardless of whether the agents' actions are contractible or not. Thus, the optimal allocation $z^*_i(\bm\theta)$ is the same regardless of whether the agents' actions are contractible or not. As $a_i^A(\theta_i)[z_i(\bm\theta)]=a_i^P(\theta_i)[z_i(\bm\theta)]$ for any allocation $z_i(\bm\theta)$, the equality holds for the optimal $z^*_i(\bm\theta)$ as well, establishing the second claim of the lemma.
\end{proof}
Note that the above argument does \emph{not} work when the production costs are not additively separable and/or the investment costs depend on type. In this case, the principal's preferred action given $z_i(\bm\theta)$ satisfies
\[a_i^P(\theta_i)[z_i(\bm\theta)]=\arg\max_a\left(-\tilde{C}^P(a,\theta_i)\bar{z}_i(\theta_i)-\tilde{C}^I(a,\theta_i)\right).\]
while under non-contractible actions the agent would choose
\[a_i^A(\theta_i)[z_i(\bm\theta)]=\arg\max_a\left(-C^P(a,\theta_i)\bar{z}_i(\theta_i)-C^I(a,\theta_i)\right),\]
which is in general different from $a_i^P(\theta_i)[z_i(\bm\theta)]$.

\medskip
\noindent
\textbf{Step 3: converting the optimal-procurement problem to an optimal-selling-procedure problem and wrapping up.}
\medskip

Recall \cref{separablestructure}: the principal's utility is $V(a)$, production costs are $C^P(a,\theta)=\alpha\theta$, investment costs are $C^I(a,\theta)=g(a)$ and the distribution of types is $F(\theta)$. To formally relate our model to  \cite{zhang2017auctions}, we need to translate our problem into an optimal-selling problem and also change the definition of type since in \cite{zhang2017auctions}, an agent's utility increases in type while in our problem it decreases in type. To avoid confusion, we denote the type from \cite{zhang2017auctions} by $\mu$ and the distribution of $\mu$ by $F^{\mu}(\mu)$.

\begin{lemm}\label{to_Zhang}
Let $z^*_i(\bm\theta)$ and $a^*_i(\theta_i)$ be the optimal allocation  and action profiles in our problem under  \cref{separablestructure}. Let $\hat{z}_i(\bm\mu)$ be the optimal allocation and $\hat{a}_i(\mu_i)$ be equilibrium action profiles in the problem in \cite{zhang2017auctions} in which an agent's utility is $V(a)+\alpha\mu-\alpha$, investment costs are $g(a)$ and the distribution of types is $F^{\mu}(\mu)=1-F(1-\mu)$.  Then,
$z^*_i(\bm\theta)=\hat{z}_i(\bm{1-\theta})$ and
$a^*_i(\theta_i)=\hat{a}_i(1-\theta_i)$.
\end{lemm}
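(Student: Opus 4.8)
The plan is to assemble the claim from the two reduction lemmas already established in Steps 1 and 2 of this appendix, followed by a purely notational relabeling of the type variable in Step 3; I keep the word ``action'' for what is called ``quality'' in the body. First I would apply \cref{move_action} to our instance: with principal's gross utility $V(a)$ and production costs $C^P(a,\theta)=\alpha\theta$, the optimal allocation and action profiles $z^*_i(\bm\theta),a^*_i(\theta_i)$ coincide with those of the auxiliary problem having principal's gross utility $0$, production costs $\alpha\theta-V(a)$, and unchanged investment costs $g(a)$. These production costs are additively separable in the sense of \cref{invariance}, namely $\alpha\theta-V(a)=l(\theta)-b(a)$ with $l(\theta)=\alpha\theta$, $b(a)=V(a)$, and the investment costs do not depend on type; hence \cref{invariance} applies and shows that $z^*_i(\bm\theta)$ equals the optimal allocation in the variant where the action is \emph{non-contractible}, while $a^*_i(\theta_i)$ equals the equilibrium action profile under that allocation in the non-contractible variant. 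That variant is exactly an optimal-mechanism problem of the kind in \cite{zhang2017auctions}: a winner enjoys gross value $b(a)-l(\theta)=V(a)-\alpha\theta$ from his action, sinks a type-independent convex cost $g(a)$ before the mechanism, and the principal derives no direct payoff from the action.

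It then remains only to reconcile notational conventions with \cite{zhang2017auctions}, whose agents' values increase in type while ours decrease. Setting $\mu_i:=1-\theta_i$, the induced distribution is $F^{\mu}(\mu)=\Pr(\mu_i\le\mu)=\Pr(\theta_i\ge 1-\mu)=1-F(1-\mu)$, which inherits a strictly positive density from $F$, and the winner's gross value becomes
\[
b(a)-l(\theta)=V(a)-\alpha\theta=V(a)+\alpha\mu-\alpha,
\]
precisely the agent's utility in the \cite{zhang2017auctions} problem named in the statement; the additive constant $-\alpha$ is immaterial since it shifts every type's payoff by the same amount and can be absorbed into transfers without affecting incentive or participation constraints. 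Thus the non-contractible problem reached above is literally that \cite{zhang2017auctions} problem, whose optimal allocation and equilibrium action profiles are $\hat z_i(\bm\mu)$ and $\hat a_i(\mu_i)$. Re-expressing in terms of $\theta_i=1-\mu_i$ and chaining with the two reductions gives $z^*_i(\bm\theta)=\hat z_i(\bm{1-\theta})$ and $a^*_i(\theta_i)=\hat a_i(1-\theta_i)$.

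All the substantive content lives in \cref{move_action} and \cref{invariance}, which are already proved; the only place calling for care is the last step, where one must verify that flipping $\theta\mapsto 1-\theta$ maps our decreasing-in-type environment consistently onto Zhang's increasing-in-type environment — in particular that the worst type $\theta=1$ is sent to the lowest type $\mu=0$, the one normalized to zero rent in both formulations, and that the constant $-\alpha$ genuinely drops out. I expect this bookkeeping, rather than any real mathematical obstacle, to be the crux.
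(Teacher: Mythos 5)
Your proof is correct and follows essentially the same three-step route as the paper's own argument: invoke \cref{move_action} to shift the gross utility $V(a)$ into the production costs, check the hypotheses of \cref{invariance} (additive separability $\alpha\theta-V(a)=l(\theta)-b(a)$, zero principal utility, type-independent investment costs) to pass to the non-contractible problem, and then relabel $\mu=1-\theta$ to match Zhang's increasing-in-type convention. Your remarks on the harmlessness of the additive constant $-\alpha$ and on the positivity of the induced density are small but sound additions the paper leaves implicit.
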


\begin{proof}
By \cref{move_action}, the optimal allocation and action profiles in our problem are same as in the problem with principal's gross utility of 0, production costs of $\alpha\theta-V(a)$, same investment costs of $g(a)$ and same distribution of types $F(\theta)$. 
Since the cost function $C^P(a,\theta)=\alpha\theta-V(a)$ is additively separable and other conditions of \cref{invariance} hold, by \cref{invariance} the optimal profiles are the same as in the procurement version of the problem in \cite{zhang2017auctions} with these costs functions and distribution of types $F(\theta)$.

To translate this to an original optimal selling problem from \cite{zhang2017auctions}, we redefine type $\mu:=1-\theta$ to make the agent's utility depend positively on type. The cdf of $\mu$ is $F^{\mu}(\mu)=1-F(1-\mu)$. The production costs become $\alpha(1-\mu)-V(a)$. Finally, an agent's utility from consuming a good is naturally the negative production costs.
\end{proof}

\end{document}